\documentclass[10pt]{article}

\usepackage{arxiv}

\usepackage[utf8]{inputenc}
\usepackage[T1]{fontenc}
\usepackage{amsmath,amsthm,amssymb}
\usepackage{mathtools}
\usepackage{newtxtext,newtxmath}
\usepackage{enumitem}
\usepackage{booktabs}
\usepackage{longtable}
\usepackage{tabularx}
\usepackage{graphicx}
\usepackage[section]{placeins}
\usepackage[hidelinks]{hyperref}
\usepackage{cleveref}
\graphicspath{{figures/}}
\hypersetup{
  pdftitle={Aggregate Disambiguation Systems},
  pdfauthor={José María Lago, Albert Castellana, Edgars Nemše},
  pdfsubject={Finite-panel inference for aggregate disambiguation},
  pdfkeywords={aggregate disambiguation, finite-sample certification, finite-population sampling, functional central limit theorem, Brownian bridge, non-adaptive corruption, LLM evaluation}
}

\newtheorem{theorem}{Theorem}[section]
\newtheorem{proposition}[theorem]{Proposition}
\newtheorem{corollary}[theorem]{Corollary}
\newtheorem{lemma}[theorem]{Lemma}
\theoremstyle{definition}
\newtheorem{definition}[theorem]{Definition}
\newtheorem{remark}[theorem]{Remark}

\newcommand{\E}{\mathbb{E}}
\newcommand{\Var}{\mathrm{Var}}
\newcommand{\Cov}{\mathrm{Cov}}
\newcommand{\Prob}{\mathbb{P}}
\newcommand{\ind}{\mathbb{1}}
\newcommand{\dTV}{d_{\mathrm{TV}}}
\newcommand{\Naturals}{\mathbb{N}}
\newcommand{\calT}{\mathcal{T}}
\newcommand{\calZ}{\mathcal{Z}}
\newcommand{\census}{\mathcal{P}}

\renewcommand{\shorttitle}{Aggregate Disambiguation Systems}
\renewcommand{\headeright}{Preprint}

\title{Aggregate Disambiguation Systems}
\author{José María Lago \quad Albert Castellana \quad Edgars Nemše\\[0.35em]
\small GenLayer Labs Research\\[-0.05em]
\small\texttt{jm@genlayerlabs.com} \quad
\texttt{albert@genlayerlabs.com} \quad
\texttt{edgars@genlayerlabs.com}}
\date{}

\begin{document}
\maketitle

\begin{abstract}
Natural-language tasks can elicit different verdicts from protocol-following evaluators that receive the same declared information.  We study aggregate disambiguation systems (ADSs).  Given a task and a candidate solution, each evaluator casts a binary vote on whether the solution should be accepted, and the system aggregates the votes of a finite panel.  The target is protocol reproducibility relative to an explicitly declared evaluator reference, not semantic truth.  We separate fixed finite censuses, probabilistic evaluator populations, and growing-census limits, since their endpoint laws and guarantees are not interchangeable.  In the population setting, we use finite samples to estimate how often a finite panel reaches the same decision as the declared evaluator population.  We provide a lower confidence bound on the fraction of candidate solutions for which the disagreement probability is at most a chosen tolerance.  The calculation accounts separately for sampling candidate solutions and sampling evaluators.  The construction permits arbitrary dependence among columns induced by shared evaluator rows and uses exact binomial intervals at the evaluator layer and an exact one-sided binomial inversion at the generator layer.  Simulations check the implementation against known population coverages and expose power limitations.  We illustrate the method with a pilot study in which 40 LLM evaluation runs each produced one binary vote for each of 50 manually selected cases.  The results apply only to these fixed cases.  Because the cases were not randomly sampled, the pilot does not support conclusions about a broader population of tasks.  Construction-underdetermined items illustrate why reproducibility must not be read as truth; no independent semantic adjudication is attempted.  We also give exact finite-census bounds and a finite-sample workload-profile certificate for explicitly non-adaptive targeted contamination.
\end{abstract}
\keywords{aggregate disambiguation \and finite-panel inference \and finite-sample certification \and finite-population sampling \and large language models}

\section{Introduction}\label{sec:introduction}

Many distributed decisions begin with a predicate that honest participants can
evaluate deterministically.  Natural-language tasks break that premise.  Two
protocol-following evaluators may receive the same problem, candidate answer,
instructions, and tools yet return different verdicts.  This occurs with human
reviewers, stochastic software, and large language models (LLMs).  The immediate
statistical question is therefore not whether one evaluator is infallible, but
whether a finite panel can reliably reproduce the decision of a declared
evaluator population.

We call a mechanism that generates a candidate resolution, evaluates it through
multiple independent episodes, and aggregates their verdicts an \emph{aggregate
disambiguation system} (ADS).  When a randomly sampled panel is itself the
decision mechanism and neither an external truth label nor a complete
population vote is available at run time, reproducibility is an operational
stability target: it asks whether the declared rule determines the action or
whether the particular panel draw does.  It remains separate from semantic
correctness.

The population in that definition is part of the estimand.  A finite catalogue
of frozen verdicts, a sequence of growing catalogues, and a probability
distribution from which fresh evaluator episodes are sampled answer different
questions and induce different sampling laws.  Treating them as interchangeable
is the source of several otherwise plausible but incorrect panel guarantees.

The paper studies three connected problems.  First, it identifies the exact
finite-panel error under each sampling regime.  Second, it asks how a finite
generator--evaluator experiment can certify the fraction of future resolutions
that a panel will classify reproducibly.  Third, it translates adversarial
participation into a sensitivity surface indexed by panel size and task clarity,
rather than a single context-free ``secure percentage.''

\subsection{Relation to prior work}

Several neighboring literatures overlap with one layer of ADS but use different
reference objects or observe different losses.  Consensus targets a common
state; collective-decision theories target accuracy under competence
assumptions or logical consistency across propositions; judge-panel and
measurement models target human labels, latent quality, or reliability; and
acceptance sampling, tolerance methods, and finite-sample risk control provide
related statistical tools.  The works reviewed below do not directly certify,
under a declared generator law, the mass of frozen resolutions for which a
fresh binary panel reproduces the decision of a predeclared evaluator
population.  We use that estimand as the common comparison axis below.

\paragraph{Consensus and non-deterministic replication.}
Byzantine agreement asks whether non-faulty processors can agree despite
arbitrary faults, while state-machine replication orders requests and executes
them from a common state~\cite{pease1980reaching,castro1999practical}.  For
non-deterministic operations, Cachin et al. filter or validate divergent
executions or provide shared randomness; Huang et al. separate agreement on
transaction order from agreement on the resulting state
~\cite{cachin2016nondeterminism,huang2024byzantine}.  These protocols engineer a
common accepted output or state despite faults or execution-level
non-determinism.

\paragraph{Collective judgment.}
The Condorcet jury theorem assumes an objectively correct alternative and
independent jurors who select it with probability greater than one half; under
those assumptions, majority accuracy increases with jury size
~\cite{condorcet1785essai}.  Peleg and Zamir characterize when analogous
conclusions survive under a general dependent voting law
~\cite{peleg2012extending}.  Judgment aggregation instead studies several
logically connected propositions: proposition-wise aggregation can produce an
inconsistent collective set, and quota thresholds determine when consistency
and completeness can coexist~\cite{list2002aggregating,dietrich2007judgment}.
Together these traditions distinguish truth-based accuracy, dependence, and
logical coherence as separate collective-decision criteria.

\paragraph{Aggregation during answer generation.}
Self-consistency samples multiple reasoning paths and aggregates their final
answers, while multiagent debate allows model instances to inspect and revise
one another's answers~\cite{wang2023selfconsistency,du2024multiagentdebate}.
In both, repeated calls help produce or select the answer rather than evaluate a
candidate frozen in advance.

\paragraph{LLM judges and judge panels.}
Zheng et al. compare LLM judgments with human preferences and document several
systematic biases; Verga et al. replace one large judge with a heterogeneous
panel; and Shi et al. study position bias and stability across repeated or
position-swapped evaluations
~\cite{zheng2023llmjudge,verga2024juries,shi2025judging}.  Qian et al. infer
latent rankings and judge effects without human labels, whereas Li shows in a
supervised setting that calibrating the full panel can outperform selecting
only its most accurate judges~\cite{qian2026trust,li2026calibrate}.  SCOPE uses
labelled calibration data to abstain while controlling error among
non-abstained pairwise judgments under exchangeability
~\cite{badshah2026scope}.  Finite-Calibration Panel Selection uses separate
selection, calibration, and validation data to choose a judge path, deployed
panel size, and aggregator family~\cite{zhu2026finitecalibration}.  Kohli
estimates that nine judges in one studied panel provide about two independent
votes' worth of information~\cite{kohli2026nine}.

\paragraph{Robust judge panels.}
Under Huber contamination, RoPoLL shows that mean aggregation of continuous or
vector-valued judge scores can be arbitrarily biased and replaces it with the
geometric median~\cite{acharya2026ropoll}.  It establishes a breakdown point of
one half, derives finite-sample upper and information-theoretic lower bounds
that agree on the parametric sampling rate but differ by a factor $\sqrt d$ in
the contamination floor, and extends the analysis to equicorrelated judge
outputs.  This is the closest robust-panel formulation, but
its target is latent-score estimation rather than reproduction under a fixed
binary rule.  Its breakdown and minimax results therefore do not transfer to
the margin-dependent target here, for which this paper makes no minimax optimality
claim.

\paragraph{Measurement and latent-answer models.}
Generalizability theory models persons, items, raters, and occasions as
measurement facets and estimates variance components to assess dependability
under different designs~\cite{cronbach1972dependability}.  Cultural consensus
theory estimates informant competence and a latent shared answer key, while
Dawid--Skene models latent response classes and observer-specific error rates
~\cite{romney1986culture,dawid1979maximum}.  These traditions quantify
reliability or infer latent truth and rater effects.

\paragraph{Finite populations, tolerance, and risk control.}
Risk-limiting election audits sample auditable records while controlling the
chance of confirming an incorrect reported outcome~\cite{stark2008conservative}.
Industrial acceptance-sampling plans use defect counts from a sample to accept
or reject a finite lot under a specified inspection design
~\cite{dodge1941sampling}.  Fleiss' kappa summarizes nominal agreement among
multiple raters relative to chance, and Krippendorff's alpha similarly compares
observed and chance-expected disagreement
~\cite{fleiss1971measuring,krippendorff2008systematic}.  Nonparametric tolerance
limits separate required population content from confidence in that content
~\cite{wilks1941tolerance,krishnamoorthy2009tolerance}.  Learn-then-Test uses
calibration data and multiple testing to select predictive-system parameters
with finite-sample risk control~\cite{angelopoulos2025learn}.

The ADS certificate combines these ingredients differently.  It derives exact
finite-census or superpopulation panel errors, certifies the otherwise
unobserved good-set membership from repeated evaluator rows, and then lower
bounds its mass under the generator law while selecting from a predeclared
panel-size grid.  Shared evaluator rows may make resolution columns dependent;
the certificate controls this either familywise or through an explicit
mass-of-failures allowance, without assuming cross-column independence.

\subsection{Contributions and organization}

The main contributions are:

\begin{enumerate}[leftmargin=*,itemsep=2pt,topsep=3pt]
  \item an operational estimand for panel reproducibility relative to a
  declared evaluator reference, requiring no semantic truth label, together
  with exact finite-census and superpopulation errors and explicit conditions
  and error envelopes for growing-census transfer;
  \item a nested finite-sample certificate in which repeated evaluator rows
  provide sound one-sided evidence for otherwise unobserved membership in the
  reproducible set and exact outer binomial inversion lower-bounds its generator mass,
  with familywise and mass-controlled variants that allow shared-row dependence
  and simultaneous selection from a predeclared panel-size grid; and
  \item exact worst-case results for non-adaptive finite-census corruption,
  together with a finite-sample certificate for the workload-level
  targeted-contamination sensitivity profile.
\end{enumerate}

Synthetic experiments check finite-sample validity and power, while the hash-pinned
real-data pilot is a worked implementation example under explicitly limited
reference objects; neither is treated as a separate methodological
contribution.

Section~\ref{sec:methodology} defines the estimands and certificate.
Section~\ref{sec:synthetic-results} reports synthetic results and adversarial
sensitivity.  Section~\ref{sec:applied-study} presents the LLM study, and
Section~\ref{sec:discussion} discusses the implications.  Complete definitions,
proofs, functional limits, and reproducibility details are collected in the
appendices and supplementary material.

\section{Methodology}\label{sec:methodology}

\subsection{System and estimand}

Fix an encoded problem $X$ and a concrete candidate resolution $T$.  An
evaluator episode $Z$ includes the evaluator configuration, prompt, tools, and
private randomness.  It produces component verdicts, after which a predeclared
global rule $G$ maps the component vector to one binary vote
\[
Y(Z;X,T)\in\{0,1\}.
\]
The order matters: the system first computes one global vote per evaluator and
then aggregates those votes.  Aggregating component-wise majorities and applying
$G$ afterwards is generally a different mechanism.  Appendix~\ref{sec:model}
gives the measurable construction.

Let $\nu_X$ be the declared probability law for evaluator episodes on problem
$X$.  For a frozen resolution $T$, write
\[
\mu_X(T)=\mathbb E_{Z\sim\nu_X}[Y(Z;X,T)].
\]
Given an inclusive threshold $\tau$, the population decision is one when
$\mu_X(T)\geq\tau$ and zero otherwise.  The \emph{clarity}
\[
\gamma_X(T)=|\mu_X(T)-\tau|
\]
is the distance from the population mean to the decision boundary.  It is an
operational margin, not a probability that the resolution is semantically true.

\subsection{Three inference regimes}

The reference object determines the sampling law.  Table~\ref{tab:regimes}
summarizes the three regimes used throughout the paper.

\begin{table}[htbp]
\centering
\small
\begin{tabularx}{\textwidth}{@{}p{0.23\textwidth}X p{0.25\textwidth}@{}}
\toprule
Reference object & Panel design and target & Exact endpoint law \\
\midrule
Frozen finite census & Uniform sampling without replacement; reproduce the
fixed census decision & Hypergeometric \\
Probabilistic population & Independent evaluator episodes drawn from a declared
law; reproduce its threshold decision & Binomial \\
Growing finite censuses & A deterministic approximation to a limiting
population; requires a stated convergence assumption or rate & Hypergeometric at
each finite stage \\
\bottomrule
\end{tabularx}
\caption{The three statistical regimes.  Similar numerical values do not make
their probability statements interchangeable.}
\label{tab:regimes}
\end{table}

For a census of $M$ frozen binary votes with $C$ positive entries, a uniform
panel of size $K$ has positive count
\[
H_{M,K}\sim\operatorname{Hypergeometric}(M,C,K).
\]
The exact probability that its threshold decision differs from the census
decision is therefore a hypergeometric tail.  No independence assumption is
made about the frozen votes.  Under the population law $\nu_X$, by contrast, a
fresh $K$-episode panel has count
\[
B_{X,K}(T)\sim\operatorname{Binomial}(K,\mu_X(T)),
\]
and its error relative to the population decision is the corresponding binomial
tail.  Full formulas and finite-population concentration bounds appear in
Appendix~\ref{sec:finite-population}.

When a deterministic census is asserted to approximate an ideal evaluator
population, that assertion needs an approximation envelope.  Under the margin
and variance conditions of Proposition~\ref{prop:finite-to-ideal-rate}, the
finite-census and ideal fixed-$K$ error probabilities differ by at most
$(K-1)/(M-1)+K b_{X,M}(T)$; without such an envelope, only the census-relative
statement is identified.

A uniform random ordering couples all panel sizes in a nested escalation.  The
centered superpopulation path converges to Brownian motion, whereas the path
through a fixed census is pinned at its known endpoint and converges to a
Brownian bridge.  These classical limits yield a closed late-reversal
probability under local clarity; the statements and proofs are in
Supplement~\ref{supp:functional-limits}.  Exact tails remain the basis of every
finite-sample certificate reported below.

\subsection{Resolution-level and workload-level error}

Let a generator configuration $u$ be drawn from a declared law $\pi$, and let
$T_u$ be its frozen resolution.  For panel size $K$, define $e_{X,K}(T_u)$ as the
exact binomial probability that a fresh evaluator panel disagrees with the
population decision for that resolution.  For a pointwise error tolerance
$\delta$, the population coverage is
\begin{equation}\label{eq:main-coverage}
R_{X;K,\delta}
=
\Pr_{u\sim\pi}\{e_{X,K}(T_u)\leq\delta\}.
\end{equation}
We call the problem resolvable at panel size $K$ if this coverage is at least
$1-\beta$, where $\beta$ is the permitted generator mass of unresolved
resolutions.  The two tolerances have different meanings: $\delta$ controls
panel error within a resolved unit, while $\beta$ controls how often a generated
unit may fail that pointwise requirement.

\subsection{Finite-matrix certificate}

The calibration experiment draws $A$ generator configurations and $M$ evaluator
rows independently from their declared laws.  Each row evaluates every frozen
resolution, producing an $M\times A$ binary matrix.  A row is the complete
random element for that replication: it may share an evaluator configuration,
evidence state, or coupled cell randomness across all columns.  Thus dependence
within a row, and hence across columns, may be arbitrary.  Rows themselves must
be independent and identically distributed, and for each fixed column its count
must have the declared binomial marginal.  Persistent random state shared across
rows must instead be conditioned on or modeled explicitly; it is outside this
theorem.

For each column $a$, an exact two-sided binomial interval $I_a^{\mathrm{mass}}$ estimates its unknown
acceptance mean.  The column certifies for $(K,\delta)$ only when the complete
interval lies on one side of the decision threshold and the worst-case binomial
tail over that interval is at most $\delta$.  Let
$\widehat R^{\mathrm{mass}}_{K,\delta}$ be the observed fraction of certified
columns and $S_K^{\mathrm{mass}}=A\widehat R^{\mathrm{mass}}_{K,\delta}$ their count.  Fix a
finite candidate grid $\mathcal K$, evaluator and generator failure budgets
$\eta_E$ and $\eta_G$, and an allowed mass $\xi_E$ of evaluator interval
failures.  Define
\begin{equation}\label{eq:main-lower-bound}
\underline R^{\mathrm{mass}}_{K,\delta}
=
\max\!\left\{0,
\ell_A\!\left(S_K^{\mathrm{mass}};\frac{\eta_G}{|\mathcal K|}\right)
-\xi_E\right\},
\end{equation}
where $\ell_A(s;\alpha)$ is the one-sided Clopper--Pearson lower limit after
$s$ successes in $A$ binomial trials at failure level $\alpha$.
Each evaluator interval is constructed at pointwise failure level
$\eta_E\xi_E$.  This level is independent of the number of generator columns.

\begin{theorem}[Mass-controlled operational certificate]
\label{thm:mass-operational-certificate}
Under the declared sampling design, with probability at least
$1-\eta_E-\eta_G$,
\[
R_{X;K,\delta}\geq\underline R^{\mathrm{mass}}_{K,\delta}
\qquad\text{simultaneously for every }K\in\mathcal K.
\]
Consequently, a panel size selected from the same matrix is certified whenever
its lower bound is at least $1-\beta$.
\end{theorem}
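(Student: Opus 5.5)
The argument splits the failure probability into an evaluator layer and a generator layer, handles each with a counting or exact-binomial argument, and then takes a union bound. Write $p_a=\mu_X(T_{u_a})$ for the unknown acceptance mean of column $a$. Call column $a$ \emph{bad} if $p_a\notin I_a^{\mathrm{mass}}$, and let $F=A^{-1}\sum_a \ind\{a \text{ bad}\}$ be the fraction of bad columns. The key deterministic observation is the following. If column $a$ is good and it certifies for $(K,\delta)$, then $p_a$ lies on the certified side of $\tau$, and $e_{X,K}(T_{u_a})$ is at most the worst-case tail over the interval, which is at most $\delta$. So every good certified column belongs to the good set $\{u:e_{X,K}(T_u)\le\delta\}$. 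This holds for every $K$ at once, because the intervals do not depend on $K$. Hence
\[
S_K^{\mathrm{mass}}\le \#\{a: e_{X,K}(T_{u_a})\le\delta\}+AF .
\]

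\textbf{Evaluator layer.} Condition on the generator draws $u_1,\dots,u_A$. For each fixed column, the rows are i.i.d., so that column's count is $\operatorname{Binomial}(M,p_a)$. The exact Clopper--Pearson interval at level $\eta_E\xi_E$ therefore gives $\Pr(a\text{ bad}\mid u)\le\eta_E\xi_E$. Linearity of expectation needs no independence across columns, so $\E[F\mid u]\le\eta_E\xi_E$. Markov's inequality then gives $\Pr(F>\xi_E)\le\eta_E$. This is exactly where shared-row dependence is harmless, and it is why the level does not depend on $A$.

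\textbf{Generator layer.} Let $G_K=\#\{a:e_{X,K}(T_{u_a})\le\delta\}$. The $u_a$ are i.i.d.\ from $\pi$, and $e_{X,K}$ is a deterministic function of $u$ (it is defined through $\nu_X$, not through the matrix). So $G_K\sim\operatorname{Binomial}(A,R_{X;K,\delta})$. Exact one-sided Clopper--Pearson validity gives $\Pr\{\ell_A(G_K;\eta_G/|\mathcal K|)>R_{X;K,\delta}\}\le\eta_G/|\mathcal K|$, and a union bound over $\mathcal K$ costs $\eta_G$.

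\textbf{Combining the layers.} On the intersection of the two good events, which has probability at least $1-\eta_E-\eta_G$, every $K\in\mathcal K$ satisfies $S_K^{\mathrm{mass}}\le G_K+A\xi_E$. What we want is
\[
\ell_A(S_K;\alpha)-\xi_E\le \ell_A(G_K;\alpha)\le R_{X;K,\delta}.
\]
The main obstacle is the first inequality, $\ell_A(s+A\xi;\alpha)-\xi\le\ell_A(s;\alpha)$, a Lipschitz-type property of the Clopper--Pearson lower limit. I would try to prove it from the beta-quantile representation $\ell_A(s;\alpha)=\mathrm{Beta}^{-1}(\alpha;s,A-s+1)$ together with a coupling of order statistics of uniforms. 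Specifically, $\ell_A(s)$ is the $\alpha$-quantile of $U_{(s)}$, and the spacing $U_{(s+j)}-U_{(s)}$ is stochastically dominated in a way that bounds the shift in the quantile by $j/A$.

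If that bound fails in general, the fallback is to avoid the Lipschitz step altogether. Apply monotonicity of $\ell_A$ in $s$ directly to a shifted count, or handle the $\xi_E$ allowance inside the binomial argument: $S_K-AF$ is at most a $\operatorname{Binomial}(A,R)$ count, and the subtraction of $\xi_E$ after inversion should then follow from the case $A\xi_E$ integer plus rounding. The final sentence of the theorem follows because the bound holds simultaneously over $\mathcal K$, so selecting $K$ from the same matrix is covered.
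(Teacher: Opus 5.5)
\textbf{There is a genuine gap in the combining step.} Your evaluator layer (Markov on the sample fraction of bad columns, given $u$) and your generator layer (exact inversion of the latent count $G_K$) are each valid. But the inequality you need to join them, $\ell_A(s+j;\alpha)\le\ell_A(s;\alpha)+\xi_E$ for integers $j\le A\xi_E$, is false.

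The one-sided Clopper--Pearson lower limit is not $1/A$-Lipschitz in $s$. Near $s=A$, consecutive limits are spaced by more than $1/A$. Take the paper's pilot values $A=50$, $\alpha=0.025/18$, $\xi_E=0.05$. Then $\ell_{50}(50;\alpha)=\alpha^{1/50}\approx0.877$, while $\ell_{50}(48;\alpha)\approx0.801$.

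Now choose $G_K=48$, $S_K^{\mathrm{mass}}=50$ (two bad columns that certify, so the bad fraction is $0.04\le\xi_E$), and $R_{X;K,\delta}=0.81$. Both of your good events hold. Yet $\ell_{50}(50;\alpha)-\xi_E\approx0.827>R_{X;K,\delta}$. The paper's own reported numbers give another instance: $\ell_{50}(46;\alpha)-\ell_{50}(44;\alpha)\approx0.7412-0.6872>0.05$.

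So the events you construct do not imply the conclusion. Neither the order-statistic spacing argument nor the ``integer $A\xi_E$ plus rounding'' fallback can fix this. Binomial inversion is not translation-equivariant in the count, so a slack of $A\xi_E$ counts before inversion can cost more than $\xi_E$ after it.

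\textbf{The missing idea is to switch the conditioning.} Condition on the shared evaluator rows $\mathbf Z$ rather than on the generator draws. Measure interval failures as a population mass rather than a sample fraction.

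\begin{enumerate}
\item Apply the interval map to every $u\in\mathcal U$, including unsampled ones. Let $F(\mathbf Z)$ be the $\pi$-mass of configurations whose interval misses $\mu_X(\operatorname{Solve}(u,X))$. Tonelli and Markov give $\Prob(F(\mathbf Z)>\xi_E)\le\eta_E$.
\item Given $\mathbf Z$, the observed certificate indicators themselves are i.i.d.\ Bernoulli with mean $Q_K(\mathbf Z)=\int\operatorname{Cert}_{K,\delta}(I(u;\mathbf Z))\,d\pi(u)$. Exact inversion of $S_K^{\mathrm{mass}}$, not of the latent $G_K$, therefore lower-bounds $Q_K(\mathbf Z)$ simultaneously over $\mathcal K$ with failure probability $\eta_G$.
\item Soundness of the interval certificate gives the linear inequality $Q_K(\mathbf Z)\le R_{X;K,\delta}+F(\mathbf Z)$.
\end{enumerate}

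With this route, $\xi_E$ is subtracted on the mass scale after inversion, and only monotonicity of $\ell_A$ is ever used. Under your conditioning on $u$, you could not invert $S_K^{\mathrm{mass}}$ directly, because shared rows make the certificates dependent across columns. That is why the conditioning has to be on $\mathbf Z$.
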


The proof uses exact marginal intervals, Tonelli's theorem and Markov's
inequality to control the population mass of interval failures, followed by
conditional exact binomial inversion over sampled generators.  It does not
assume that the columns are independent.  A closed-form Hoeffding alternative
replaces the exact lower bound by
$\max\{0,\widehat R^{\mathrm{mass}}_{K,\delta}-\xi_E-
\sqrt{\log(|\mathcal K|/\eta_G)/(2A)}\}$ and is valid but generally more
conservative.  Appendix~\ref{subsec:operational-certificate} contains the full
proof, a stronger familywise variant, finite-catalogue versions, and the
finite-to-population transfer conditions.

\subsection{Adversarial sensitivity models}

We distinguish two non-adaptive models.  In the \emph{fixed-share} model, each
sampled identity is Byzantine with probability $\alpha$ and votes against the
honest decision; an honest identity supports that decision with probability
$1/2+\gamma_H$.  The unconditional support probability is then
\begin{equation}\label{eq:main-fixed-share}
p_{\mathrm{net}}=(1-\alpha)(1/2+\gamma_H).
\end{equation}
In the stronger \emph{targeted-contamination} model, the adversary can replace a
fraction $\alpha$ of entries specifically among those supporting the honest
decision, reducing an honest-side mean $1/2+\gamma$ to $1/2+\gamma-\alpha$.
Both attack probabilities are exact binomial tails once $K$, $\alpha$, and the
relevant clarity are fixed.  Appendix~\ref{sec:corruption} gives the complete
finite-census results and a simultaneous finite-sample certificate for the
workload-level targeted-contamination profile.

\section{Synthetic Results}\label{sec:synthetic-results}

The synthetic experiments serve two purposes: they verify the implementation of
the finite-sample certificate against known population quantities, and they show
which combinations of panel size, clarity, and adversarial participation can be
distinguished at practical sample sizes.  They do not estimate an LLM
population.

\subsection{Exact finite-panel behavior}

Figure~\ref{fig:finite-error} evaluates the exact hypergeometric disagreement
probability for a frozen census of 1,500 votes.  Error falls quickly with panel
size when the census mean is far from the threshold and slowly when it is close.
The binomial approximation is accurate for small sampling fractions but does
not incorporate the finite-population correction and does not become exact at
the full census.

\begin{figure}[htbp]
\centering
\includegraphics[alt={Log-scale panel disagreement versus panel size for several census clarity values, comparing exact hypergeometric and binomial curves.},width=0.90\textwidth]{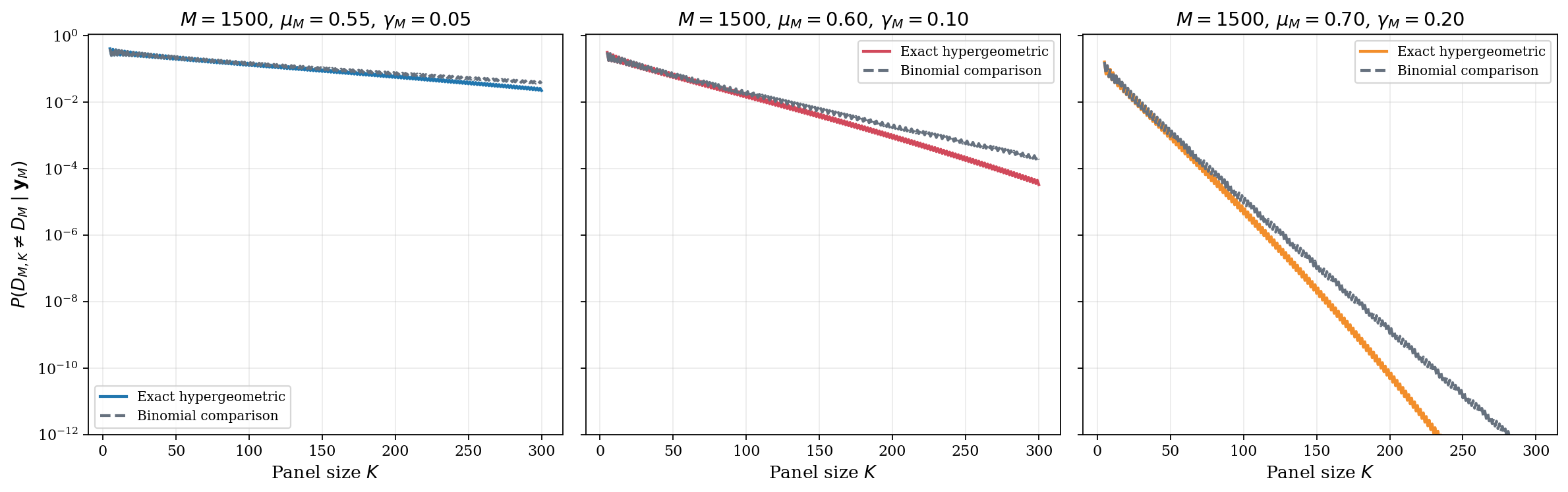}
\caption{Exact panel--census disagreement for a census of 1,500 votes under an
inclusive majority rule.  Curves are indexed by the census clarity.}
\label{fig:finite-error}
\end{figure}

This calculation illustrates why panel size alone is not a security or quality
parameter.  Away from the threshold, the required $K$ scales with the inverse
square of clarity.  At the superpopulation boundary, panel decisions do not
stabilize as $K\to\infty$; in a finite census, however, the without-replacement
path is pinned and becomes exact at $K=M$.

\subsection{Finite-sample validity and power}

We simulated generator units with latent acceptance means
\[
(0.30,0.40,0.46,0.49,0.51,0.54,0.60,0.70)
\]
and weights
$(0.15,0.20,0.05,0.10,0.10,0.05,0.20,0.15)$.  The threshold was one half,
the pointwise error target was $0.05$, and the required population coverage was
$0.60$.  Candidate panel sizes ranged from 21 to 301.  We compared independent
columns with a shared-row mixture in which, with probability $0.9$, every column
uses the same row-level uniform draw.  The latter construction preserves the
exact binomial marginal of every column while inducing strong dependence across
columns.

Each configuration was repeated 2,000 times.  A violation occurs if a
simultaneous lower bound exceeds the known population coverage at any candidate
panel size.  Table~\ref{tab:operational-monte-carlo} reports no violations in
8,000 experiments for either exact outer binomial inversion or the Hoeffding
benchmark.  This is an implementation check rather than a proof; the proof is
Theorem~\ref{thm:mass-operational-certificate}.  The comparison also shows the
distinction between validity and power.  The baseline experiment is valid but
usually cannot certify a true coverage of $0.70$ against a target of $0.60$,
whereas the powered design usually can.

\begin{table}[htbp]
\centering
\small
\begin{tabular}{@{}lrrrrrr@{}}
\toprule
Design & $A$ & $M$ & $\rho$ & Violations & Target exact / H & Mean exact / H \\
\midrule
Baseline & 500  & 1500 & 0.0 & 0 / 0 & 0.00\% / 0.00\% & 0.512 / 0.497 \\
Baseline & 500  & 1500 & 0.9 & 0 / 0 & 10.25\% / 2.85\% & 0.512 / 0.496 \\
Powered  & 2000 & 3000 & 0.0 & 0 / 0 & 96.15\% / 83.75\% & 0.619 / 0.610 \\
Powered  & 2000 & 3000 & 0.9 & 0 / 0 & 95.45\% / 85.65\% & 0.619 / 0.610 \\
\bottomrule
\end{tabular}
\caption{Monte Carlo finite-sample validity and power.  H denotes the Hoeffding
benchmark; violation entries give exact / H counts out of 2,000.  The final
column is evaluated at panel size 301.}
\label{tab:operational-monte-carlo}
\end{table}

\begin{figure}[htbp]
\centering
\includegraphics[alt={Monte Carlo finite-sample validity and power under independent columns and strong shared-row dependence.},width=0.96\textwidth]{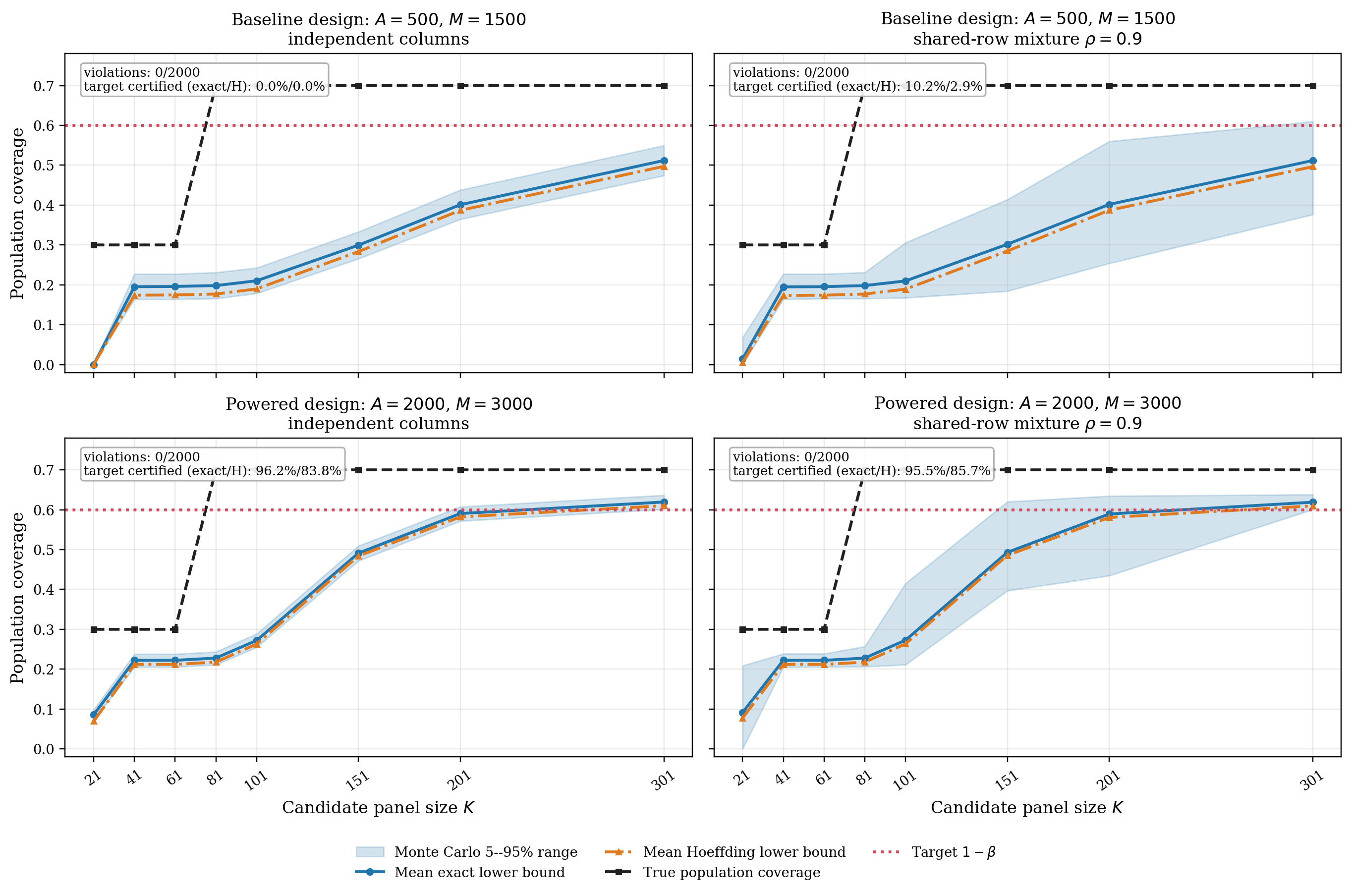}
\caption{Known population coverage (black), mean exact lower bound (blue),
its 5th--95th percentile range (band), mean Hoeffding bound (orange), and
target coverage (red).  Shared rows increase dispersion without invalidating
the marginal certificate.  Figure annotations use one decimal place;
Table~\ref{tab:operational-monte-carlo} reports the empirical percentages to
two decimal places.}
\label{fig:operational-monte-carlo}
\end{figure}

\FloatBarrier
\subsection{Panel size, clarity, and adversarial participation}

We next evaluated the fixed-share model on the nominal GenLayer panel-size grid.
Figure~\ref{fig:panel-clarity-adversary} plots the minimum clarity required within
the honest subpopulation to keep the exact outcome-manipulation probability at
or below one percent.  Larger panels reduce finite-sampling uncertainty, but
they cannot remove a population-level adversarial shift.  For a fixed Byzantine
share $\alpha$, the curves approach
\[
\gamma_H=\frac{\alpha}{2(1-\alpha)}.
\]
Thus the answer to ``what percentage is safe?'' is a curve conditional on task
clarity, panel size, attack model, and error target.

\begin{figure}[htbp]
\centering
\includegraphics[alt={Minimum honest-subpopulation clarity versus panel size, with one curve for each fixed Byzantine network share.},width=0.94\textwidth]{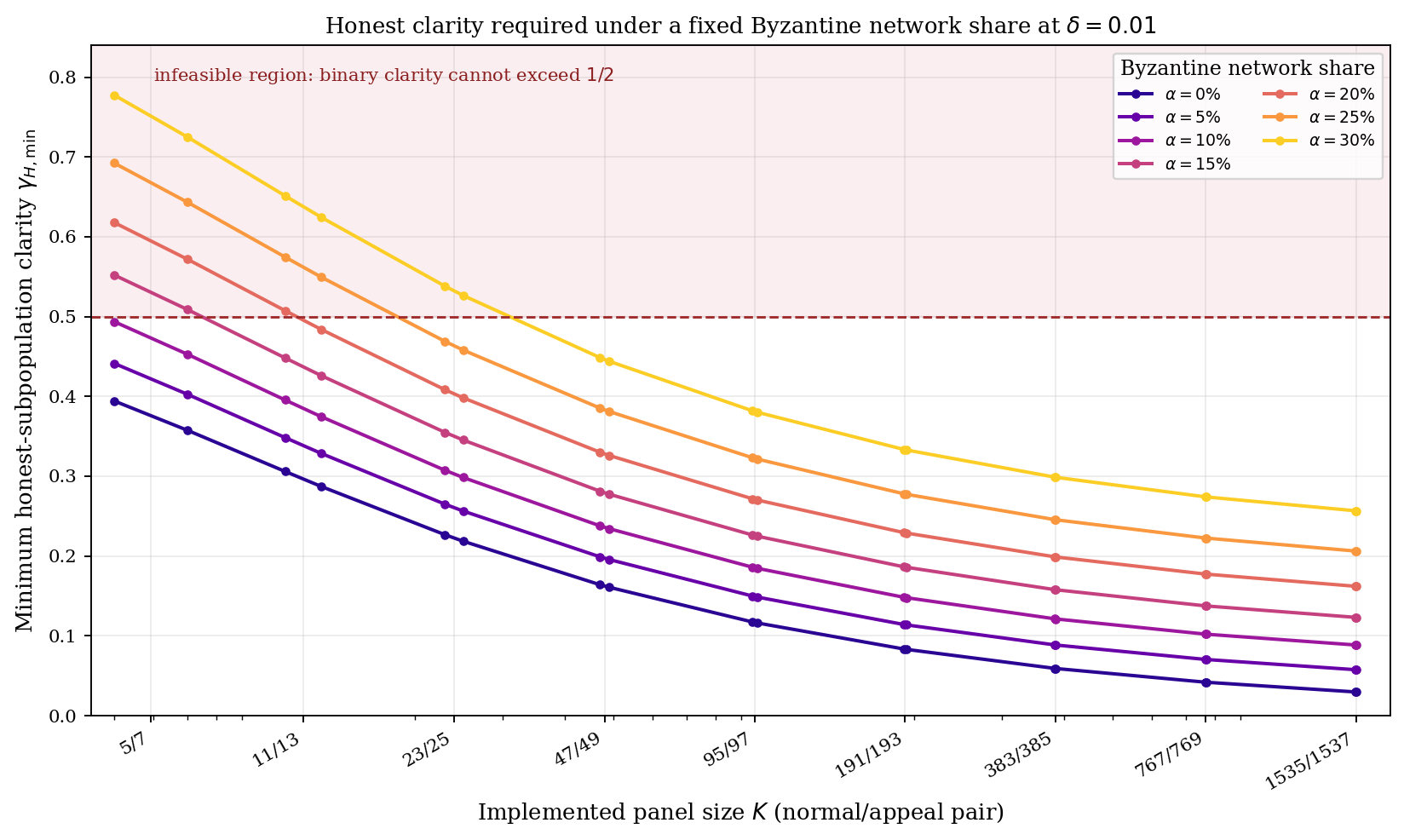}
\caption{Minimum honest-subpopulation clarity required for pointwise attack risk
at most one percent under the fixed-share model.  Values above one half are
infeasible for a binary threshold.  This benchmark is not a deployed-network
security guarantee.}
\label{fig:panel-clarity-adversary}
\end{figure}

Fixing the largest nominal panel, Figure~\ref{fig:attack-phase-diagram} compares
the fixed-share and targeted-contamination models over the complete
clarity--adversary plane.  The diagonal in the targeted model is not a claim that
network ownership and clarity are the same quantity.  It follows from the
stronger assumption that every unit of the corruption budget can be spent on an
entry supporting the honest decision.  Under random network ownership the
frontier is curved because Byzantine identities displace a mixture of latent
honest votes.

\begin{figure}[htbp]
\centering
\includegraphics[alt={Side-by-side attack phase diagrams at panel size 1537 for a fixed Byzantine network share and targeted census contamination.},width=0.80\textwidth]{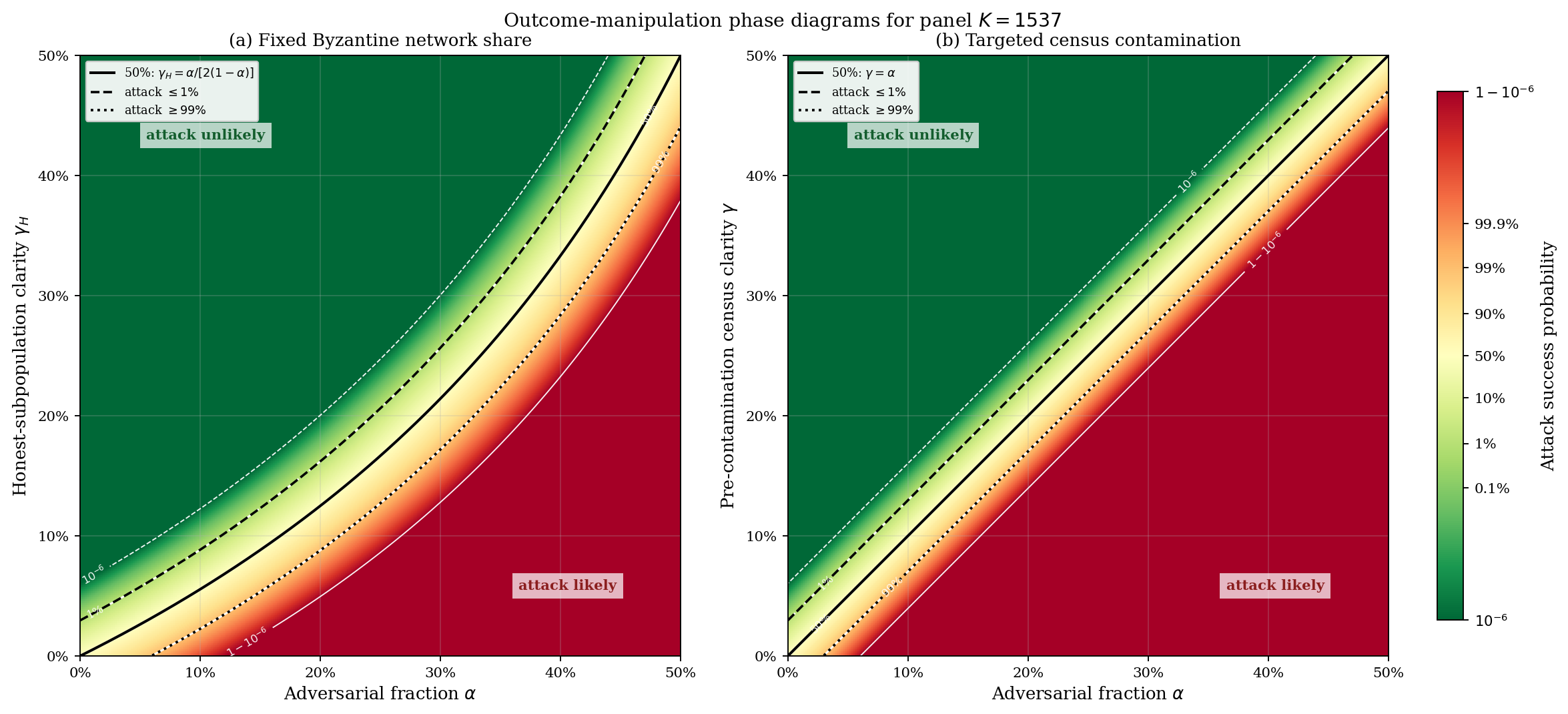}
\caption{Exact outcome-manipulation probability at panel size 1,537.  Left:
fixed Byzantine network share.  Right: targeted census contamination.  Solid
black lines mark 50\% attack probability; dashed and dotted lines mark 1\% and
99\%.  This benchmark is not a deployed-network security guarantee.}
\label{fig:attack-phase-diagram}
\end{figure}

The nominal grid is used only as an application-anchored sensitivity axis.  It
does not reproduce the deployed stake-weighted sampler, validator reuse,
path-dependent appeal feasibility, or adaptive corruption.  Those mechanisms
must be combined with the task-specific curves before making a network-security
claim.

\section{Applied Case: A Real-Data LLM Pilot}\label{sec:applied-study}

\subsection{Study design}

We ran a hash-pinned diagnostic pilot using real calls through an LLM router.
The 50 frozen workload units were manually stratified into clear acceptance,
clear rejection, benign ambiguity, and input-presentation stress.  They form a
fully enumerated seed catalogue, not an i.i.d.\ sample from an identified
workload--generator law.  Labels describe how the units were constructed; they
were not independently adjudicated truth labels.

Each prompt requested one binary global verdict for one frozen decision unit.
Thus the pilot exercises the $N=1$ computational path of the ADS formalism,
where $G$ is the identity.  It does not empirically exercise a multicomponent
verdict vector followed by a nontrivial global map.  Evaluator rows were sampled
with replacement from a predeclared equal-weight catalogue of routed
configurations and reused across all units.  This induces cross-unit dependence
allowed by the shared-row theorem, provided rows are i.i.d.\ from the declared
law.  Distinct seeds cannot exclude persistent dependence caused by provider
state, routing incidents, tools, or caches; those possibilities remain an
explicit limitation.

\subsection{Results under three reference objects}

The same binary matrix supports different statements under different reference
objects.  We report them separately.

\paragraph{Fully enumerated unit catalogue.}
Treating the 50 units and their declared weights as the complete outer
catalogue removes generator-sampling uncertainty.  At the retained audit value
$K=47$, familywise evaluator intervals certify 45 of 50 units.
Corollary~\ref{cor:finite-generator-catalogue} therefore gives lower coverage
$0.900$ under uniform catalogue weights and $0.8775$ under the declared weights,
with confidence $0.975$ under the declared i.i.d.\ evaluator-row model.  These
are formal statements about this fixed catalogue, not about future workload
units.  The nonuniform law was frozen before evaluator calls: if unit $i$
belongs to family $f$, then $r_i=w_f/n_f$, where the 13 family masses $w_f$ and
family sizes $n_f$ are reported in Table~\ref{tab:catalogue-weights}.  They are
a design prior for this finite catalogue, not frequencies estimated from the
observed votes or from a deployment population.

\paragraph{Frozen evaluator census.}
Conditioning instead on the 40 realized global verdicts per unit eliminates
evaluator-superpopulation inference.  Exact hypergeometric calculations show
that 47 of 50 units have
$K_{\mathrm{stable}}\leq7$ at $\delta=0.01$.  The remaining units are
\texttt{mkt-03}, \texttt{ins-03}, and \texttt{air-03}, with stable panel sizes
26, 34, and 39, respectively.  Here panels are sampled without replacement and
necessarily satisfy $K\leq40$.

Thus the same matrix gives 47 of 50 exact frozen-census stability results at
$K=7$, but zero evaluator-superpopulation interval certificates at that panel
size.  The contrast is not a contradiction: it is the empirical reason the
finite-census and superpopulation regimes cannot be interchanged.

\paragraph{Outer-sampling diagnostic.}
If the 50 units had instead been sampled i.i.d.\ from a declared workload law,
the mass-controlled theorem would apply.  At $K=47$, 46 units pass its pointwise
interval rule.  With $\eta_G=0.025$, $|\mathcal K|=18$, and $\xi_E=0.05$, exact
outer binomial inversion gives
$\ell_{50}(46;0.025/18)-0.05=0.6912$; the valid but more conservative Hoeffding
form gives $46/50-0.05-\sqrt{\log(18/0.025)/100}=0.6135$.  At $K=23$, 44 units
pass and the exact value is $\ell_{50}(44;0.025/18)-0.05=0.6372$, which first
crosses the pilot target $0.60$.  Under the hypothetical i.i.d.\ outer design,
these would be simultaneous lower bounds at confidence
$1-\eta_E-\eta_G=0.95$.  The division by 18 retains the multiplicity cost of
the complete predeclared grid.  The original Hoeffding analysis first crossed
at $K=47$; we retain that value as the audit reference rather than selecting a
smaller value after inspecting the matrix.  Because the actual catalogue was
manually stratified, none of these values carries a workload-population
confidence interpretation here.

\begin{figure}[htbp]
\centering
\includegraphics[alt={Observed clarity by construction stratum and fixed-catalogue and outer-sampling diagnostics over panel size.},width=0.95\textwidth]{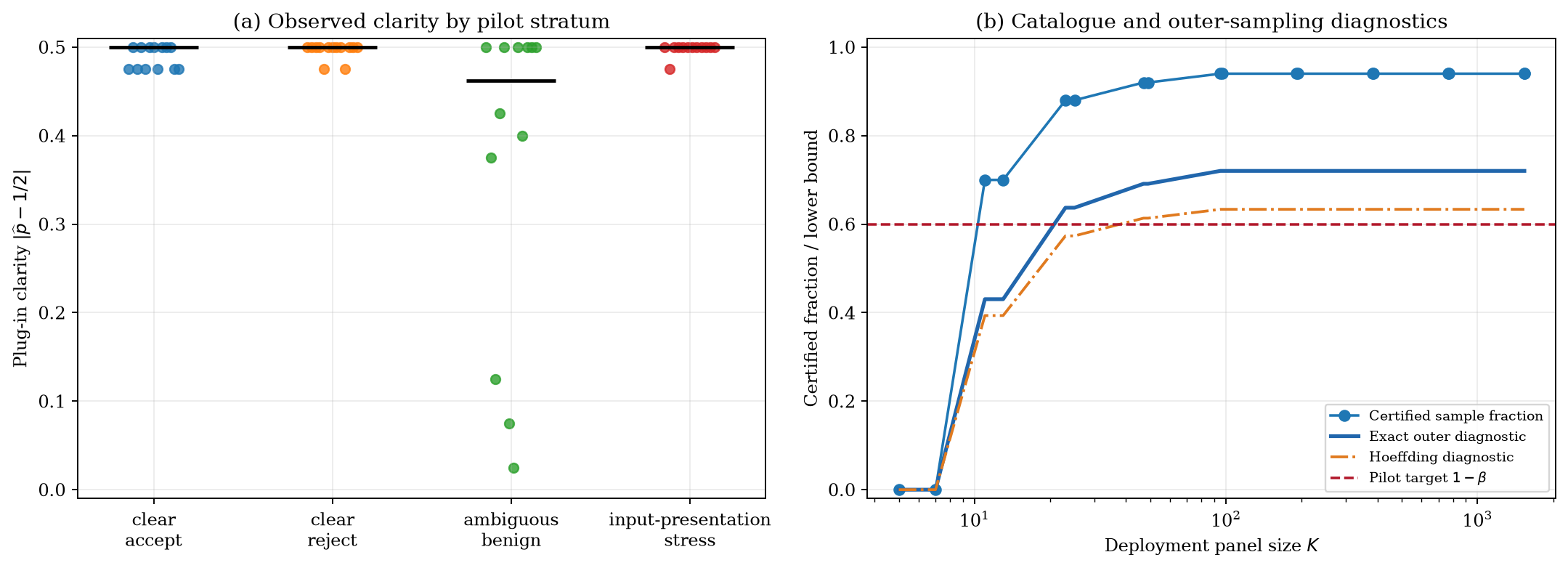}
\caption{Left: observed plug-in clarity by construction stratum.  Right:
certified catalogue fraction and two outer-sampling diagnostics.  Exact
binomial inversion crosses the pilot target at $K=23$; the original Hoeffding
analysis crosses at $K=47$.  Neither outer curve is a workload-population
estimate for this stratified catalogue.}
\label{fig:real-llm-pilot}
\end{figure}

\FloatBarrier
\subsection{Diagnostics and interpretation}

The empirical majority matched the construction label on all 37 units designed
to have a determinate outcome.  This is construction agreement, not an
externally validated accuracy estimate.  Three evaluator intervals crossed the
decision threshold: \texttt{mkt-03}, \texttt{ins-03}, and \texttt{air-03}.
Equal weighting of the 18 realized configurations changed no unit decision;
removing one realized configuration at a time changed only
\texttt{air-03}.  Mapping every non-success to acceptance also changed only
\texttt{air-03}, while both that policy and a successes-only analysis changed
the certification status of \texttt{bug-03}.  These reweighting and failure
analyses are descriptive: the archived rows were not sampled to certify the
alternative evaluator laws.

Applying the non-adaptive census-contamination model of
Theorem~\ref{thm:corrupt-exact} to the same ledger shows that, at $K=7$, the
number of catalogue units meeting $\delta=0.01$ is 47 with no flips, 46 after
two worst-direction flips per 40-vote column, and 44 after four.  These are
conditional frozen-census sensitivities, not guarantees against adaptive
corruption or a deployed network; Table~\ref{tab:pilot-census-corruption}
reports the fuller calculation.

Nine of the 13 units constructed to be underdetermined nevertheless passed the
mass-controlled pointwise interval rule at the audit value $K=47$.  This
establishes neither
that they are semantically underdetermined nor that the resulting convention is
correct; no independent semantic adjudication was performed.  It does
illustrate the estimand's limitation: reproducibility relative to a declared
population can coexist with construction-level ambiguity, shared convention,
or shared bias.  External correctness requires a separate labelled study.

\section{Discussion}\label{sec:discussion}

The results turn panel reproducibility into an estimable quantity.  A frozen
census supports exact design-based claims; a probabilistic evaluator population
supports fresh-panel claims; and a growing deterministic catalogue supports a
population claim only when its approximation is justified.  The operational
certificate then combines evaluator uncertainty and generator sampling without
pooling resolution-specific margins.  This separation is especially important
for LLM systems, where shared model families can produce strong cross-task
dependence and strong common bias at the same time.

The synthetic and applied results make two practical points.  First, failure to
certify can mean genuine ambiguity or inadequate experimental power; the
baseline Monte Carlo design is valid but underpowered.  Second, successful
certification establishes reproducibility rather than truth.  The LLM study
found both near-perfect coordination on constructed clear cases and high
coordination on cases constructed to be underdetermined.  Because those
construction labels were not independently adjudicated, the pilot does not
estimate a relation between semantic underdetermination and coordination.  A
deployment should report the panel-reproduction guarantee alongside independent
semantic metrics whenever adjudicated truth is available.

Adversarial tolerance is likewise not a universal percentage.  It depends on
the task-specific clarity distribution, panel size, sampling mechanism, and
attacker model.  Larger panels suppress random committee variation but amplify
the decision of whichever population remains after contamination.  The phase
diagrams should therefore be combined with a pinned deployment population and
inclusion law, rather than quoted as standalone network-security guarantees.
When workload units and honest evaluator rows follow the declared sampling
design, Corollary~\ref{cor:contamination-profile-certificate} supplies a
finite-sample lower bound for the targeted-contamination profile; without that
design, the same curves remain sensitivity analyses only.

The main limitations also define the immediate research program.  The present
certificate assumes a predeclared finite grid, i.i.d.\ rows, and correct
binomial marginals.  Reweighting a matrix drawn under one evaluator law is only
descriptive unless an appropriate alternative-law design and inference are
supplied.  The real-data pilot contains only 50 manually stratified units,
exercises the $N=1$ identity map rather than a nontrivial multicomponent $G$,
and has no independent semantic adjudication.  More generally,
time-uniform confidence sequences would permit optional escalation, while
clustered or latent-factor models would address shared provider state.  Real
deployments additionally require stake-weighted and path-dependent inclusion
laws, adaptive adversaries, temporal replication, and independently adjudicated
workloads.  Finally, panel cost, latency, appeals, and bonds should be optimized
jointly with the statistical error budget.  These extensions change the
deployment layer, but not the central distinction between reproducibility,
population choice, and semantic correctness.

\section*{Institutional Disclosure}
This work was conducted within GenLayer Labs Research.  GenLayer Labs develops GenLayer, whose nominal panel-size ladder is used as an application-anchored sensitivity grid in Section~\ref{sec:synthetic-results}.  The statistical results do not depend on that implementation.  No production deployment outcomes or user data are analyzed.

\bibliographystyle{plain}
\bibliography{references}

\clearpage
\appendix
\renewcommand{\headeright}{Appendices}
\pdfbookmark[0]{Appendices: Proofs and Technical Results}{part1-appendices}
\documentdivision{Appendices: Proofs and Technical Results}
\section{Notation Summary}\label{app:notation}

\small
\begin{longtable}{@{}p{0.29\textwidth}p{0.66\textwidth}@{}}
\toprule
\textbf{Symbol} & \textbf{Meaning}\\
\midrule
\endfirsthead
\toprule
\textbf{Symbol} & \textbf{Meaning}\\
\midrule
\endhead
$\Sigma^*$ & Countable set of finite encoded words\\
$X$ & Fixed problem with $N$ components\\
$T$ & Fixed concrete resolution of $X$\\
$G$ & Common global map $\{0,1\}^N\to\{0,1\}$\\
$\Theta$ & Evaluator-configuration space\\
$\calZ=\Theta\times[0,1]$ & Complete evaluator-episode space\\
$\nu_X$ & Declared task-specific evaluator measure\\
$V_j(z;X,T)$ & Component verdict\\
$Y(z;X,T)$ & Global evaluator verdict\\
$\mathbf Y$ & Joint vote element in $\{0,1\}^{\Naturals}$\\
$\mathcal U,\pi$ & Generator-configuration space and declared law\\
$M$ & Evaluator-census or evaluator-row sample size\\
$A$ & Generator-column sample size\\
$K$ & Panel size (finite-census regime: $K\leq M$)\\
$C_M$ & Number of positive census verdicts\\
$\mu_M=C_M/M$ & Census acceptance proportion\\
$\mu_X(T)$ & Limiting or superpopulation acceptance mean\\
$\tau$ & Inclusive decision threshold\\
$q_K=\lceil\tau K\rceil$ & Integer positive-vote quota\\
$D_{M,K}$ & Panel decision\\
$D_M$ & Finite census decision\\
$D_{X,\infty}$ & Limiting population decision\\
$\gamma_M$ & Census clarity $|\mu_M-\tau|$\\
$\gamma_X$ & Limiting clarity $|\mu_X-\tau|$\\
$H_{M,K}$ & Positive panel count\\
$e_{M,K}(T)$ & Exact panel--census disagreement probability\\
$e_{X,K}(T)$ & Exact panel--limiting-population disagreement probability\\
$B^\circ$ & Standard Brownian bridge\\
$W$ & Standard Brownian motion\\
$R_{X;K,\delta}$ & Population resolvability coverage\\
$R_{K,\delta}^{(A,M)}$ & Empirical resolvability coverage\\
$p(u),e_K(u)$ & Workload-unit acceptance mean and exact panel-error tail\\
$R_{\Pi;K,\delta}$ & Coverage under a declared workload--generator law $\Pi$\\
$e_K^{\mathrm{acc}}(p),e_K^{\mathrm{rej}}(p)$ & Directed binomial panel-error tails\\
$\operatorname{Cert}_{K,\delta}(I)$ & Sound interval certificate for one column\\
$\widehat R_{K,\delta}^{\mathrm{cert}},\widehat R_{K,\delta}^{\mathrm{mass}}$ & Fractions certified under the familywise and mass-controlled constructions\\
$S_K^{\mathrm{cert}},S_K^{\mathrm{mass}}$ & Counts of certified generator columns under the familywise and mass-controlled interval constructions\\
$\ell_A(s;\alpha)$ & One-sided exact binomial lower limit after $s$ successes in $A$ trials\\
$L_K^{G,\mathrm{cert}},L_K^{G,\mathrm{mass}}$ & Outer binomial limits before any mass charge\\
$\underline R_{K,\delta}^{\mathrm{cert}},\underline R_{K,\delta}^{\mathrm{mass}}$ & Familywise and mass-controlled simultaneous lower coverage bounds\\
$Q_K(\mathbf Z)$ & Conditional generator mass of positive interval certificates\\
$F(\mathbf Z)$ & Generator mass of evaluator intervals that miss their means\\
$\beta$ & Maximum unresolved fraction\\
$\eta_E,\eta_G$ & Evaluator- and generator-level calibration failure budgets\\
$\xi_E$ & Evaluator-interval failure mass charged to coverage\\
$I_a=[L_a,U_a]$ & Confidence interval for the evaluator mean of column $a$\\
$\mathcal K$ & Predeclared finite set of candidate panel sizes\\
$b$ & Non-adaptive census-corruption budget\\
$\alpha$ & Adversarial fraction or asymptotic contamination level\\
$P_K^{\mathrm{cap}}(\alpha)$ & Strict-majority committee-capture probability\\
$P_K^{\mathrm{net}}(\alpha;\gamma_H)$ & Fixed-share network attack probability\\
$P_K^{\mathrm{tar}}(\alpha;\gamma)$ & Targeted-contamination attack probability\\
$r_{K,\delta}$ & Post-corruption clarity required for pointwise error $\delta$\\
$S_{K,\delta}(\alpha)$ & Workload non-adaptive corruption profile\\
$\widehat S_{K,\delta}^{\mathrm{mass}}(\alpha)$ & Observed mass-controlled targeted-certificate fraction\\
$\underline S_{K,\delta}(\alpha)$ & Simultaneous lower bound on the targeted-contamination profile\\
$\alpha^*_{K,\delta,\beta}$ & Largest contamination level whose workload profile is at least $1-\beta$\\
\bottomrule
\end{longtable}
\normalsize

\section{Formal ADS Model and Binary Process}\label{sec:model}

\subsection{Encoded Problems and Concrete Resolutions}

Let $\Sigma$ be a finite non-empty alphabet and let
\[
\Sigma^*=\bigcup_{n\geq0}\Sigma^n
\]
be the set of finite words, equipped with the discrete sigma-algebra $2^{\Sigma^*}$.  The encoding is syntactic: no algebraic or metric meaning is assigned to bytes or tokens.  In particular, the theory never averages textual outputs.

\begin{definition}[Problem]\label{def:problem}
A problem is
\[
X=((x_1,\ldots,x_N),\iota),
\]
where $N\in\Naturals$ with $N\geq1$, every $x_j\in\Sigma^*$ is a component specification, and $\iota\in\Sigma^*$ encodes the common evaluation instructions and permitted resources.
\end{definition}

\begin{definition}[Concrete Resolution]\label{def:resolution}
A concrete resolution of $X$ is a vector
\[
T=(w_1,\ldots,w_N)\in(\Sigma^*)^N.
\]
Once generated, $T$ is frozen and supplied identically to every evaluator in the experiment under study.
\end{definition}

The finiteness of $N$ concerns one concrete problem.  The population of possible resolutions, evaluator episodes, and sampled panels may all be infinite.

\subsection{Evaluator Episodes}

Let $(\Theta,\mathfrak G,\nu)$ be a probability space of evaluator configurations.  A configuration includes the model or human-evaluator type, fixed prompt, tool policy, and other declared design choices.  Let $([0,1],\mathcal B([0,1]),\lambda)$ supply private randomness.  A single evaluator episode belongs to
\[
(\calZ,\mathfrak A,\rho)
=
(\Theta\times[0,1],\mathfrak G\otimes\mathcal B([0,1]),\nu\otimes\lambda).
\]
The seed coordinate may encode a countable sequence of independent draws, including private tool observations, through the construction in Supplement~\ref{supp:seed}.  A genuinely shared random environment is not private randomness: it must be conditioned upon or added as a common coordinate, in which case unconditional evaluator votes need not be independent.

\begin{definition}[Interpreter Family]\label{def:interpreter-family}
An interpreter family is a map
\[
\mathcal I:\Theta\times\Sigma^*\times[0,1]\longrightarrow\Sigma^*
\]
such that $(\theta,\omega)\mapsto\mathcal I(\theta,u,\omega)$ is $\mathfrak A$-measurable for every fixed $u\in\Sigma^*$.
\end{definition}

Let $Q:(\Sigma^*)^3\to\Sigma^*$ be a deterministic prompt constructor and let $\mathcal A_+\subseteq\Sigma^*$ be a measurable set of explicit affirmative responses.  Malformed and non-affirmative responses count as zero.  Let $\omega\mapsto(\omega^{(1)},\omega^{(2)},\ldots)$ denote the measurable splitting of one uniform seed into countably many independent uniform seeds.

\begin{definition}[Component Verdict Vector]\label{def:component-vector}
For $z=(\theta,\omega)\in\calZ$, problem $X$, and resolution $T$, define
\[
V_j(z;X,T)
=
\ind\!\left[
\mathcal I\bigl(\theta,Q(\iota,x_j,w_j),\omega^{(j)}\bigr)\in\mathcal A_+
\right],
\qquad j=1,\ldots,N,
\]
and
\[
\mathbf V(z;X,T)=(V_1(z;X,T),\ldots,V_N(z;X,T))\in\{0,1\}^N.
\]
No independence is assumed among the $N$ coordinates of one evaluator's vector.
\end{definition}

An evaluator may internally construct a private solution or consult external evidence before returning a component verdict.  Such behavior is part of the measurable map $\mathcal I$ and the episode $z$; the probability theory does not require a particular internal reasoning procedure.

\begin{definition}[Global Classification Rule and Verdict]\label{def:global-verdict}
A global classification rule is a deterministic map
\[
G:\{0,1\}^N\longrightarrow\{0,1\}
\]
fixed before evaluations are observed.  The global verdict is
\[
Y(z;X,T)=G\bigl(\mathbf V(z;X,T)\bigr).
\]
\end{definition}

For the strict examination rule,
\[
G(v_1,\ldots,v_N)=\prod_{j=1}^N v_j;
\]
the resolution passes exactly when no component fails.

\begin{proposition}[Measurability]\label{prop:verdict-measurable}
For fixed $(X,T)$, the global verdict
\[
Y(\,\cdot\,;X,T):(\calZ,\mathfrak A)\longrightarrow(\{0,1\},2^{\{0,1\}})
\]
is measurable.
\end{proposition}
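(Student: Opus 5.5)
The plan is to build $Y(\cdot;X,T)$ as a composition of measurable maps and check each link. Since $\{0,1\}$ carries the power set, it suffices to show that $\{z:Y(z;X,T)=1\}\in\mathfrak A$. The composition I would follow is
\[
z=(\theta,\omega)\;\longmapsto\;\mathbf V(z;X,T)\in\{0,1\}^N\;\longmapsto\;G(\mathbf V(z;X,T)).
\]
Because $G$ is a function on the finite discrete space $\{0,1\}^N$, it is automatically measurable with respect to the power sets. It therefore remains to show that $\mathbf V$ is $\mathfrak A$-measurable into $(\{0,1\}^N,2^{\{0,1\}^N})$. The product $\sigma$-algebra of $N$ copies of $2^{\{0,1\}}$ coincides with the power set of the finite product, so coordinatewise measurability is enough.

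For a fixed component $j$, put $u_j=Q(\iota,x_j,w_j)\in\Sigma^*$. This is a fixed word, since $Q$ is deterministic and $(X,T)$ is fixed. Next, the splitting map $\omega\mapsto\omega^{(j)}$ is measurable from $([0,1],\mathcal B)$ to itself; this is the content of the seed construction in Supplement~\ref{supp:seed}, which I take as given. Hence $\phi_j:(\theta,\omega)\mapsto(\theta,\omega^{(j)})$ is $\mathfrak A/\mathfrak A$-measurable. To see this, check rectangles $G_0\times B$: their preimage is $G_0\times\{\omega:\omega^{(j)}\in B\}\in\mathfrak A$, and rectangles generate $\mathfrak G\otimes\mathcal B$. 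By Definition~\ref{def:interpreter-family}, the map $(\theta,\omega)\mapsto\mathcal I(\theta,u_j,\omega)$ is $\mathfrak A$-measurable into $(\Sigma^*,2^{\Sigma^*})$. Composing it with $\phi_j$ gives measurability of $z\mapsto\mathcal I(\theta,u_j,\omega^{(j)})$. Then $V_j$ is the indicator of the preimage of $\mathcal A_+$. This is measurable because every subset of the countable $\Sigma^*$ lies in the discrete $\sigma$-algebra, so membership in $\mathcal A_+$ is a measurable event.

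The final step writes $\{Y=1\}=\bigcup_{v\in G^{-1}(1)}\bigcap_{j=1}^N\{V_j=v_j\}$. This is a finite union of finite intersections of sets in $\mathfrak A$, which concludes the proof.

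The only step that is not purely formal is the measurability of the seed splitting $\omega\mapsto\omega^{(j)}$. If the supplement does not state it explicitly, I would verify it for the standard digit-interleaving construction. Each binary digit of $\omega^{(j)}$ is a Borel function of $\omega$, being an indicator of a finite union of dyadic intervals. Then $\omega^{(j)}=\sum_k 2^{-k}d_k^{(j)}(\omega)$ is a pointwise limit of Borel functions, hence Borel. Everything else follows from the discrete $\sigma$-algebras on $\Sigma^*$ and $\{0,1\}$.
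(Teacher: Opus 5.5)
Your proof is correct and follows essentially the same route as the paper's: you get each $V_j$ from the interpreter at the fixed prompt composed with the measurable seed split, take the indicator of the preimage of $\mathcal A_+$, and then pass through the finite discrete product, on which every $G$ is measurable. You make explicit two steps that the paper's proof compresses, namely the rectangle argument for $(\theta,\omega)\mapsto(\theta,\omega^{(j)})$ and the Borel measurability of the digit-splitting map.
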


\begin{proof}
For each $j$, the output of $\mathcal I$ at the fixed prompt $Q(\iota,x_j,w_j)$ and measurable split seed $\omega^{(j)}$ is measurable.  Taking the inverse image of $\mathcal A_+$ and its indicator preserves measurability.  The vector of finitely many measurable coordinates is measurable into the finite discrete product, and every map $G$ on that product is measurable.
\end{proof}

\begin{remark}[Order of Operations]\label{rem:global-first}
The ADS first computes one global vote
\[
Y_i=G(V_{i,1},\ldots,V_{i,N})
\]
for each evaluator and only then aggregates the $Y_i$.  Applying $G$ to component-wise population majorities is a different rule in general and is not analyzed here.
\end{remark}

\subsection{The Countable Binary Random Element}\label{sec:binary-process}

Fix $(X,T)$.  On the countable product probability space
\[
(\calZ^{\Naturals},\mathfrak A^{\otimes\Naturals},\rho^{\otimes\Naturals}),
\]
let $Z_i$ be the $i$th coordinate projection and set
\[
Y_i=Y(Z_i;X,T)\in\{0,1\}.
\]

\begin{definition}[Joint Vote Element]\label{def:joint-vote}
The joint vote element is
\[
\mathbf Y_{X,T}
=(Y_1,Y_2,\ldots):
\calZ^{\Naturals}\longrightarrow\{0,1\}^{\Naturals},
\]
where the codomain carries the product sigma-algebra
\[
\mathcal C=\bigotimes_{i\geq1}2^{\{0,1\}}.
\]
\end{definition}

This is a product, not a union: a countable union of copies of $\{0,1\}$ is still $\{0,1\}$, whereas $\{0,1\}^{\Naturals}$ contains complete binary sequences.

\begin{proposition}[Well-defined Joint Law]\label{prop:joint-law}
$\mathbf Y_{X,T}$ is $\mathfrak A^{\otimes\Naturals}/\mathcal C$-measurable.  Its coordinates are i.i.d.\ Bernoulli with parameter
\[
\mu_X(T)=\E_\rho[Y(Z;X,T)].
\]
\end{proposition}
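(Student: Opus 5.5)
The plan is to handle measurability and the distributional claim separately. Both follow from standard product-space facts together with Proposition~\ref{prop:verdict-measurable}.

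\textbf{Measurability.} The product sigma-algebra $\mathcal C$ on $\{0,1\}^{\Naturals}$ is generated by the coordinate projections $\pi_i$. A map into $(\{0,1\}^{\Naturals},\mathcal C)$ is therefore measurable if and only if each composition $\pi_i\circ\mathbf Y_{X,T}=Y_i$ is measurable. It suffices to check this on the generating class of cylinder sets $\pi_i^{-1}(\{b\})$, since the preimage map commutes with countable set operations. Each $Y_i=Y(\cdot;X,T)\circ Z_i$ is the composition of two maps:
\begin{itemize}
\item the coordinate projection $Z_i:\calZ^{\Naturals}\to\calZ$, which is $\mathfrak A^{\otimes\Naturals}/\mathfrak A$-measurable by the definition of the product sigma-algebra;
\item the verdict map $Y(\cdot;X,T)$, which is measurable by Proposition~\ref{prop:verdict-measurable}.
\end{itemize}
A composition of measurable maps is measurable, so each $Y_i$ is measurable and hence so is $\mathbf Y_{X,T}$.

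\textbf{Distribution.} Under $\rho^{\otimes\Naturals}$, which exists by the Kolmogorov/Ionescu-Tulcea construction of countable products of probability measures, the projections $Z_1,Z_2,\ldots$ are independent with common law $\rho$. To verify this, fix finitely many indices $i_1<\cdots<i_n$ and sets $A_1,\ldots,A_n\in\mathfrak A$. The event $\bigcap_k\{Z_{i_k}\in A_k\}$ is a measurable rectangle, and its measure factors as $\prod_k\rho(A_k)$ by definition of the product measure. Independence of finitely many coordinates for every finite index set is exactly independence of the sequence.

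Measurable functions of independent random elements, each applied to a distinct coordinate, remain independent. So the $Y_i$ are independent, and since each $Y_i$ is the same map applied to a $\rho$-distributed coordinate, they are identically distributed. Each $Y_i$ is $\{0,1\}$-valued, so its law is Bernoulli with parameter
\[
\Prob(Y_i=1)=\rho\bigl(Y(\cdot;X,T)^{-1}(\{1\})\bigr)=\E_\rho[Y(Z;X,T)]=\mu_X(T).
\]
This is well defined because the indicator is bounded and measurable.

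\textbf{Main obstacle.} The only delicate point is existence of the infinite product measure on a general, not necessarily standard Borel, space $(\calZ,\mathfrak A)$. I would cite the theorem that a countable product of probability spaces always exists; this holds for arbitrary measurable spaces, without topological assumptions. The rest is routine.

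As a remark, the joint law of $\mathbf Y_{X,T}$ is the product Bernoulli measure $\mathrm{Bernoulli}(\mu_X(T))^{\otimes\Naturals}$ on $\mathcal C$. Two measures on $\mathcal C$ that agree on finite-dimensional cylinders agree everywhere, by the $\pi$–$\lambda$ theorem, and the finite-dimensional distributions of $\mathbf Y_{X,T}$ were identified above.
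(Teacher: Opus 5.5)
Your proposal is correct and follows essentially the same route as the paper: measurability is checked on generating cylinder sets via Proposition~\ref{prop:verdict-measurable}, and the i.i.d.\ Bernoulli law comes from applying one measurable binary map to the independent $\rho$-distributed coordinates. Your remarks on the existence of the countable product measure and the $\pi$--$\lambda$ identification of the joint law spell out details that the paper leaves implicit.
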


\begin{proof}
The product sigma-algebra $\mathcal C$ is generated by cylinder sets depending on finitely many coordinates.  The inverse image of each such cylinder is a finite intersection of events of the form $\{Y_i=a_i\}$, measurable by Proposition~\ref{prop:verdict-measurable}.  Hence the joint map is measurable.  The $Z_i$ are independent with common law $\rho$, and applying the same measurable binary function to each coordinate preserves independence and identical distribution.  A binary variable is Bernoulli with parameter equal to its expectation.
\end{proof}

\begin{remark}[Same Model Does Not Imply Dependence]\label{rem:same-model}
If every episode uses one fixed model configuration $\theta_0$, replace $\nu$ by the point mass $\delta_{\theta_0}$.  Independent seeds and private evidence still make the $Z_i$, hence the $Y_i$, independent and identically distributed.  The common model may create common \emph{bias} through the value of $\mu_X(T)$; it does not create statistical dependence between deterministic functions of independent inputs.  A shared random external state would be a different model and can create dependence.
\end{remark}

\begin{definition}[Population Decision and Clarity]\label{def:population-decision}
Fix an inclusive threshold $\tau\in(0,1)$.  The superpopulation decision and pointwise clarity are
\[
D_{X,\infty}(T)=\ind[\mu_X(T)\geq\tau],
\qquad
\gamma_X(T)=|\mu_X(T)-\tau|.
\]
At the boundary $\gamma_X(T)=0$, the target is defined by the tie policy, but finite i.i.d.\ panel decisions need not stabilize on one side.
\end{definition}

The countable product construction defines the infinite-dimensional object.  A statistic based only on $\sum_{i=1}^K Y_i$ is nevertheless scalar.  Supplement~\ref{supp:functional-limits} retains the complete partial-sum path and thereby obtains a genuinely functional limit.

\begin{remark}[Finite Censuses Are a Separate Regime]\label{rem:finite-separate}
The product model above describes i.i.d.\ evaluator episodes from a declared measure.  A concrete catalogue of distinct evaluators sampled without replacement is not i.i.d.\ conditional on its frozen verdicts.  We now construct that design directly, without importing independence from the superpopulation regime.
\end{remark}

\section{Exact Finite-Population Results}\label{sec:finite-population}

The finite model conditions on what a declared evaluator census actually returned.  Once these verdicts are frozen, the only randomness is the sampling design.

\subsection{Census and Nested Panels}

\begin{definition}[Finite Evaluator Census]\label{def:finite-census}
For fixed $(X,T)$, a census of size $M$ is a finite sequence of eligible evaluator episodes together with their frozen global verdicts
\[
\mathbf y_M(X,T)=(y_{M,1},\ldots,y_{M,M})\in\{0,1\}^M.
\]
Define
\[
C_M(T)=\sum_{i=1}^M y_{M,i},
\qquad
\mu_M(T)=\frac{C_M(T)}{M}.
\]
The dependence on $X$ is suppressed when one problem is fixed.
\end{definition}

No probability model is imposed on $\mathbf y_M$.  Evaluators may use different models, prompts, or evidence; their frozen verdicts may exhibit any pattern.

Let $\Pi_M$ be a uniform random permutation of $\{1,\ldots,M\}$.  For $1\leq K\leq M$, define the nested panel
\[
S_{M,K}=\{\Pi_M(1),\ldots,\Pi_M(K)\}
\]
and its number of positive global verdicts
\[
H_{M,K}(T)=\sum_{i=1}^K y_{M,\Pi_M(i)}.
\]
Each $S_{M,K}$ is a uniform $K$-element subset, and
\[
S_{M,1}\subset S_{M,2}\subset\cdots\subset S_{M,M}.
\]
The nesting is not needed for a single endpoint probability.  It is needed to model escalation by adding evaluators and to define the functional path in Supplement~\ref{supp:functional-limits}.

\begin{definition}[Inclusive Threshold Decisions]\label{def:finite-decisions}
For threshold $\tau\in(0,1)$, let
\[
q_K=\lceil\tau K\rceil.
\]
The panel and census decisions are
\[
D_{M,K}(T)=\ind[H_{M,K}(T)\geq q_K],
\qquad
D_M(T)=\ind[C_M(T)\geq q_M]
=\ind[\mu_M(T)\geq\tau].
\]
The census clarity is
\[
\gamma_M(T)=|\mu_M(T)-\tau|.
\]
\end{definition}

The inclusive convention means that an exact tie at the threshold accepts.  With $\tau=1/2$ and even $K$, this differs from strict majority.  The convention is part of the mechanism and must not be altered after observing data.

\subsection{Exact Panel Law}

\begin{theorem}[Exact Hypergeometric Law]\label{thm:hypergeometric}
Conditional on the frozen census $\mathbf y_M(X,T)$,
\[
H_{M,K}(T)\sim\operatorname{Hypergeometric}(M,C_M(T),K),
\]
that is,
\[
\Prob(H_{M,K}=h\mid\mathbf y_M)
=
\frac{\binom{C_M}{h}\binom{M-C_M}{K-h}}{\binom{M}{K}}
\]
for every feasible integer $h$.
\end{theorem}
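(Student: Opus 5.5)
The plan is a direct counting argument under the uniform sampling design, conditional on the frozen census.

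\emph{Step 1: the panel is a uniform $K$-subset.} Conditional on $\mathbf y_M$, the only randomness is the uniform permutation $\Pi_M$, which is independent of the frozen verdicts since no probability model is placed on $\mathbf y_M$. Fix a $K$-subset $S\subseteq\{1,\ldots,M\}$. The event $\{S_{M,K}=S\}$ occurs exactly when $\Pi_M(1),\ldots,\Pi_M(K)$ is some ordering of $S$ and the remaining positions are some ordering of its complement. There are $K!\,(M-K)!$ such permutations, so
\[
\Prob(S_{M,K}=S\mid\mathbf y_M)=\frac{K!(M-K)!}{M!}=\binom{M}{K}^{-1}.
\]

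\emph{Step 2: $H_{M,K}$ depends only on the subset.} Because $H_{M,K}$ is a sum, it does not depend on the order within the panel:
\[
H_{M,K}=\sum_{i\in S_{M,K}}y_{M,i}=|S_{M,K}\cap P|,
\qquad P=\{i:y_{M,i}=1\},\quad |P|=C_M .
\]

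\emph{Step 3: count the favourable subsets.} For a feasible integer $h$, meaning $\max(0,K-(M-C_M))\le h\le\min(K,C_M)$, the $K$-subsets with $|S\cap P|=h$ are obtained by choosing $h$ indices from $P$ and $K-h$ indices from its complement. This is a bijection onto pairs of such choices, so there are $\binom{C_M}{h}\binom{M-C_M}{K-h}$ of them. Summing the uniform mass from Step 1 over these subsets gives the stated probability mass function.

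\emph{Step 4: feasibility and normalization.} For infeasible $h$ the binomial coefficients vanish under the usual convention, so the formula also assigns zero there. As a consistency check, the probabilities over feasible $h$ sum to one by Vandermonde's identity.

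There is no real obstacle here. The one point that needs care is stating explicitly that the conditioning on $\mathbf y_M$ is legitimate: $\Pi_M$ must be independent of the frozen census, so that conditioning does not alter its uniform law. No independence among the individual votes is used anywhere in the argument.
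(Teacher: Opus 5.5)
Your proposal is correct and follows essentially the same route as the paper, which also notes that the first $K$ entries of a uniform permutation form a uniform $K$-subset and then counts the subsets that choose $h$ of the $C_M$ positives and $K-h$ of the $M-C_M$ negatives. Your added details spell out what the paper's three-line proof leaves implicit: the $K!(M-K)!/M!$ computation, the order-invariance of $H_{M,K}$, and the independence of $\Pi_M$ from the frozen census.
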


\begin{proof}
The first $K$ entries of a uniform permutation form a uniform $K$-subset.  There are $\binom{M}{K}$ such subsets.  A subset with exactly $h$ positive entries chooses $h$ of the $C_M$ positives and $K-h$ of the $M-C_M$ negatives, giving the numerator.
\end{proof}

\begin{definition}[Panel--Census Error]\label{def:panel-census-error}
The exact conditional error is
\[
e_{M,K}(T)
=
\Prob\bigl(D_{M,K}(T)\neq D_M(T)\mid\mathbf y_M(X,T)\bigr).
\]
\end{definition}

By Theorem~\ref{thm:hypergeometric},
\[
e_{M,K}(T)=
\begin{cases}
\displaystyle
\sum_{h=0}^{q_K-1}
\frac{\binom{C_M}{h}\binom{M-C_M}{K-h}}{\binom{M}{K}},
& D_M(T)=1,\\[2.0ex]
\displaystyle
\sum_{h=q_K}^{K}
\frac{\binom{C_M}{h}\binom{M-C_M}{K-h}}{\binom{M}{K}},
& D_M(T)=0,
\end{cases}
\]
where infeasible terms are zero.  Thus no normal approximation or Monte Carlo simulation is required to certify a uniform finite panel once $C_M$ is known.

\begin{proposition}[Mean and Finite-Population Correction]\label{prop:finite-moments}
Let $\widehat\mu_{M,K}=H_{M,K}/K$.  Conditional on the census,
\[
\E[\widehat\mu_{M,K}\mid\mathbf y_M]=\mu_M
\]
and, for $M>1$,
\[
\Var(\widehat\mu_{M,K}\mid\mathbf y_M)
=
\frac{\mu_M(1-\mu_M)}{K}\,
\frac{M-K}{M-1}.
\]
\end{proposition}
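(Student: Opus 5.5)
The plan is to write $H_{M,K}$ as a sum of indicator-weighted draws and compute its first two conditional moments directly. Conditional on $\mathbf y_M$, the only randomness is the uniform permutation $\Pi_M$, so
\[
H_{M,K}=\sum_{i=1}^M y_{M,i}\,\ind[i\in S_{M,K}],
\]
with the $y_{M,i}$ treated as fixed constants. Theorem~\ref{thm:hypergeometric} already gives the hypergeometric law. One could simply quote the standard hypergeometric mean $KC_M/M$ and variance $K\frac{C_M}{M}\frac{M-C_M}{M}\frac{M-K}{M-1}$. The indicator computation is short, though, and it avoids an appeal to external tables, so I will carry it out.

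First I need the inclusion probabilities of a uniform $K$-subset. For a single unit, $\Prob(i\in S_{M,K})=K/M$. For a pair $i\neq j$,
\[
\Prob(i,j\in S_{M,K})=\binom{M-2}{K-2}\Big/\binom{M}{K}=\frac{K(K-1)}{M(M-1)}.
\]
Linearity then gives $\E[H_{M,K}\mid\mathbf y_M]=KC_M/M$, and dividing by $K$ yields $\E[\widehat\mu_{M,K}\mid\mathbf y_M]=\mu_M$.

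For the variance, write $I_i=\ind[i\in S_{M,K}]$, so that
\[
\Var(H_{M,K}\mid\mathbf y_M)=\sum_i y_i^2\Var(I_i)+\sum_{i\neq j}y_iy_j\Cov(I_i,I_j).
\]
Here $\Var(I_i)=\frac{K}{M}\bigl(1-\frac{K}{M}\bigr)$, and
\[
\Cov(I_i,I_j)=\frac{K(K-1)}{M(M-1)}-\frac{K^2}{M^2}=-\frac{K(M-K)}{M^2(M-1)}.
\]
Because $y_i\in\{0,1\}$, we have $\sum_i y_i^2=C_M$ and $\sum_{i\neq j}y_iy_j=C_M(C_M-1)$. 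Collecting terms gives
\[
\frac{K(M-K)}{M^2}\,C_M\Bigl[1-\frac{C_M-1}{M-1}\Bigr]
=\frac{K(M-K)}{M^2}\,\frac{C_M(M-C_M)}{M-1}.
\]
Dividing by $K^2$ and substituting $\mu_M=C_M/M$ produces
\[
\frac{\mu_M(1-\mu_M)}{K}\,\frac{M-K}{M-1}.
\]

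The only delicate point is the bookkeeping in the covariance algebra. This is where the factor $(M-K)/(M-1)$ comes from, and I will double-check it against the endpoint cases. At $K=M$ the variance should be zero, since the panel is then the full census. At $K=1$ it should reduce to $\mu_M(1-\mu_M)$, the variance of a single draw. The hypothesis $M>1$ is needed only so that the denominator $M-1$ is nonzero.
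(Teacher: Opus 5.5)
Your proof is correct and is essentially the paper's argument: the paper also writes $H_{M,K}=\sum_i I_i y_{M,i}$ with $I_i=\ind[i\in S_{M,K}]$, and uses $\E I_i=K/M$ and $\Cov(I_i,I_j)=-K(M-K)/(M^2(M-1))$ for $i\neq j$ before substituting. You spell out the pair-inclusion probability and the variance algebra that the paper leaves implicit, and that algebra checks out.
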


\begin{proof}
Write $I_i=\ind[i\in S_{M,K}]$.  Then $\E I_i=K/M$ and
\[
\Cov(I_i,I_j)=-\frac{K(M-K)}{M^2(M-1)},\qquad i\neq j.
\]
Substitution in $H_{M,K}=\sum_i I_i y_{M,i}$ gives the stated hypergeometric moments.
\end{proof}

For $M=1500$, the standard-deviation correction relative to independent sampling is
\[
\sqrt{\frac{1500-K}{1499}}.
\]
It is approximately $0.997$, $0.984$, and $0.966$ for $K=10,50,100$, respectively, and equals zero at the full census.  A binomial calculation may therefore be numerically close for a small sampling fraction while still answering a different conditional question.

\subsection{Finite-Sample Certificates}

Serfling's inequality gives a distribution-free finite-population bound~\cite{serfling1974probability}.  We state the binary specialization.

\begin{theorem}[Panel--Census Concentration]\label{thm:serfling-panel}
If $\gamma_M(T)>0$, then
\[
e_{M,K}(T)
\leq
\exp\!\left(
-\frac{2K\gamma_M(T)^2}{1-(K-1)/M}
\right).
\]
\end{theorem}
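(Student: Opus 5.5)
The plan is to reduce the statement to Serfling's one-sided inequality for sampling without replacement~\cite{serfling1974probability}. I use it in the following form. Let $x_1,\ldots,x_M\in[0,1]$ with mean $\mu$, and let $\bar x_K$ be the mean of a uniform $K$-subset. Then for every $t>0$,
\[
\Prob(\bar x_K-\mu\geq t)\leq\exp\!\left(-\frac{2Kt^2}{1-(K-1)/M}\right).
\]
I will apply it to the frozen census $\mathbf y_M$. Here $\bar x_K=\widehat\mu_{M,K}=H_{M,K}/K$ and $\mu=\mu_M$, and the panel is the uniform $K$-subset $S_{M,K}$ described in Theorem~\ref{thm:hypergeometric}.

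The first step translates the integer quota into a deviation event, splitting on the census decision. Since $\gamma_M(T)>0$, exactly one of $\mu_M>\tau$ or $\mu_M<\tau$ holds.

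In the case $D_M(T)=1$, we have $\mu_M=\tau+\gamma_M$. The error event is $\{H_{M,K}<q_K\}$. Because $H_{M,K}$ is an integer and $q_K=\lceil\tau K\rceil$, this event equals $\{H_{M,K}<\tau K\}$, which is $\{\widehat\mu_{M,K}-\mu_M<-\gamma_M\}$. That event is contained in the lower-tail event at deviation $\gamma_M$.

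In the case $D_M(T)=0$, we have $\mu_M=\tau-\gamma_M$. The error event is $\{H_{M,K}\geq q_K\}$, which is contained in $\{H_{M,K}\geq\tau K\}$, that is, in $\{\widehat\mu_{M,K}-\mu_M\geq\gamma_M\}$.

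The inclusive tie convention enters only here. It makes the accept-side error event a strict inequality, so in both cases we get containment and never need the reverse direction.

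The second step applies Serfling's bound with $t=\gamma_M$. The upper tail follows directly. For the lower tail, I apply the same inequality to the complemented census $1-y_{M,i}$, which still lies in $[0,1]$ and has the same range $b-a=1$. Its sample mean deviates upward by exactly the amount the original deviates downward. In both cases the result is $e_{M,K}(T)\leq\exp(-2K\gamma_M^2/(1-(K-1)/M))$.

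The boundary case $K=M$ is consistent. There the panel is the whole census, so $e_{M,M}=0$, and the bound is positive, so nothing extra is required.

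The main obstacle is only bookkeeping. First, one must check that the version of Serfling being quoted has the factor $1-(K-1)/M$ rather than $1-K/M$. The first is Serfling's $f_K^*$; the second would be a stronger, differently derived bound. Second, one must confirm that the quoted form covers a non-strict tail event, $\geq t$. If the cited form is stated for strict tails, I would pass to the limit $t\uparrow\gamma_M$ using continuity of the exponential, which yields the same bound.
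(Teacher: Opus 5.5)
Your proof is correct and follows the paper's argument: the integer quota $q_K=\lceil\tau K\rceil$ turns each error event into a one-sided deviation of $\widehat\mu_{M,K}$ from $\mu_M$ of size at least $\gamma_M$, and the one-sided Serfling inequality for a $[0,1]$-valued population then gives the bound. Your complementation $y\mapsto 1-y$ for the lower tail and your remark on strict versus non-strict tails only spell out what the paper compresses into ``apply the corresponding one-sided Serfling inequality.''
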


\begin{proof}
If $D_M=1$ and $\mu_M>\tau$, an error implies
$\widehat\mu_{M,K}-\mu_M\leq-\gamma_M$.
If $D_M=0$, an error implies
$\widehat\mu_{M,K}-\mu_M\geq\gamma_M$.
Indeed, because $H_{M,K}$ is integer,
$H_{M,K}<q_K=\lceil\tau K\rceil$ implies
$\widehat\mu_{M,K}<\tau$, whereas $H_{M,K}\geq q_K$ implies
$\widehat\mu_{M,K}\geq\tau$.
Apply the corresponding one-sided Serfling inequality to a population in $[0,1]$.  The case $\mu_M=\tau$ is excluded because then $\gamma_M=0$.
\end{proof}

Dropping the finite-population improvement gives the conservative sufficient condition
\[
K\geq\frac{\log(1/\delta)}{2\gamma_M(T)^2}
\quad\Longrightarrow\quad
e_{M,K}(T)\leq\delta,
\]
provided $K\leq M$.  Exact hypergeometric inversion is preferable whenever $M$ and $C_M$ are available.

\begin{definition}[Minimum Certified Panel Size]\label{def:kmin}
For an error target $\delta\in(0,1)$,
\[
K_{\min}(T;M,\delta)
=
\min\{1\leq K\leq M:e_{M,K}(T)\leq\delta\}.
\]
\end{definition}

Because integer quotas and tie conventions can create parity effects, $e_{M,K}$ need not decrease at every consecutive $K$.  Definition~\ref{def:kmin} uses the exact sequence rather than assuming monotonicity.  If a deployment requires that every larger nested panel also meet the target, it should instead use
\[
K_{\mathrm{stable}}(T;M,\delta)
=
\min\{K:e_{M,k}(T)\leq\delta\text{ for every }k=K,\ldots,M\}.
\]

Figure~\ref{fig:finite-error} in the main text compares these exact tails with
their small-sampling-fraction binomial approximations.

\paragraph{Scope of the finite result.}
The exact law certifies agreement with the specified census.  It does not by itself certify that the census represents a larger population, that the population is externally correct, or that a new evaluation campaign under changed conditions will reproduce the same frozen vector.  These are separate layers developed in Supplement~\ref{supp:functional-limits} and Appendix~\ref{sec:population-limits}.

\section{Task-Specific Population Limits}\label{sec:population-limits}

The phrase ``all possible interpreters'' does not define a probability distribution.  An ADS must declare how a finite evaluator design grows and, if an infinite idealization is used, which limit that design is intended to approximate.  The declaration may depend on the problem $X$---for example, legal, mathematical, and medical problems may require different evaluator populations---but it must not be chosen retrospectively to favor a submitted resolution.

In this section, $\nu_X$ denotes a task-specific law on the complete episode space $\calZ$.  It specializes the generic episode law $\rho$ from Section~\ref{sec:model}; equivalently, it can be built from task-specific configuration weights and the declared private-randomness law.

\subsection{Growing Finite Designs}

Fix a problem $X$ and let $M_1<M_2<\cdots$ tend to infinity.  For each $n$, let
\[
\census_{X,n}=(z_{n,1},\ldots,z_{n,M_n})
\]
be a declared finite collection of complete evaluator episodes.  Repeated configurations are allowed when their assigned weights require them.  Its empirical design measure is
\[
\nu_{X,n}=\frac1{M_n}\sum_{i=1}^{M_n}\delta_{z_{n,i}}.
\]
For resolution $T$, define
\[
\mu_{X,n}(T)
=
\int_{\calZ}Y(z;X,T)\,d\nu_{X,n}(z)
=
\frac1{M_n}\sum_{i=1}^{M_n}Y(z_{n,i};X,T).
\]
The associated finite-census decision is
\[
D_{X,M_n}(T)=\ind[\mu_{X,n}(T)\geq\tau].
\]
A nested expansion study may require each collection to be a prefix of one precommitted evaluator design.  The convergence results below only require the displayed sequence of empirical measures; nesting is an additional experimental constraint, not a hidden mathematical assumption.

\begin{definition}[Admissible Limiting Population]\label{def:admissible-limit}
For a declared class of resolutions $\calT_X$, the sequence $(\nu_{X,n})$ is ADS-admissible with limit $\nu_X$ if $\nu_X$ is a probability measure on $(\calZ,\mathfrak A)$ and
\[
\mu_{X,n}(T)
\longrightarrow
\mu_X(T)
:=
\int_{\calZ}Y(z;X,T)\,d\nu_X(z)
\]
for every $T\in\calT_X$.
\end{definition}

Weak convergence $\nu_{X,n}\Rightarrow\nu_X$ alone is not always sufficient because the binary map $z\mapsto Y(z;X,T)$ may be discontinuous.  Definition~\ref{def:admissible-limit} therefore states the integral convergence actually needed by the decision system.  Weak convergence plus $\nu_X$-almost-everywhere continuity of the vote map is one sufficient route.

\begin{definition}[Limiting Decision]\label{def:limiting-decision}
For an admissible population and inclusive threshold $\tau$,
\[
D_{X,\infty}(T)=\ind[\mu_X(T)\geq\tau],
\qquad
\gamma_X(T)=|\mu_X(T)-\tau|.
\]
\end{definition}

\begin{theorem}[Deterministic Census Consistency]\label{thm:deterministic-consistency}
If $(\nu_{X,n})$ is ADS-admissible and $\gamma_X(T)>0$, then there exists $n_0(T)$ such that
\[
D_{X,M_n}(T)=D_{X,\infty}(T)
\qquad\text{for every }n\geq n_0(T).
\]
\end{theorem}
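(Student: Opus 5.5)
The plan is to reduce the claim to elementary convergence of real numbers, using only Definition~\ref{def:admissible-limit} and the strict positivity of the limiting clarity. I will split according to the sign of $\mu_X(T)-\tau$. Since $\gamma_X(T)>0$, exactly one of $\mu_X(T)>\tau$ or $\mu_X(T)<\tau$ holds; the boundary case, where the tie convention would matter, is excluded by hypothesis.

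By admissibility, $\mu_{X,n}(T)\to\mu_X(T)$. I will take $\varepsilon=\gamma_X(T)>0$ in the definition of convergence, which gives an index $n_0(T)$ such that $|\mu_{X,n}(T)-\mu_X(T)|<\gamma_X(T)$ for every $n\ge n_0(T)$.

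In the first case, $\mu_X(T)=\tau+\gamma_X(T)$, so $\mu_{X,n}(T)>\mu_X(T)-\gamma_X(T)=\tau$. Hence $D_{X,M_n}(T)=\ind[\mu_{X,n}(T)\ge\tau]=1=D_{X,\infty}(T)$. In the second case, $\mu_X(T)=\tau-\gamma_X(T)$, so $\mu_{X,n}(T)<\tau$. Hence $D_{X,M_n}(T)=0=D_{X,\infty}(T)$, because the finite-census decision uses the same inclusive threshold as in Definition~\ref{def:finite-decisions}.

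I expect no real obstacle. The one point needing care is the choice of a strict inequality $|\mu_{X,n}-\mu_X|<\gamma_X$: it makes the finite-census mean land strictly on the correct side of $\tau$, so the inclusive convention at an exact finite tie never enters. I will also remark that $n_0(T)$ depends on $T$ through both the convergence rate and $\gamma_X(T)$. For that reason the statement is pointwise rather than uniform over $\calT_X$, and the argument fails without the margin condition. This is why rates are deferred to the later quantitative results such as Proposition~\ref{prop:finite-to-ideal-rate}.
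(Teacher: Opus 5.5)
Your proof is correct and is essentially the paper's argument: choose $n_0(T)$ so that $|\mu_{X,n}(T)-\mu_X(T)|<\gamma_X(T)$, which places $\mu_{X,n}(T)$ strictly on the same side of $\tau$ as $\mu_X(T)$. Your explicit case split spells out the paper's one-line ``same side of $\tau$'' step. Your remark that the strict inequality keeps the inclusive tie convention out of play is the same point the paper makes when it mentions $\gamma/2$.
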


\begin{proof}
Let $\gamma=|\mu_X(T)-\tau|>0$.  By convergence, eventually
$|\mu_{X,n}(T)-\mu_X(T)|<\gamma$.  Hence $\mu_{X,n}(T)$ and $\mu_X(T)$ lie on the same side of $\tau$; with a strict inequality one may use $\gamma/2$ to avoid the boundary explicitly.
\end{proof}

Convergence alone supplies no numerical value of $n_0$.  A finite census can be certified against the limit only after a rate or error envelope has been supplied.

\begin{proposition}[Rate-Based Limit Certificate]\label{prop:rate-certificate}
Suppose a justified bound
\[
|\mu_{X,n}(T)-\mu_X(T)|\leq b_{X,n}(T)
\]
is available.  If the observable census margin satisfies
\[
|\mu_{X,n}(T)-\tau|>b_{X,n}(T),
\]
then $D_{X,M_n}(T)=D_{X,\infty}(T)$ and
\[
\gamma_X(T)
\geq
|\mu_{X,n}(T)-\tau|-b_{X,n}(T)>0.
\]
\end{proposition}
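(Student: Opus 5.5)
The argument is a direct triangle-inequality comparison: we show that $\mu_{X,n}(T)$ and $\mu_X(T)$ lie strictly on the same side of $\tau$, and that their gap from $\tau$ shrinks by at most $b_{X,n}(T)$. Write $m_n=\mu_{X,n}(T)$, $m=\mu_X(T)$ and $b=b_{X,n}(T)$. By hypothesis $|m_n-m|\leq b$ and $|m_n-\tau|>b\geq0$. The margin hypothesis gives $m_n\neq\tau$, so there are two cases.

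\emph{Case $m_n>\tau$.} Then $m_n-\tau>b$. The rate bound gives $m\geq m_n-b$, so
\[
m-\tau\geq (m_n-\tau)-b>0 .
\]
Hence $m>\tau$, so $D_{X,\infty}(T)=1=D_{X,M_n}(T)$, and $\gamma_X(T)=m-\tau\geq|m_n-\tau|-b>0$.

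\emph{Case $m_n<\tau$.} Then $\tau-m_n>b$. The rate bound gives $m\leq m_n+b$, so
\[
\tau-m\geq(\tau-m_n)-b>0 .
\]
Hence $m<\tau$, so both decisions equal zero, and $\gamma_X(T)=\tau-m\geq|m_n-\tau|-b>0$.

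The only point needing care is the inclusive threshold convention. In both cases the strict inequality $|m_n-\tau|>b$ places $m$ strictly on the correct side of $\tau$, so the limiting decision never falls on the tie boundary and the inclusive rule in Definition~\ref{def:limiting-decision} causes no ambiguity. No other obstacle is expected; the argument is the quantitative version of Theorem~\ref{thm:deterministic-consistency}, with the unspecified $n_0$ replaced by the observable margin condition.
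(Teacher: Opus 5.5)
Your proof is correct and takes essentially the same approach as the paper. The paper applies the reverse triangle inequality $|\mu_X(T)-\tau|\geq|\mu_{X,n}(T)-\tau|-|\mu_{X,n}(T)-\mu_X(T)|>0$ and observes that $[\mu_{X,n}(T)-b_{X,n}(T),\mu_{X,n}(T)+b_{X,n}(T)]$ contains $\mu_X(T)$ and lies strictly on one side of $\tau$; your two-case computation establishes exactly these facts, and it also makes explicit why the inclusive tie convention never matters.
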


\begin{proof}
The reverse triangle inequality gives
\[
|\mu_X(T)-\tau|
\geq
|\mu_{X,n}(T)-\tau|-|\mu_{X,n}(T)-\mu_X(T)|>0.
\]
Moreover, the interval
\[
[\mu_{X,n}(T)-b_{X,n}(T),\mu_{X,n}(T)+b_{X,n}(T)]
\]
lies entirely on one side of $\tau$ and contains $\mu_X(T)$.  The finite and limiting means therefore induce the same threshold decision.
\end{proof}

For a family of resolutions, a uniform envelope
\[
\sup_{T\in\calT_X}|\mu_{X,n}(T)-\mu_X(T)|\leq b_{X,n}
\]
certifies every observed resolution whose census margin exceeds $b_{X,n}$.  Obtaining such a uniform envelope requires complexity assumptions on the resolution class; pointwise convergence alone does not provide it.

\subsection{Probabilistic Superpopulations}

A different construction draws $Z_1,\ldots,Z_M$ i.i.d.\ from a declared task-specific measure $\nu_X$.  Then, for fixed $T$,
\[
Y(Z_i;X,T)\overset{\mathrm{i.i.d.}}{\sim}
\operatorname{Bernoulli}(\mu_X(T)),
\qquad
C_M(T)\sim\operatorname{Binomial}(M,\mu_X(T)).
\]
The census estimator $\widehat\mu_M=C_M/M$ obeys
\[
\Prob\bigl(|\widehat\mu_M-\mu_X(T)|\geq\varepsilon\bigr)
\leq2e^{-2M\varepsilon^2}
\]
by Hoeffding's inequality~\cite{hoeffding1963probability}, and exact confidence intervals can be obtained by inverting the binomial distribution.

\begin{proposition}[Unconditional Subsample Law]\label{prop:unconditional-subsample}
Generate an i.i.d.\ census $Z_1,\ldots,Z_M\sim\nu_X$ and, independently, select a uniform $K$-subset of its distinct indices.  Unconditionally, the selected evaluator episodes are i.i.d.\ with law $\nu_X$.  Consequently their positive count is
\[
\operatorname{Binomial}(K,\mu_X(T)).
\]
Conditional on the realized census verdicts, the same count is hypergeometric as in Theorem~\ref{thm:hypergeometric}.
\end{proposition}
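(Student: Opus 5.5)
The plan is to prove the two assertions of Proposition~\ref{prop:unconditional-subsample} separately. The unconditional law follows from an exchangeability and independence argument. The conditional statement is then a direct application of Theorem~\ref{thm:hypergeometric}.

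For the unconditional part, let $S$ be the uniform $K$-subset, drawn independently of $(Z_1,\ldots,Z_M)$. Write its elements in increasing order as $i_1<\cdots<i_K$, or, equivalently and more conveniently, take the first $K$ entries of an independent uniform permutation $\Pi_M$. I will compute the joint law of $(Z_{\Pi_M(1)},\ldots,Z_{\Pi_M(K)})$ by conditioning on the independent index selection. Fix a deterministic injective tuple $(j_1,\ldots,j_K)$ of distinct indices. Then $(Z_{j_1},\ldots,Z_{j_K})$ is a subvector of the i.i.d.\ sequence and has law $\nu_X^{\otimes K}$. This holds because distinct coordinates of a product measure are independent with the common marginal $\nu_X$.

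Because the selection is independent of the census, for any measurable rectangle $A_1\times\cdots\times A_K$ we have
\[
\Prob\Bigl(\bigcap_{k}\{Z_{\Pi_M(k)}\in A_k\}\Bigr)=\sum_{(j_k)}\Prob\bigl(\Pi_M(1{:}K)=(j_k)\bigr)\prod_k\nu_X(A_k)=\prod_k\nu_X(A_k).
\]
Rectangles form a $\pi$-system generating the product $\sigma$-algebra, so the selected vector has law $\nu_X^{\otimes K}$. Applying the measurable vote map $Y(\cdot;X,T)$ coordinatewise, exactly as in Proposition~\ref{prop:joint-law}, yields i.i.d.\ Bernoulli$(\mu_X(T))$ votes. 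Their sum is therefore $\operatorname{Binomial}(K,\mu_X(T))$. The count depends on the subset only, not its order, so using an unordered subset changes nothing.

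For the conditional statement, condition on the realized verdict vector $(Y(Z_i;X,T))_{i\le M}$. Since the index selection is independent of the census, it remains a uniform $K$-subset under this conditioning. The frozen verdicts then play the role of $\mathbf y_M$, and Theorem~\ref{thm:hypergeometric} gives $\operatorname{Hypergeometric}(M,C_M,K)$.

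As a consistency check, mixing this hypergeometric law over $C_M\sim\operatorname{Binomial}(M,\mu_X)$ should recover $\operatorname{Binomial}(K,\mu_X)$. This can be verified through a Vandermonde-type identity, which also gives an alternative proof of the count statement.

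The main obstacle is the measure-theoretic step: the conditional independence of the selection must survive conditioning, with null sets handled correctly. I will avoid regular conditional distributions entirely by summing over the finitely many deterministic index tuples, since the independence assumption makes each summand factor.
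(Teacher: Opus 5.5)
Your proposal is correct and follows essentially the same route as the paper: condition on a fixed tuple of distinct indices, note that the corresponding subvector has law $\nu_X^{\otimes K}$ whichever tuple is chosen, and mix over the independent selection; then condition on the finite-valued verdict vector, under which the independent uniform selection becomes sampling without replacement from a frozen census. Your rectangle/$\pi$-system computation and the Vandermonde-type mixture check make explicit details that the paper's short proof leaves implicit.
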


\begin{proof}
Condition on any ordered tuple of $K$ distinct indices.  The corresponding coordinates of an i.i.d.\ sample are independent with common law $\nu_X$, and this conditional joint law does not depend on the chosen tuple.  Mixing over the independently selected indices leaves the product law $\nu_X^{\otimes K}$.  Conditioning instead on all realized binary coordinates fixes the total count and yields uniform sampling without replacement.
\end{proof}

This proposition resolves an apparent contradiction: the finite-population correction is a conditional statement about a realized census, while the binomial law is an unconditional statement about a randomly generated census.

At $M=1500$, the largest possible asymptotic standard error of $\widehat\mu_M$ is
\[
\frac{1}{2\sqrt{1500}}\approx0.0129.
\]
This is an approximation to estimator variability, not a universal decision guarantee.  If $\mu_X(T)$ lies arbitrarily close to $\tau$, even a large census can fall on the other side with substantial probability.  Numerical claims must therefore report the estimated clarity as well as $M$.

\subsection{Two-Level Error Accounting}

When a finite census is used as a proxy for a limiting population, two discrepancies must be separated:
\begin{enumerate}[nosep]
    \item panel versus realized census;
    \item realized census versus limiting population.
\end{enumerate}
Conditional on a realized census $\mathbf y_M$, the deterministic triangle inequality for disagreement indicators gives
\[
\Prob(D_{M,K}\neq D_{X,\infty}\mid\mathbf y_M)
\leq
e_{M,K}(T)
+
\ind[D_M\neq D_{X,\infty}].
\]
The first term is the exact conditional hypergeometric error.  The second requires either a deterministic rate such as Proposition~\ref{prop:rate-certificate} or a probabilistic model for census generation.  If the census is random, integrating the displayed bound yields the corresponding unconditional decomposition.  Neither form can be obtained from the phrase ``large population'' alone.

\section{Population and Empirical Resolvability of a Problem Family}\label{sec:resolvability}

One problem can produce many different concrete resolutions.  The acceptance mean and panel error must be calculated separately for each resolution before any aggregation across generators.  We first define the ideal quantity under the declared task-specific population and then show how growing finite censuses approximate it.

\subsection{Generator and Evaluator Populations}

Let $(\mathcal U,\mathfrak U,\pi)$ be a declared population of generator configurations and let the measurable map, with $\calT_X$ carrying its discrete sigma-algebra,
\[
\operatorname{Solve}:\mathcal U\times\{X\}\longrightarrow\calT_X
\]
produce a frozen resolution $T_u=\operatorname{Solve}(u,X)$.  A generator configuration includes model, prompt, decoding parameters, seed, tools, and environment policy.  The evaluator population $\nu_X$ is conceptually distinct from $\pi$, even when the same catalogue of LLM configurations is used in both roles.

For finite generator and evaluator censuses $\mathcal U_A=\{u_1,\ldots,u_A\}$ and $\mathcal B_M=\{z_1,\ldots,z_M\}$, write $T_a=\operatorname{Solve}(u_a,X)$ and define the global evaluation matrix
\[
\mathbf Y=(Y_{b,a})\in\{0,1\}^{M\times A},
\qquad
Y_{b,a}=Y(z_b;X,T_a).
\]
Each column is one concrete resolution; each row is one evaluator episode.  Once constructed, $\mathbf Y$ is deterministic.  If generator and evaluator catalogues overlap, self-evaluation must be excluded or reported separately according to a predeclared policy.

\subsection{Ideal Pointwise and Family-Level Definitions}

Under the probabilistic population $\nu_X$, a fresh panel of size $K$ consists of i.i.d.\ evaluator episodes.  For fixed $T$, let
\[
B_{X,K}(T)\sim\operatorname{Binomial}(K,\mu_X(T)),
\qquad
D_{X,K}(T)=\ind[B_{X,K}(T)\geq q_K].
\]

\begin{definition}[Ideal Pointwise Panel Error]\label{def:ideal-panel-error}
The panel error relative to the limiting population decision is
\[
e_{X,K}(T)
=
\Prob\bigl(D_{X,K}(T)\neq D_{X,\infty}(T)\bigr).
\]
Equivalently,
\[
e_{X,K}(T)=
\begin{cases}
\Prob(B_{X,K}(T)<q_K),&D_{X,\infty}(T)=1,\\
\Prob(B_{X,K}(T)\geq q_K),&D_{X,\infty}(T)=0.
\end{cases}
\]
\end{definition}

This is an exact binomial tail, not a CLT approximation.  Sampling without replacement from a countably infinite set has no uniform law.  The expression above instead has either of two precise interpretations: direct i.i.d.\ sampling from $\nu_X$, or the fixed-$K$ limit of uniform without-replacement panels from an admissible sequence of finite censuses, proved below.

\begin{definition}[Population Resolvability Coverage]\label{def:population-coverage}
The ideal coverage of a panel size $K$ at error tolerance $\delta$ is
\[
R_{X;K,\delta}
=
\int_{\mathcal U}
\ind[e_{X,K}(\operatorname{Solve}(u,X))\leq\delta]\,d\pi(u).
\]
The problem $X$ is $(K,\delta,\beta)$-resolvable relative to $(\pi,\nu_X,G,\tau)$ if
\[
R_{X;K,\delta}\geq1-\beta.
\]
\end{definition}

Thus a generator draw produces a resolution for which an independent $K$-evaluator panel reproduces the limiting population decision with probability at least $1-\delta$, except on a generator mass of at most $\beta$.  This definition does not average the verdicts of different resolutions.

\subsection{Finite Matrix and Empirical Coverage}

For each column $a$, let
\[
C_a=\sum_{b=1}^M Y_{b,a},
\qquad
\mu_M(T_a)=\frac{C_a}{M},
\qquad
D_M(T_a)=\ind[\mu_M(T_a)\geq\tau].
\]
Let $e_{M,K}(T_a)$ be the exact hypergeometric panel--census error from Definition~\ref{def:panel-census-error}.

\begin{definition}[Empirical Pointwise Criterion]\label{def:pointwise-resolvable}
A resolution $T$ is empirically $(K,\delta)$-resolvable relative to the declared evaluator census if
\[
e_{M,K}(T)\leq\delta.
\]
\end{definition}

\begin{definition}[Empirical Resolvability Coverage]\label{def:coverage}
For equally weighted generated resolutions,
\[
R_{K,\delta}^{(A,M)}
=
\frac1A\sum_{a=1}^A
\ind[e_{M,K}(T_a)\leq\delta].
\]
For declared generator weights $r_a\geq0$, $\sum_a r_a=1$, use
\[
R_{K,\delta}^{(r,M)}
=
\sum_{a=1}^A r_a
\ind[e_{M,K}(T_a)\leq\delta].
\]
\end{definition}

\begin{definition}[Empirically $(K,\delta,\beta)$-Resolvable Problem]\label{def:problem-resolvable}
The finite experimental problem is $(K,\delta,\beta)$-resolvable if
\[
R_{K,\delta}^{(A,M)}\geq1-\beta.
\]
Thus at least a fraction $1-\beta$ of generated resolutions can be classified by a panel of size $K$ with panel--census disagreement at most $\delta$.
\end{definition}

The definition is explicitly relative to the generator population, evaluator population, global rule, threshold, tie policy, and sampling design.  These objects are part of the ADS, not nuisance details.

Pooling all entries of $\mathbf Y$ before thresholding destroys essential information.  For example, if half the resolutions have population acceptance $0.7$ and half have $0.3$, their pooled mean is $0.5$ although every individual resolution has clarity $0.2$ at threshold $0.5$.

\subsection{An End-to-End Finite-Sample Certificate}\label{subsec:operational-certificate}

The empirical criterion above is exact relative to a fully observed evaluator census.  To infer ideal coverage $R_{X;K,\delta}$ from a finite random matrix, the uncertainty in its rows and columns must also be accounted for.  We now give a simultaneous certificate that does so without a CLT.

For $p\in[0,1]$, define the two directed binomial tails
\[
e_K^{\mathrm{acc}}(p)
=
\Prob(\operatorname{Binomial}(K,p)<q_K),
\qquad
e_K^{\mathrm{rej}}(p)
=
\Prob(\operatorname{Binomial}(K,p)\geq q_K).
\]
The first is non-increasing in $p$ and the second is non-decreasing.  For an interval $I=[L,U]\subseteq[0,1]$, set
\[
\operatorname{Cert}_{K,\delta}(I)
=
\ind\!\left[
\begin{array}{l}
L\geq\tau\ \text{and}\ e_K^{\mathrm{acc}}(L)\leq\delta,\\[-0.1em]
\hfill\text{or}\\[-0.1em]
U<\tau\ \text{and}\ e_K^{\mathrm{rej}}(U)\leq\delta.
\end{array}
\right]
\]
An interval that crosses the decision threshold is deliberately left uncertified.

\begin{lemma}[Sound Interval Certificate]\label{lem:interval-certificate}
For every resolution $T$ and interval $I$ satisfying
$\mu_X(T)\in I$,
\[
\operatorname{Cert}_{K,\delta}(I)
\leq
\ind[e_{X,K}(T)\leq\delta].
\]
\end{lemma}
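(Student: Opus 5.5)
The inequality compares two indicators, so it suffices to show that whenever $\operatorname{Cert}_{K,\delta}(I)=1$ the right-hand side also equals one. We take $I=[L,U]$ with $\mu=\mu_X(T)\in I$ and assume the certificate fires. The definition of $\operatorname{Cert}_{K,\delta}$ has two disjuncts, and we treat each in turn.

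\emph{Acceptance branch.} Suppose $L\geq\tau$ and $e_K^{\mathrm{acc}}(L)\leq\delta$. Since $\mu\geq L\geq\tau$, the population decision is $D_{X,\infty}(T)=\ind[\mu\geq\tau]=1$ under the inclusive convention. By Definition~\ref{def:ideal-panel-error}, $e_{X,K}(T)=\Prob(\operatorname{Binomial}(K,\mu)<q_K)=e_K^{\mathrm{acc}}(\mu)$. Because $e_K^{\mathrm{acc}}$ is non-increasing in $p$ and $\mu\geq L$, we get $e_K^{\mathrm{acc}}(\mu)\leq e_K^{\mathrm{acc}}(L)\leq\delta$.

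\emph{Rejection branch.} Suppose $U<\tau$ and $e_K^{\mathrm{rej}}(U)\leq\delta$. Then $\mu\leq U<\tau$, so $D_{X,\infty}(T)=0$ and $e_{X,K}(T)=e_K^{\mathrm{rej}}(\mu)$. Since $e_K^{\mathrm{rej}}$ is non-decreasing in $p$ and $\mu\leq U$, this is at most $e_K^{\mathrm{rej}}(U)\leq\delta$.

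The step that needs care is the monotonicity of the directed tails, which the text asserts without proof. I would justify it by the standard coupling: realize $\operatorname{Binomial}(K,p)$ as $\sum_{i\le K}\ind[U_i\leq p]$ with i.i.d.\ uniforms $U_i$. This count is pointwise non-decreasing in $p$, so $\Prob(\cdot\geq q_K)$ is non-decreasing and $\Prob(\cdot<q_K)$ is non-increasing.

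Two further points need checking. First, the branch conditions ($L\geq\tau$, respectively $U<\tau$) must match the inclusive threshold in $D_{X,\infty}$, and they do, so the boundary case $\mu=\tau$ is handled consistently. Second, $\mu\in I$ is used only to place $\mu$ on the same side of $\tau$ as the whole interval and to compare it with the endpoint entering the tail bound.
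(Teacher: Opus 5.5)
Your proposal is correct and follows the paper's proof essentially step for step. The paper also derives the monotonicity of $e_K^{\mathrm{acc}}$ and $e_K^{\mathrm{rej}}$ from the common-uniform coupling and then handles the two branches of $\operatorname{Cert}_{K,\delta}$ as you do. One notational fix: you use $U$ for both the interval's upper endpoint and the coupling uniforms $U_i$, so rename the latter (the paper uses $\zeta_i$).
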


\begin{proof}
Couple Bernoulli variables at parameters $p\leq p'$ by using common uniforms: $\ind[\zeta_i\leq p]\leq\ind[\zeta_i\leq p']$.  Their sums are ordered almost surely, so $e_K^{\mathrm{acc}}$ is non-increasing and $e_K^{\mathrm{rej}}$ is non-decreasing.  If the first branch of the certificate holds, every $p\in I$ satisfies $p\geq L\geq\tau$, hence the limiting target accepts and
\[
e_{X,K}(T)=e_K^{\mathrm{acc}}(p)
\leq e_K^{\mathrm{acc}}(L)\leq\delta.
\]
The second branch is symmetric: every $p\in I$ satisfies $p\leq U<\tau$, the target rejects, and
\[
e_{X,K}(T)=e_K^{\mathrm{rej}}(p)
\leq e_K^{\mathrm{rej}}(U)\leq\delta.
\]
\end{proof}

Fix a non-empty finite set $\mathcal K\subset\Naturals$ of candidate deployment-panel sizes and a pointwise tolerance $\delta$.  The set $\mathcal K$ and $\delta$ are fixed before inspecting the matrix.  Let
\[
u_1,\ldots,u_A\overset{\mathrm{i.i.d.}}{\sim}\pi,
\qquad
Z_1,\ldots,Z_M\overset{\mathrm{i.i.d.}}{\sim}\nu_X,
\]
with the two samples independent, and construct
\[
T_a=\operatorname{Solve}(u_a,X),
\qquad
C_a=\sum_{b=1}^M Y(Z_b;X,T_a).
\]
The same evaluator rows may be used in every column.  Each $Z_b$ denotes the
complete random element for row $b$ and can induce arbitrary dependence among
its cells; the requirement is that the rows are i.i.d.\ across $b$.  Given
$T_a$, each marginal count is
$\operatorname{Binomial}(M,\mu_X(T_a))$; no cross-column independence is
asserted or needed.  Random state persistent across rows is not absorbed into
private cell randomness: it must be conditioned on or incorporated into a
different sampling model.

Choose error budgets $\eta_E,\eta_G\in(0,1)$ with $\eta_E+\eta_G<1$.  For $s\in\{0,\ldots,A\}$ and $\alpha\in(0,1)$, write
\[
\ell_A(s;\alpha)
=
\begin{cases}
0, & s=0,\\
\operatorname{Beta}^{-1}(\alpha;s,A-s+1), & s\geq1,
\end{cases}
\]
for the one-sided Clopper--Pearson lower limit after $s$ successes in $A$ binomial trials~\cite{clopper1934use}.  It is non-decreasing in $s$ and satisfies
\[
\sup_{q\in[0,1]}
\Prob_{S\sim\operatorname{Binomial}(A,q)}
\bigl(q<\ell_A(S;\alpha)\bigr)
\leq\alpha.
\]
For each column, let $I_a=I_{M,\eta_E/A}(C_a)$ be any binomial confidence interval satisfying
\[
\inf_{p\in[0,1]}
\Prob_{C\sim\operatorname{Binomial}(M,p)}
\bigl(p\in I_{M,\eta_E/A}(C)\bigr)
\geq1-\frac{\eta_E}{A}.
\]
Exact Clopper--Pearson intervals are one choice; a closed-form Hoeffding alternative is given below.  Define
\[
\widehat R_{K,\delta}^{\mathrm{cert}}
=
\frac1A\sum_{a=1}^A
\operatorname{Cert}_{K,\delta}(I_a)
,
\qquad
S_{K}^{\mathrm{cert}}
=
A\widehat R_{K,\delta}^{\mathrm{cert}},
\]
and
\[
\underline R_{K,\delta}^{\mathrm{cert}}
=
\ell_A\!\left(S_K^{\mathrm{cert}};
\frac{\eta_G}{|\mathcal K|}\right).
\]

\begin{theorem}[Familywise End-to-End Resolvability Certificate]\label{thm:operational-certificate}
Under the sampling design above,
\[
\Prob\!\left(
R_{X;K,\delta}\geq\underline R_{K,\delta}^{\mathrm{cert}}
\text{ for every }K\in\mathcal K
\right)
\geq1-\eta_E-\eta_G.
\]
Consequently, any data-dependent choice $\widehat K\in\mathcal K$ satisfying
\[
\underline R_{\widehat K,\delta}^{\mathrm{cert}}\geq1-\beta
\]
certifies that $X$ is $(\widehat K,\delta,\beta)$-resolvable with confidence at least $1-\eta_E-\eta_G$.
\end{theorem}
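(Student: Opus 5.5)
We combine a union bound over the evaluator intervals with a conditional exact binomial inversion over the generator columns.

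\emph{Step 1: the evaluator-interval event.} Let $\mathcal E$ be the event that $\mu_X(T_a)\in I_a$ for every $a=1,\ldots,A$. Conditional on $u_1,\ldots,u_A$, each count $C_a$ is $\operatorname{Binomial}(M,\mu_X(T_a))$. This holds because the rows $Z_b$ are i.i.d.\ from $\nu_X$, independent of the generators, and each $Y(Z_b;X,T_a)$ is a fixed measurable binary function of $Z_b$. The coverage property of $I_{M,\eta_E/A}$ therefore gives $\Prob(\mu_X(T_a)\notin I_a\mid u_{1:A})\le\eta_E/A$. A union bound over the $A$ columns needs only these marginals, not cross-column independence, so $\Prob(\mathcal E^c)\le\eta_E$ after integrating over the generators. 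On $\mathcal E$, Lemma~\ref{lem:interval-certificate} gives
\[
\operatorname{Cert}_{K,\delta}(I_a)\le\ind[e_{X,K}(T_a)\le\delta]
\]
for every $a$ and every $K\in\mathcal K$ simultaneously. Hence
\[
S_K^{\mathrm{cert}}\le S_K^{\mathrm{true}}:=\sum_a\ind[e_{X,K}(T_a)\le\delta].
\]

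\emph{Step 2: the generator event.} For fixed $K$, the indicators $\ind[e_{X,K}(\operatorname{Solve}(u_a,X))\le\delta]$ are deterministic functions of the i.i.d.\ draws $u_a$. Their sum is therefore $\operatorname{Binomial}(A,R_{X;K,\delta})$. The Clopper--Pearson property then gives $\Prob\bigl(R_{X;K,\delta}<\ell_A(S_K^{\mathrm{true}};\eta_G/|\mathcal K|)\bigr)\le\eta_G/|\mathcal K|$. A union bound over $\mathcal K$ yields an event $\mathcal G$ with $\Prob(\mathcal G^c)\le\eta_G$ on which
\[
R_{X;K,\delta}\ge\ell_A\bigl(S_K^{\mathrm{true}};\eta_G/|\mathcal K|\bigr)\quad\text{for all }K\in\mathcal K.
\]
Note that $S_K^{\mathrm{true}}$ is unobserved, but it depends only on the generator sample, which is why the exact binomial law applies to it.

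\emph{Step 3: combining.} On $\mathcal E\cap\mathcal G$, which has probability at least $1-\eta_E-\eta_G$, monotonicity of $\ell_A$ in $s$ and Step 1 give
\[
R_{X;K,\delta}\ge\ell_A\bigl(S_K^{\mathrm{true}};\cdot\bigr)\ge\ell_A\bigl(S_K^{\mathrm{cert}};\cdot\bigr)=\underline R^{\mathrm{cert}}_{K,\delta}
\]
for all $K\in\mathcal K$ simultaneously. The selection consequence follows directly. Any $\widehat K$ chosen from the data lies in $\mathcal K$, so on this same event $R_{X;\widehat K,\delta}\ge\underline R^{\mathrm{cert}}_{\widehat K,\delta}\ge1-\beta$.

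The main obstacle is keeping the two randomness layers separate. The bound on $S^{\mathrm{cert}}$ uses the evaluator rows conditionally on the generators, while the binomial inversion uses the generator law applied to the latent count $S^{\mathrm{true}}$. The argument must run through the unobserved $S^{\mathrm{true}}$ and not treat $S^{\mathrm{cert}}$ as binomial, since shared rows make the certificates dependent. We would also check that the measurability of $e_{X,K}\circ\operatorname{Solve}$ follows from the discrete $\sigma$-algebra on $\calT_X$.
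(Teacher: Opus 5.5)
Your proposal is correct and follows essentially the same route as the paper's proof. Both arguments run a conditional union bound over the $A$ evaluator intervals, use Lemma~\ref{lem:interval-certificate} to get $S_K^{\mathrm{cert}}\le S_K^{\mathrm{true}}$ on that event, apply exact one-sided binomial inversion to the latent i.i.d.\ indicators with a union bound over $\mathcal K$, and finish with monotonicity of $\ell_A$; your point that the inversion must go through the unobserved latent count rather than the dependent $S_K^{\mathrm{cert}}$ is exactly what the paper relies on.
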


\begin{proof}
Condition on the complete generator sample.  Each interval has conditional miscoverage at most $\eta_E/A$, even though the column counts can be dependent through their shared evaluator rows.  A union bound therefore gives
\[
\Prob\bigl(\mu_X(T_a)\in I_a\text{ for every }a=1,\ldots,A\bigr)
\geq1-\eta_E.
\]
On this event, Lemma~\ref{lem:interval-certificate} gives, simultaneously for every $a$ and $K\in\mathcal K$,
\[
\operatorname{Cert}_{K,\delta}(I_a)
\leq
W_{a,K}
:=
\ind[e_{X,K}(T_a)\leq\delta].
\]

For fixed $K$, the variables $W_{1,K},\ldots,W_{A,K}$ are i.i.d.\ Bernoulli with mean $R_{X;K,\delta}$.  Put $S_K^W=\sum_a W_{a,K}$.  Exact one-sided binomial inversion gives
\[
\Prob\!\left(
R_{X;K,\delta}
<\ell_A\!\left(S_K^W;\frac{\eta_G}{|\mathcal K|}\right)
\right)
\leq\frac{\eta_G}{|\mathcal K|}.
\]
A second union bound makes this statement simultaneous over $\mathcal K$ with failure probability at most $\eta_G$.  On the evaluator event, $S_K^{\mathrm{cert}}\leq S_K^W$.  Monotonicity of $\ell_A$ therefore gives, on the intersection of the evaluator- and generator-level events,
\[
R_{X;K,\delta}
\geq
\ell_A\!\left(S_K^W;\frac{\eta_G}{|\mathcal K|}\right)
\geq
\ell_A\!\left(S_K^{\mathrm{cert}};\frac{\eta_G}{|\mathcal K|}\right)
\]
for every candidate $K$.  A final union bound gives the stated confidence, and simultaneity permits selection of $\widehat K$ from the same matrix.
\end{proof}

The theorem keeps four logically different quantities separate.  The tolerance $\delta$ controls fresh-panel disagreement for each certified resolution; $\beta$ controls the generator mass of uncertified resolutions; $\eta_E$ controls estimation of the evaluator means; and $\eta_G$ controls generalization from sampled generators.  The last two are confidence budgets for the calibration study, not runtime panel-error probabilities.

The familywise construction is strongest pointwise---every reported sampled column is sound on one common event---but its interval level $\eta_E/A$ becomes conservative when $A$ is large.  For population resolvability, it is enough to control the \emph{generator mass} of interval failures.  The next result makes that tradeoff explicit.

Fix an evaluator-slack budget $\xi_E\in(0,1)$.  For every observed column, now use the pointwise interval
\[
I_a^{\mathrm{mass}}
=
I_{M,\eta_E\xi_E}(C_a),
\]
and define
\[
\widehat R_{K,\delta}^{\mathrm{mass}}
=
\frac1A\sum_{a=1}^A
\operatorname{Cert}_{K,\delta}(I_a^{\mathrm{mass}}),
\qquad
S_K^{\mathrm{mass}}
=
A\widehat R_{K,\delta}^{\mathrm{mass}},
\]
and
\[
\underline R_{K,\delta}^{\mathrm{mass}}
=
\max\{0,
\ell_A\!\left(S_K^{\mathrm{mass}};
\frac{\eta_G}{|\mathcal K|}\right)
-\xi_E\}.
\]

The integrals used below are well defined.  Because $\calT_X$ is countable and carries its discrete sigma-algebra, the fixed-input measurability in Definition~\ref{def:interpreter-family} implies joint measurability of $(z,T)\mapsto Y(z;X,T)$: inverse images are countable unions of measurable fixed-$T$ sections.  Hence $u\mapsto\mu_X(\operatorname{Solve}(u,X))$ is measurable by composition and parameter integration.  The count, interval, and certificate maps are measurable finite-range functions, so the random masses in the proof are measurable and Tonelli's theorem applies.

\begin{proof}[Proof of Theorem~\ref{thm:mass-operational-certificate}]
Write $\mathbf Z=(Z_1,\ldots,Z_M)$.  For every generator configuration $u\in\mathcal U$, including configurations not sampled into the observed matrix, let
\[
I(u;\mathbf Z)
=
I_{M,\eta_E\xi_E}\!\left(
\sum_{b=1}^M
Y(Z_b;X,\operatorname{Solve}(u,X))
\right).
\]
Define the random generator mass whose evaluator interval misses its mean by
\[
F(\mathbf Z)
=
\int_{\mathcal U}
\ind\!\left[
\mu_X(\operatorname{Solve}(u,X))\notin I(u;\mathbf Z)
\right]d\pi(u).
\]
For each fixed $u$, binomial interval validity gives miss probability at most $\eta_E\xi_E$.  Tonelli's theorem therefore yields
\[
\E_{\mathbf Z}[F(\mathbf Z)]\leq\eta_E\xi_E,
\]
and Markov's inequality gives
\[
\Prob_{\mathbf Z}(F(\mathbf Z)>\xi_E)\leq\eta_E.
\]

For fixed $\mathbf Z$ and $K$, let
\[
Q_K(\mathbf Z)
=
\int_{\mathcal U}
\operatorname{Cert}_{K,\delta}(I(u;\mathbf Z))\,d\pi(u).
\]
The observed certificate indicators are conditionally i.i.d.\ Bernoulli with mean $Q_K(\mathbf Z)$ because the generator sample is i.i.d.\ and independent of $\mathbf Z$.  Conditional exact binomial inversion followed by a union bound over $\mathcal K$ gives, with probability at least $1-\eta_G$,
\[
Q_K(\mathbf Z)
\geq
\ell_A\!\left(S_K^{\mathrm{mass}};
\frac{\eta_G}{|\mathcal K|}\right)
\qquad\text{for every }K\in\mathcal K.
\]
For every $u$ whose interval contains its mean, Lemma~\ref{lem:interval-certificate} makes a positive certificate sound.  Hence, for every $K$,
\[
Q_K(\mathbf Z)
\leq
R_{X;K,\delta}+F(\mathbf Z).
\]
On the intersection of $F(\mathbf Z)\leq\xi_E$ and the conditional generator event,
\[
R_{X;K,\delta}
\geq
\ell_A\!\left(S_K^{\mathrm{mass}};
\frac{\eta_G}{|\mathcal K|}\right)-\xi_E
\]
simultaneously over $\mathcal K$.  The union bound over the two failure budgets and clipping at zero complete the proof.
\end{proof}

The mass-controlled theorem replaces the familywise requirement ``no sampled interval fails'' by ``the generator mass of interval failures is at most $\xi_E$.''  It remains valid under arbitrary dependence across columns induced by shared evaluator rows.  Its pointwise interval level $\eta_E\xi_E$ is independent of $A$, while the price $\xi_E$ is visible in the final coverage bound and can be allocated as part of the unresolved-mass budget $\beta$.

For a closed-form alternative at the generator layer, define
\[
\varepsilon_G
=
\sqrt{\frac{\log(|\mathcal K|/\eta_G)}{2A}}.
\]
Hoeffding's inequality makes
\[
\max\{0,\widehat R_{K,\delta}^{\mathrm{cert}}-\varepsilon_G\}
\quad\text{and}\quad
\max\{0,\widehat R_{K,\delta}^{\mathrm{mass}}-\xi_E-\varepsilon_G\}
\]
valid simultaneous lower bounds for the familywise and mass-controlled constructions, respectively.  The exact inversion above is the reported primary bound; the Hoeffding form is retained as a transparent conservative benchmark.

Both theorems extend to a predeclared finite grid $\mathcal J$ of pairs $(K,\delta)$ by replacing $|\mathcal K|$ with $|\mathcal J|$ in $\varepsilon_G$ and taking the generator-level union bound over $\mathcal J$.  This permits simultaneous reporting or data-dependent selection of both quantities.  Choosing a new grid or a new value of $\xi_E$ after inspecting the matrix requires a fresh multiplicity adjustment.

\begin{corollary}[Fresh-Resolution Deployment Error]\label{cor:deployment-error}
On the confidence event in either Theorem~\ref{thm:operational-certificate} or Theorem~\ref{thm:mass-operational-certificate}, if the corresponding lower bound is at least $1-\beta$, then an independent generator draw $u^\star\sim\pi$ followed by an independent $K$-evaluator panel satisfies
\[
\Prob\bigl(D_{X,K}(T_{u^\star})\neq D_{X,\infty}(T_{u^\star})\bigr)
\leq
(1-\beta)\delta+\beta
=
\delta+(1-\delta)\beta.
\]
\end{corollary}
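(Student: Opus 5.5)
The plan is to condition on the confidence event, so that the coverage $R_{X;K,\delta}$ becomes a deterministic number, and then average the pointwise panel error over the fresh generator draw. The fresh draw $u^\star\sim\pi$ and its $K$-evaluator panel are independent of the calibration matrix. Hence on the event supplied by Theorem~\ref{thm:operational-certificate} or Theorem~\ref{thm:mass-operational-certificate}, we have $R_{X;K,\delta}\geq\underline R_{K,\delta}\geq1-\beta$, where $\underline R_{K,\delta}$ is the corresponding lower bound. The deployment probability is taken over $(u^\star,\text{panel})$ only, with the matrix held fixed.

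The first step is to condition on $u^\star$. Given $u^\star$, the resolution $T_{u^\star}$ is frozen and the panel consists of $K$ i.i.d.\ episodes from $\nu_X$. By Definition~\ref{def:ideal-panel-error}, the conditional disagreement probability is exactly $e_{X,K}(T_{u^\star})$. Using measurability of $u\mapsto e_{X,K}(\operatorname{Solve}(u,X))$, which follows from the measurability argument preceding the proof of Theorem~\ref{thm:mass-operational-certificate} since $e_{X,K}$ is a continuous function of $\mu_X$ on each side of $\tau$, the tower property gives
\[
\Prob\bigl(D_{X,K}(T_{u^\star})\neq D_{X,\infty}(T_{u^\star})\bigr)=\int_{\mathcal U}e_{X,K}(\operatorname{Solve}(u,X))\,d\pi(u).
\]

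The second step splits this integral over the good set $\{u:e_{X,K}\leq\delta\}$, which has mass $R=R_{X;K,\delta}$, and its complement. On the good set the integrand is at most $\delta$. On the complement the integrand is a probability, so it is bounded by $1$. This gives the bound $R\delta+(1-R)=1-R(1-\delta)$. That expression is non-increasing in $R$ because $\delta\leq1$, so substituting $R\geq1-\beta$ yields
\[
1-(1-\beta)(1-\delta)=(1-\beta)\delta+\beta=\delta+(1-\delta)\beta.
\]

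There is no real obstacle here. The only point needing care is that the confidence event is a statement about the calibration sample, so the bound holds conditionally on that sample. It is not an unconditional probability statement over calibration and deployment jointly. The needed measurability has already been established in the paper.
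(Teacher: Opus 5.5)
Your proof is correct and follows the paper's argument. You integrate the pointwise error $e_{X,K}(T_{u^\star})$ over $\pi$, then bound it by $\delta$ on the good set, which has mass at least $1-\beta$ on the confidence event, and by one on the remaining mass. Your added details are the measurability check, the monotonicity of $1-R(1-\delta)$ in $R$, and the conditioning on the calibration sample, which also covers a data-dependent $\widehat K$. These only spell out what the paper's two-line proof leaves implicit.
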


\begin{proof}
Integrate the pointwise error $e_{X,K}(T_{u^\star})$.  It is at most $\delta$ on generator mass at least $1-\beta$ and is always at most one on the remaining mass.
\end{proof}

For a completely explicit but typically wider evaluator interval, let
\[
\widehat p_a=\frac{C_a}{M},
\qquad
b_E=\sqrt{\frac{\log(2A/\eta_E)}{2M}},
\qquad
I_a^{\mathrm H}
=
[\widehat p_a-b_E,\widehat p_a+b_E]\cap[0,1].
\]
Hoeffding's inequality and a union bound give simultaneous coverage of all $A$ evaluator means with probability at least $1-\eta_E$.  Thus $I_a^{\mathrm H}$ can be substituted directly into Theorem~\ref{thm:operational-certificate}.  Exact binomial inversion avoids some of this conservatism.

For Theorem~\ref{thm:mass-operational-certificate}, the corresponding pointwise Hoeffding radius is
\[
b_E^{\mathrm{mass}}
=
\sqrt{\frac{\log(2/(\eta_E\xi_E))}{2M}}.
\]
Unlike the familywise radius $b_E$, it does not grow with the number of generator columns.

The Hoeffding generator correction also exposes a simple sufficient experimental scale.  Even if every sampled resolution certifies, its lower bound can reach $1-\beta$ only if $\varepsilon_G\leq\beta$, for which the sufficient sizing rule is
\[
A\geq\frac{\log(|\mathcal K|/\eta_G)}{2\beta^2}.
\]
Under the mass-controlled certificate, choose $\xi_E<\beta$ and replace this by
\[
A\geq
\frac{\log(|\mathcal K|/\eta_G)}{2(\beta-\xi_E)^2}.
\]
For example, with $|\mathcal K|=8$ and $\eta_G=0.025$, this lower-limit calculation gives $A\geq513$ for $(\beta,\xi_E)=(0.10,0.025)$ and $A\geq1803$ for $(0.05,0.01)$.  These values only prevent a vacuous bound when every column certifies; a population with coverage close to the target requires additional power.
Beyond the explicit charges $\xi_E+\varepsilon_G$, power is reduced by the generator mass whose acceptance means lie within roughly one evaluator-interval radius of the pointwise certification boundary; such near-boundary columns certify only intermittently at finite $M$.
Failure to pass the certificate is inconclusive: it may reflect genuine ambiguity, too few evaluator rows, too few generator columns, or conservative simultaneous intervals.

\begin{corollary}[Finite Generator Catalogue]\label{cor:finite-generator-catalogue}
Suppose instead that the declared generator population is the fully enumerated finite law
\[
\pi=\sum_{a=1}^A r_a\delta_{u_a},
\qquad r_a\geq0,
\qquad \sum_{a=1}^A r_a=1,
\]
and every column is evaluated.  Construct the familywise intervals
$I_a=I_{M,\eta_E/A}(C_a)$.  Then, with probability at least $1-\eta_E$,
simultaneously for every $K\in\mathcal K$,
\[
R_{X;K,\delta}
\geq
\sum_{a=1}^A r_a\operatorname{Cert}_{K,\delta}(I_a).
\]
No generator-level Hoeffding correction is needed.
\end{corollary}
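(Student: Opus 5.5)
The plan is to reuse the evaluator-level half of the proof of Theorem~\ref{thm:operational-certificate}. Under a fully enumerated catalogue, the generator layer involves no sampling, so no outer inversion is needed. The only randomness is then in the evaluator rows $\mathbf Z=(Z_1,\ldots,Z_M)$, drawn i.i.d.\ from $\nu_X$. The columns $T_a=\operatorname{Solve}(u_a,X)$ are deterministic, and $R_{X;K,\delta}$ is the finite weighted sum
\[
\sum_{a=1}^A r_a\ind[e_{X,K}(T_a)\leq\delta].
\]

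\textbf{Step 1 (simultaneous coverage event).} For each fixed $a$, the count $C_a$ is $\operatorname{Binomial}(M,\mu_X(T_a))$ by Proposition~\ref{prop:joint-law}, applied to the i.i.d.\ rows. Any dependence across columns through shared rows is irrelevant here. By the defining property of $I_{M,\eta_E/A}$, we have $\Prob(\mu_X(T_a)\notin I_a)\leq\eta_E/A$. A union bound over the $A$ columns then gives an event $\mathcal E$ with $\Prob(\mathcal E)\geq1-\eta_E$ on which $\mu_X(T_a)\in I_a$ for every $a$. This step uses only marginal laws, so cross-column independence is never needed.

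\textbf{Step 2 (pointwise soundness, uniformly in $K$).} On $\mathcal E$, apply Lemma~\ref{lem:interval-certificate} to each column. This gives $\operatorname{Cert}_{K,\delta}(I_a)\leq\ind[e_{X,K}(T_a)\leq\delta]$ for every $a$ and every $K\in\mathcal K$ at once. The event $\mathcal E$ does not depend on $K$, so no multiplicity charge over $\mathcal K$ is incurred.

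\textbf{Step 3 (aggregation).} Multiply each inequality by $r_a\geq0$ and sum over $a$. Because $\pi$ places mass $r_a$ on $u_a$, the right-hand side equals $R_{X;K,\delta}$ from Definition~\ref{def:population-coverage}. The resulting bound holds on $\mathcal E$, simultaneously for all $K\in\mathcal K$, with probability at least $1-\eta_E$.

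The only point needing care is the identification of the integral in Definition~\ref{def:population-coverage} with the weighted sum. If several $u_a$ produce the same resolution, their masses simply add, and the identity still holds. Otherwise the argument is a direct specialisation of the earlier proof, with no real obstacle.
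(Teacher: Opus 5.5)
Your proposal is correct and follows the paper's own argument. The paper likewise obtains simultaneous coverage of the $A$ familywise intervals with probability at least $1-\eta_E$, applies Lemma~\ref{lem:interval-certificate} at every enumerated support point and every $K$ on that single event, and then multiplies by $r_a$ and sums to recover $R_{X;K,\delta}$; your remarks on shared-row dependence and on identifying the integral with the weighted sum spell out steps the paper leaves implicit.
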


\begin{proof}
On simultaneous coverage of the evaluator intervals, Lemma~\ref{lem:interval-certificate} holds for every enumerated support point.  Multiply its inequalities by $r_a$ and sum.
\end{proof}

A mass-controlled version for the same finite generator law uses intervals at level $\eta_E\xi_E$ and subtracts $\xi_E$ from the displayed weighted sum.  Its proof is the same Tonelli--Markov argument as Theorem~\ref{thm:mass-operational-certificate}, with no generator-sampling step.

\begin{remark}[Workload-Level Extension]\label{rem:workload-extension}
The fixed problem $X$ is not essential to either matrix theorem.  Let a population unit $u$ encode a problem, a generator configuration, and its frozen resolution, and draw $u_1,\ldots,u_A$ i.i.d.\ from a declared workload--generator law $\Pi$.  For each $u$, let $p(u)$ be the acceptance mean under its declared task-specific evaluator law and let $e_K(u)$ denote the resulting exact panel-error tail under the corresponding $(G,\tau)$.  Define the workload coverage by
\[
R_{\Pi;K,\delta}
=
\int
\ind[e_K(u)\leq\delta]\,d\Pi(u).
\]
If the evaluator design gives a $\operatorname{Binomial}(M,p(u))$ marginal count for every fixed $u$ and is independent of the outer sample, the proofs apply verbatim to this coverage.
Evaluator counts may still be dependent across units through shared rows.  This extension supports a claim over a declared task workload; without a law $\Pi$, averaging unrelated benchmark problems has no population interpretation.
\end{remark}

If the evaluator catalogue itself is the deployment target, the exact hypergeometric flags $\ind[e_{M,K}(T_a)\leq\delta]$ replace interval certificates and there is no evaluator-superpopulation confidence budget.  If a deterministic evaluator design is intended to approximate a limiting $\nu_X$, a justified envelope from Proposition~\ref{prop:rate-certificate} may instead be used as the interval
\[
I_a=
[\mu_M(T_a)-b_{X,M}(T_a),\mu_M(T_a)+b_{X,M}(T_a)]\cap[0,1].
\]
The same proofs then apply with evaluator failure budget zero.  Without either stochastic sampling from $\nu_X$ or such an approximation envelope, a finite catalogue supports only census-relative claims.

\subsection{The Finite-to-Ideal Bridge}

\begin{theorem}[Fixed-Panel Finite-to-Ideal Limit]\label{thm:finite-to-ideal}
Fix $K$ and $T$.  Let an admissible sequence of censuses satisfy
\[
\mu_{X,n}(T)\longrightarrow\mu_X(T)
\qquad\text{and}\qquad
\gamma_X(T)>0.
\]
Then, as $M_n\to\infty$,
\[
D_{X,M_n}(T)\longrightarrow D_{X,\infty}(T)
\quad\text{and}\quad
e_{M_n,K}(T)\longrightarrow e_{X,K}(T).
\]
\end{theorem}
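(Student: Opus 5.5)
The first claim, $D_{X,M_n}(T)\to D_{X,\infty}(T)$, is exactly Theorem~\ref{thm:deterministic-consistency}. Since $\gamma_X(T)>0$, there is an $n_0$ beyond which the finite-census decision equals the limiting decision, so the decision sequence is eventually constant. I will therefore assume $n\geq n_0$ throughout and fix the common decision $d=D_{X,\infty}(T)$. Then both $e_{M_n,K}(T)$ and $e_{X,K}(T)$ are tails in the same direction: $\Prob(\cdot<q_K)$ when $d=1$ and $\Prob(\cdot\geq q_K)$ when $d=0$. They are taken over the same finite index set $h\in\{0,\dots,K\}$ with the fixed quota $q_K=\lceil\tau K\rceil$. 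Because $K$ is fixed, each error is a finite sum of at most $K+1$ point probabilities. It is therefore enough to show pointwise convergence of the hypergeometric mass function to the binomial one.

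For the second step I write $C_n=M_n\mu_{X,n}(T)$ and $p=\mu_X(T)$, and use Theorem~\ref{thm:hypergeometric} in product form:
\[
\frac{\binom{C_n}{h}\binom{M_n-C_n}{K-h}}{\binom{M_n}{K}}
=\binom{K}{h}\frac{\prod_{i=0}^{h-1}(C_n-i)\prod_{j=0}^{K-h-1}(M_n-C_n-j)}{\prod_{k=0}^{K-1}(M_n-k)}.
\]
I divide numerator and denominator by $M_n^K$. Each factor $(C_n-i)/M_n=\mu_{X,n}(T)-i/M_n$ tends to $p$, each factor $(M_n-C_n-j)/M_n$ tends to $1-p$, and each denominator factor tends to $1$. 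The product of finitely many convergent factors then gives $\binom{K}{h}p^h(1-p)^{K-h}$. The formula stays valid, with zero numerator, when $h>C_n$ or $K-h>M_n-C_n$, since the falling factorial then contains a zero factor. This is why no separate feasibility case split is needed. It also requires $M_n\geq K$, which holds eventually.

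Summing the convergent point masses over the fixed error region gives $e_{M_n,K}(T)\to e_{X,K}(T)$. The only subtle point is that the error region depends on the decision, which is where $\gamma_X(T)>0$ is essential. At the boundary $\mu_X(T)=\tau$, the census decision could oscillate across $\tau$ while the tie policy fixes $D_{X,\infty}$. The finite errors would then alternate between complementary tails and need not converge. I will note this explicitly as the reason for the margin hypothesis. Everything else is routine finite-dimensional limit arithmetic.
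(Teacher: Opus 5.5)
Your proposal is correct and follows the paper's proof essentially step for step. The first claim comes from Theorem~\ref{thm:deterministic-consistency}, the falling-factorial form of the hypergeometric mass converges pointwise to the binomial mass for fixed $K$, and the finitely many masses in the common error region are summed once the finite and limiting decisions coincide. Your additional remarks (zero factors handling infeasible $h$, the requirement $M_n\geq K$, and the reason the margin hypothesis is needed) only make explicit what the paper leaves under ``the usual endpoint conventions.''
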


\begin{proof}
The first conclusion is Theorem~\ref{thm:deterministic-consistency}.  Write $C_n=M_n\mu_{X,n}(T)$.  For each fixed $h\in\{0,\ldots,K\}$, the hypergeometric mass can be written using falling factorials as
\[
\Prob(H_{M_n,K}=h)
=
\binom Kh
\frac{(C_n)_h(M_n-C_n)_{K-h}}{(M_n)_K}.
\]
Because $K$ is fixed and $C_n/M_n\to\mu_X(T)$, this converges to
\[
\binom Kh\mu_X(T)^h(1-\mu_X(T))^{K-h},
\]
with the usual endpoint conventions.  The finite census target equals the limiting target for all sufficiently large $n$, and summing the relevant finite set of masses gives the asserted convergence of error probabilities.
\end{proof}

The theorem answers the fixed-panel question for values such as $K=10,50,100$: under the declared census convergence assumption, exact without-replacement errors approach the ideal binomial error.  Convergence by itself does not provide a numerical rate.  The following certificate shows what additional information is sufficient.

\begin{proposition}[Explicit Finite-to-Ideal Error Envelope]\label{prop:finite-to-ideal-rate}
Fix $K$ and $T$, and suppose
\[
|\mu_{X,n}(T)-\mu_X(T)|\leq b_{X,n}(T)
\quad\text{and}\quad
|\mu_{X,n}(T)-\tau|>b_{X,n}(T).
\]
If
\[
K\mu_{X,n}(T)(1-\mu_{X,n}(T))\geq1,
\]
then
\[
|e_{M_n,K}(T)-e_{X,K}(T)|
\leq
\frac{K-1}{M_n-1}+K b_{X,n}(T).
\]
Without the displayed variance condition, the universally valid but usually looser replacement for the first term is
\[
1-\frac{(M_n)_K}{M_n^K},
\]
the probability of at least one repeated index in $K$ draws with replacement.
\end{proposition}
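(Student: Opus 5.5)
The plan is to reduce both error probabilities to the \emph{same} directed tail event, and then bound that tail difference by a triangle inequality through an intermediate binomial law, $\operatorname{Binomial}(K,\mu_{X,n}(T))$. Write $p_n=\mu_{X,n}(T)$, $p=\mu_X(T)$, $M=M_n$ and $C=Mp_n$.

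First I would align the targets. The two displayed hypotheses are exactly those of Proposition~\ref{prop:rate-certificate}, so $D_{X,M_n}(T)=D_{X,\infty}(T)$. Hence, by Definition~\ref{def:panel-census-error} and Definition~\ref{def:ideal-panel-error}, both errors are probabilities of the same set $E$:
\begin{itemize}
\item if the common decision is acceptance, $E=\{h:h<q_K\}$;
\item if it is rejection, $E=\{h:h\geq q_K\}$.
\end{itemize}
In either case $e_{M,K}(T)=\Prob(H_{M,K}\in E)$ with $H_{M,K}\sim\operatorname{Hypergeometric}(M,C,K)$ by Theorem~\ref{thm:hypergeometric}, and $e_{X,K}(T)=\Prob(B\in E)$ with $B\sim\operatorname{Binomial}(K,p)$. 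Inserting $B_n\sim\operatorname{Binomial}(K,p_n)$ gives
\[
|e_{M,K}(T)-e_{X,K}(T)|\leq \dTV(H_{M,K},B_n)+\dTV(B_n,B),
\]
since the difference of the probabilities of any single event is at most the total variation distance.

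The second term is the easy one. I would couple $B_n$ and $B$ through common uniforms $\zeta_1,\ldots,\zeta_K$, exactly as in the proof of Lemma~\ref{lem:interval-certificate}. The two sums can differ only if some $\zeta_i$ lies between $p_n$ and $p$. By a union bound this has probability at most $K|p_n-p|\leq Kb_{X,n}(T)$, and the coupling inequality then gives $\dTV(B_n,B)\leq Kb_{X,n}(T)$.

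The first term, hypergeometric versus binomial with the same proportion $p_n$, is the main obstacle. Here I would invoke Ehm's Stein--Chen bound, which for $Kp_n(1-p_n)>0$ reads
\[
\dTV\bigl(\operatorname{Hypergeometric}(M,C,K),\operatorname{Binomial}(K,C/M)\bigr)
\leq\frac{1-p_n^{K+1}-(1-p_n)^{K+1}}{(K+1)p_n(1-p_n)}\cdot\frac{K-1}{M-1}.
\]
The numerator of the first factor is at most one. Under the variance condition, $(K+1)p_n(1-p_n)\geq Kp_n(1-p_n)\geq1$, so the first factor is at most one and the term is at most $(K-1)/(M-1)$. The care needed here is in citing Ehm's constant correctly and checking that the variance condition excludes the degenerate cases $p_n\in\{0,1\}$, which it does.

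For the universal replacement, I would use Freedman's coupling between sampling indices with and without replacement. Draw $K$ indices i.i.d.\ uniformly from $\{1,\ldots,M\}$; the resulting positive count is exactly $\operatorname{Binomial}(K,p_n)$. Conditional on all indices being distinct, the drawn index set is a uniform $K$-subset, so the count is then hypergeometric. Hence $\dTV(H_{M,K},B_n)\leq\Prob(\text{some index repeats})=1-(M)_K/M^K$. Combining this with the second-term bound gives the stated replacement estimate.
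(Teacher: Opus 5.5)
Your decomposition is the paper's: Proposition~\ref{prop:rate-certificate} aligns the two targets, and the triangle inequality passes through $\operatorname{Binomial}(K,p_n)$. Ehm's theorem then bounds the hypergeometric--binomial distance at the common proportion $p_n$. A common-uniform coupling gives $K|p_n-\mu_X(T)|\le Kb_{X,n}(T)$; this is the paper's bound $1-(1-|p_n-\mu_X(T)|)^K$ after a union bound. The with/without-replacement index coupling gives the universal alternative. All of that is sound.

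The step that fails as written is your displayed form of Ehm's inequality. It is missing a factor $Kp_n(1-p_n)$, and as stated it is false.

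\begin{itemize}
\item Take $M=K=6$ and $C=3$. The variance condition holds, since $Kp_n(1-p_n)=3/2$. The hypergeometric count is the point mass at $3$, to which $\operatorname{Binomial}(6,1/2)$ assigns $20/64$, so the distance is $11/16$. Your right-hand side is $\frac{63/64}{7/4}\cdot 1=\frac{9}{16}$.
\item More generally, fix $p_n$ and fix $K/M$ in $(0,1)$. The two laws have equal means but variances in ratio $(M-K)/(M-1)$, so by the CLT their distance stays bounded away from zero. Your bound, by contrast, is $O\bigl(1/(Kp_n(1-p_n))\bigr)$.
\end{itemize}

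The correct bound comes from three ingredients:
\begin{itemize}
\item the identity $\E[\mathcal A g(H_{M,K})]=M^{-1}\E\bigl[H_{M,K}(K-H_{M,K})\,\Delta g(H_{M,K})\bigr]$ for the binomial Stein operator $\mathcal A g(k)=(K-k)p_n g(k+1)-k(1-p_n)g(k)$;
\item the moment $M^{-1}\E[H_{M,K}(K-H_{M,K})]=Kp_n(1-p_n)\frac{K-1}{M-1}$;
\item Ehm's Stein factor, which is the first factor in your display.
\end{itemize}
Together they give
\[
\dTV\bigl(\operatorname{Hypergeometric}(M,C,K),\operatorname{Binomial}(K,p_n)\bigr)
\le\frac{K}{K+1}\bigl(1-p_n^{K+1}-(1-p_n)^{K+1}\bigr)\frac{K-1}{M-1}
\le\frac{K-1}{M-1}.
\]

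So the conclusion $(K-1)/(M-1)$ of your step is true and the repair is one line. It just does not follow from the inequality you wrote. With the correct constant it also holds for every $p_n$; the cases $p_n\in\{0,1\}$ are trivial because both laws are the same point mass. Your use of the variance condition to absorb the Stein factor is therefore an artifact of the misquotation, not something this step needs.
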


\begin{proof}
Proposition~\ref{prop:rate-certificate} makes the finite and limiting target decisions identical.  At the empirical parameter $p_n=\mu_{X,n}(T)$, Ehm's hypergeometric-to-binomial total-variation bound gives the sampling-without-replacement application
\[
\dTV\!\left(
\operatorname{Hypergeometric}(M_n,M_np_n,K),
\operatorname{Binomial}(K,p_n)
\right)
\leq \frac{K-1}{M_n-1}
\]
under the stated variance condition~\cite{ehm1991binomial}.  For the unconditional part, couple $K$ Bernoulli$(p_n)$ variables coordinatewise with $K$ Bernoulli$(\mu_X(T))$ variables.  The two count distributions differ with probability at most
\[
1-(1-|p_n-\mu_X(T)|)^K\leq Kb_{X,n}(T).
\]
The triangle inequality gives the result.  For the universal alternative, couple $K$ uniform index draws with replacement to a uniform ordered sample without replacement; the constructions agree whenever the with-replacement indices are all distinct.
\end{proof}

For $M=1500$, whenever the displayed variance condition holds, the Ehm term is approximately $0.0060$, $0.0327$, and $0.0660$ for $K=10,50,100$, respectively.  These are worst-case distributional envelopes at the same empirical mean, not the actual tail errors, which remain exactly computable.  The second term shows why a quantitative finite-to-infinite claim still needs a justified bound on the census approximation $b_{X,n}(T)$.

\begin{corollary}[Finite Experimental Coverage]\label{cor:finite-coverage}
Fix resolutions $T_1,\ldots,T_A$ with $\gamma_X(T_a)>0$ and $e_{X,K}(T_a)\neq\delta$ for every $a$.  Along an admissible evaluator-census sequence,
\[
R_{K,\delta}^{(A,M_n)}
\longrightarrow
\frac1A\sum_{a=1}^A\ind[e_{X,K}(T_a)\leq\delta].
\]
\end{corollary}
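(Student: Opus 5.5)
The plan is to reduce the statement to a column-by-column limit and then average over the finitely many fixed columns. By Definition~\ref{def:coverage},
\[
R_{K,\delta}^{(A,M_n)}=\frac1A\sum_{a=1}^A\ind[e_{M_n,K}(T_a)\leq\delta].
\]
Since $A$ is fixed, it suffices to show that for each $a$ the indicator $\ind[e_{M_n,K}(T_a)\leq\delta]$ eventually equals $\ind[e_{X,K}(T_a)\leq\delta]$. The finite average is then eventually constant and equal to the stated limit.

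For a single column, I would invoke Theorem~\ref{thm:finite-to-ideal}. Its hypotheses hold: $K$ is fixed, the census sequence is admissible, so $\mu_{X,n}(T_a)\to\mu_X(T_a)$, and $\gamma_X(T_a)>0$ is assumed. The theorem then gives two facts. First, the finite census decision eventually agrees with $D_{X,\infty}(T_a)$. Second, $e_{M_n,K}(T_a)\to e_{X,K}(T_a)$. Here $e_{M_n,K}$ is understood relative to the census decision $D_{X,M_n}$, which is the quantity in Definition~\ref{def:pointwise-resolvable}.

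The one real point is that the map $x\mapsto\ind[x\leq\delta]$ is discontinuous at $x=\delta$. This is exactly why the hypothesis $e_{X,K}(T_a)\neq\delta$ is needed. Set $\varepsilon_a=|e_{X,K}(T_a)-\delta|>0$. Convergence gives an index $n_a$ such that $|e_{M_n,K}(T_a)-e_{X,K}(T_a)|<\varepsilon_a$ for all $n\geq n_a$. For those $n$, $e_{M_n,K}(T_a)$ lies strictly on the same side of $\delta$ as $e_{X,K}(T_a)$, so the two indicators coincide.

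Taking $n\geq\max_a n_a$, which is a finite maximum, makes every term of the sum agree with its limit, and the conclusion follows. The only step that deserves care is this boundary exclusion at $\delta$. Without it, an error sequence approaching $\delta$ from above would keep the indicator at $0$ while the limiting indicator is $1$. I would add a remark that both the hypothesis $\gamma_X(T_a)>0$ and the hypothesis $e_{X,K}(T_a)\neq\delta$ are genuinely needed.
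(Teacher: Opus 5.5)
Your proposal is correct and follows essentially the same route as the paper: apply Theorem~\ref{thm:finite-to-ideal} to each of the finitely many columns to get $e_{M_n,K}(T_a)\to e_{X,K}(T_a)$, use $e_{X,K}(T_a)\neq\delta$ to make each indicator eventually constant, and average. Your explicit choice of $\varepsilon_a=|e_{X,K}(T_a)-\delta|$ and of $n\geq\max_a n_a$ simply spells out the step the paper states in one line.
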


\begin{proof}
Theorem~\ref{thm:finite-to-ideal} gives convergence of every one of the finitely many error values.  The assumption that no limiting error equals the certification boundary makes each indicator eventually constant.  Average the indicators.
\end{proof}

If $a_1,\ldots,a_A$ are themselves drawn i.i.d.\ from $\pi$ and the ideal errors are available, the right-hand empirical coverage is an average of i.i.d.\ binary variables with mean $R_{X;K,\delta}$.  Hoeffding therefore gives deviation probability at most $2e^{-2A\varepsilon^2}$.  When ideal errors are replaced by finite-census estimates, Corollary~\ref{cor:finite-coverage} identifies the additional approximation layer and its boundary condition.

Figure~\ref{fig:resolvability} illustrates the resulting pointwise-to-family
workflow on synthetic values.

\begin{figure}[ht]
\centering
\includegraphics[alt={Synthetic resolvability dashboard showing task clarity, exact minimum panel sizes, cumulative coverage, and simultaneous lower coverage bounds.},width=\textwidth]{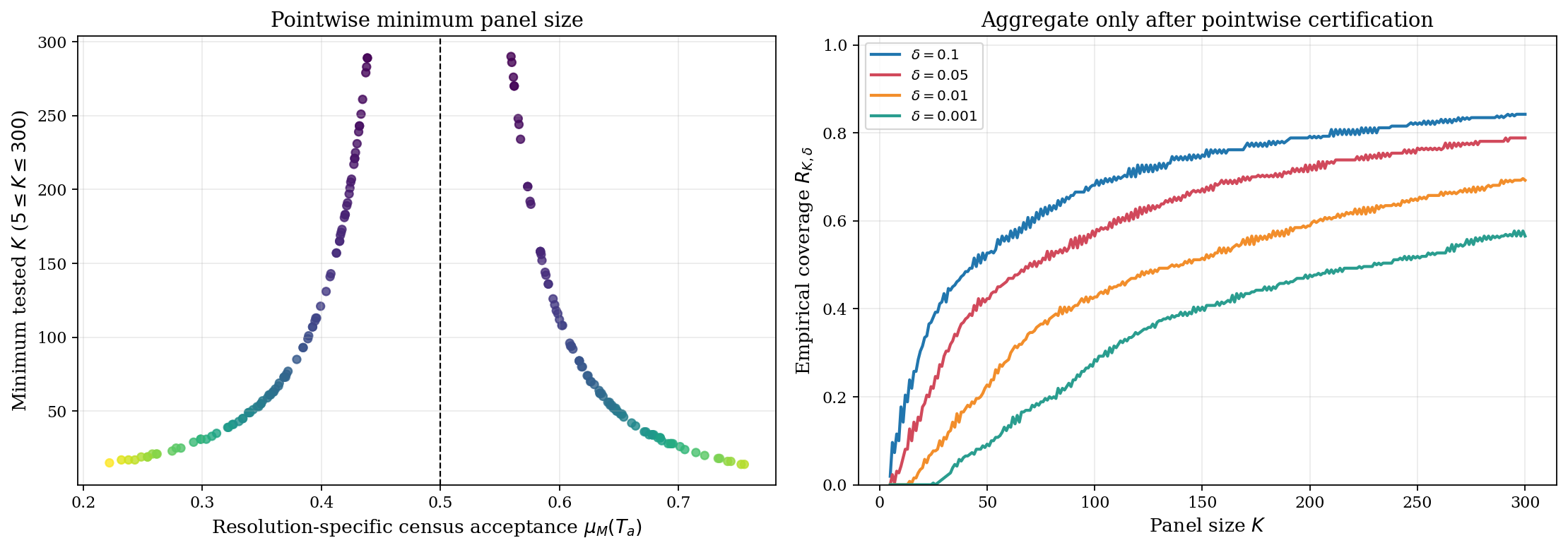}
\caption{Illustrative empirical resolvability analysis.  Each resolution has its own census acceptance rate and exact minimum panel size.  Family-level coverage is obtained only after the pointwise errors have been calculated.  The synthetic values illustrate the protocol and are not empirical claims about a deployed LLM population.}
\label{fig:resolvability}
\end{figure}

\paragraph{Robustness to evaluator population choice.}
The composition of a real LLM population is contestable.  Let
\[
\mathcal N_X=\{\nu_X^{(1)},\ldots,\nu_X^{(L)}\}
\]
be a predeclared collection of plausible evaluator weightings or finite census designs.  A conservative ADS may require the $(K,\delta,\beta)$ criterion to hold for every $\nu_X^{(\ell)}$, and may separately require that the limiting decision itself be invariant across the collection.  Failure of decision invariance is reported as population dependence, not hidden by averaging the populations.
The calibration data must support each claimed law.  In particular, rows drawn
from one $\nu_X^{(0)}$ do not automatically have binomial marginals under
$\nu_X^{(\ell)}$.  Plug-in reweighting of the same matrix is a useful
descriptive sensitivity check, but it is not a certificate for an alternative
population unless the design supplies valid stratified, importance-weighted, or
new-sample inference.

\section{Non-Adaptive Corruption as Population Contamination}\label{sec:corruption}

The adversarial model is placed at the sampling interface.  We begin with an honest frozen census and allow an adversary to replace a bounded number of its eligible entries before the uniform random ordering is drawn.  The adversary may know the problem, resolution, honest census, global rule, and threshold, but it does not observe or influence the subsequent random permutation.  This is a non-adaptive contamination model in the sense of robust statistics~\cite{huber1964robust,huber2009robust}.

\begin{definition}[Non-Adaptive Census Corruption]\label{def:nonadaptive-corruption}
Let $\mathbf y_M\in\{0,1\}^M$ be the honest global-verdict census and let $b\in\{0,1,\ldots,M\}$ be an integer budget.  A $b$-corruption is a fixed vector $\widetilde{\mathbf y}_M\in\{0,1\}^M$ chosen before $\Pi_M$ such that
\[
d_H(\mathbf y_M,\widetilde{\mathbf y}_M)\leq b,
\]
where $d_H$ is Hamming distance.  Write
\[
\widetilde C_M=\sum_i\widetilde y_{M,i},
\qquad
\widetilde\mu_M=\widetilde C_M/M.
\]
\end{definition}

The corruption may represent replaced evaluators, falsified global votes, or an altered eligible population.  It does not model an adversary that sees the sampled panel and then selects whom to corrupt; that adaptive problem has a different probability space.

\subsection{Census-Level Robustness}

\begin{proposition}[Clarity Is a Finite-Population Robustness Radius]\label{prop:finite-robustness}
Every $b$-corruption satisfies
\[
|\widetilde\mu_M-\mu_M|\leq\frac bM.
\]
If
\[
\frac bM<\gamma_M=|\mu_M-\tau|,
\]
then the corrupted and honest census decisions agree:
\[
\widetilde D_M=D_M.
\]
\end{proposition}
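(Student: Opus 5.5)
The first claim follows from a counting argument. The second reduces to a same-side argument like the one in Proposition~\ref{prop:rate-certificate}.

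\textbf{Step 1: the mean bound.} Flipping one coordinate of a binary vector changes the sum by at most one. Hence
\[
|\widetilde C_M-C_M|=\Bigl|\sum_{i=1}^M(\widetilde y_{M,i}-y_{M,i})\Bigr|\leq\sum_{i=1}^M|\widetilde y_{M,i}-y_{M,i}|=d_H(\mathbf y_M,\widetilde{\mathbf y}_M)\leq b .
\]
Dividing by $M$ gives $|\widetilde\mu_M-\mu_M|\leq b/M$. No probabilistic argument is needed, since $\widetilde{\mathbf y}_M$ is a fixed vector chosen before $\Pi_M$.

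\textbf{Step 2: the decision.} Assume $b/M<\gamma_M$; this forces $\gamma_M>0$, so $\mu_M\neq\tau$. The argument splits into two cases.

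\emph{Case $D_M=1$.} Here $\mu_M\geq\tau$, and since $\gamma_M>0$ in fact $\mu_M>\tau$, so $\mu_M=\tau+\gamma_M$. Then
\[
\widetilde\mu_M\geq\mu_M-b/M>\mu_M-\gamma_M=\tau ,
\]
so $\widetilde D_M=\ind[\widetilde\mu_M\geq\tau]=1$.

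\emph{Case $D_M=0$.} Here $\mu_M<\tau$, so $\mu_M=\tau-\gamma_M$. Then
\[
\widetilde\mu_M\leq\mu_M+b/M<\tau ,
\]
so $\widetilde D_M=0$.

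The inclusive tie convention of Definition~\ref{def:finite-decisions} causes no trouble in either case, because both inequalities are strict. In the accept case, even $\widetilde\mu_M=\tau$ would still accept.

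\textbf{Main obstacle.} The only delicate point is making sure the strict hypothesis $b/M<\gamma_M$ is what rules out landing exactly on the threshold from the reject side. I would state this explicitly, noting that with $b/M=\gamma_M$ a rejected census could be pushed to $\widetilde\mu_M=\tau$ and flip to acceptance under the inclusive rule. This is why the clarity $\gamma_M$ functions as a robustness radius that is strict in exactly this sense.
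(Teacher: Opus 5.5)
Your proposal is correct and follows essentially the same approach as the paper: the same Hamming-distance bound $|\widetilde C_M-C_M|\leq b$, then the observation that a perturbation strictly smaller than $\gamma_M$ cannot cross $\tau$. Your explicit accept/reject case split, and your note that equality can flip only a rejecting census under the inclusive rule, simply spell out the paper's one-line conclusion and its remark following the proof.
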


\begin{proof}
Changing one binary coordinate changes the total by at most one, so
$|\widetilde C_M-C_M|\leq b$.  Division by $M$ gives the first claim.  A perturbation strictly smaller than the distance from $\mu_M$ to $\tau$ cannot cross the threshold.
\end{proof}

The strict inequality treats both sides of the inclusive threshold uniformly.  On the accepting side, equality may still preserve acceptance; on the rejecting side it may reach the inclusive boundary and flip the decision.

\subsection{Exact Worst-Case Panel Error}

Conditional on any fixed corrupted vector, Theorem~\ref{thm:hypergeometric} still applies with $C_M$ replaced by $\widetilde C_M$.  The following result eliminates the need to enumerate every corruption pattern.

\begin{theorem}[Exact Non-Adaptive Worst Case]\label{thm:corrupt-exact}
Let $q_K=\lceil\tau K\rceil$ and let the error target be the \emph{honest} census decision $D_M$.

If $D_M=1$, the greatest rejection probability among all $b$-corruptions is
\[
\Prob\!\left(
H^-_{M,K}<q_K
\right),
\qquad
H^-_{M,K}\sim
\operatorname{Hypergeometric}\bigl(M,(C_M-b)_+,K\bigr).
\]
If $D_M=0$, the greatest acceptance probability is
\[
\Prob\!\left(
H^+_{M,K}\geq q_K
\right),
\qquad
H^+_{M,K}\sim
\operatorname{Hypergeometric}\bigl(M,\min(M,C_M+b),K\bigr).
\]
\end{theorem}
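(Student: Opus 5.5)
The argument reduces the problem to a one-parameter monotonicity statement about hypergeometric tails in the number of positive entries. Conditional on any fixed corrupted vector $\widetilde{\mathbf y}_M$, Theorem~\ref{thm:hypergeometric} applies with $C_M$ replaced by $\widetilde C_M$. The panel decision under corruption therefore depends on the corruption pattern only through the single integer $\widetilde C_M$. By the counting argument of Proposition~\ref{prop:finite-robustness}, the set of totals reachable by $b$-corruptions is exactly the integer interval
\[
[(C_M-b)_+,\;\min(M,C_M+b)].
\]
It contains every such total, and each of its points is attained by flipping $|\widetilde C_M-C_M|\leq b$ coordinates of the appropriate sign. So the worst case is a maximization of a hypergeometric tail over $c$ in this interval, with $M$ and $K$ fixed.

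The key step is to show that $c\mapsto\Prob(H_c\geq q)$ is non-decreasing in $c$ for every integer $q$, where $H_c\sim\operatorname{Hypergeometric}(M,c,K)$. Equivalently, $\Prob(H_c<q)$ is non-increasing in $c$. I would prove this by a coupling in the spirit of Lemma~\ref{lem:interval-certificate}, using one uniform permutation $\Pi_M$ for all values of $c$. Let the census with $c$ positives place its ones at indices $\{1,\dots,c\}$; this is valid because the law depends only on $c$, by Theorem~\ref{thm:hypergeometric}. Then
\[
H_c=\#\{i\leq K:\Pi_M(i)\leq c\},
\]
which is pointwise non-decreasing in $c$. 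Stochastic ordering of the counts follows, and with it the monotonicity of both tails.

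With monotonicity in hand, the two cases follow directly. If $D_M=1$, the error event is $\{H<q_K\}$, whose probability is maximized at the smallest attainable total $(C_M-b)_+$. That total is achieved by flipping $\min(b,C_M)$ ones to zeros. If $D_M=0$, the error event is $\{H\geq q_K\}$, maximized at $\min(M,C_M+b)$ by flipping zeros to ones. In both cases the supremum over corruptions is attained, so it is a maximum and equals the stated hypergeometric tail.

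I expect the main obstacle to be bookkeeping rather than depth. One point is to state the coupling carefully: the corrupted vectors are chosen before $\Pi_M$, and permutation invariance lets us relabel positions without changing the law. The other is to handle the truncations $(\cdot)_+$ and $\min(M,\cdot)$, confirming that the extremal totals are reachable within budget $b$.
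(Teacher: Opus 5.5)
Your proof is correct and follows the paper's argument. You reduce to the corrupted total $\widetilde C_M$, use a same-sample coupling to show that hypergeometric tails are monotone in the number of positives, and then push the total to $(C_M-b)_+$ or $\min(M,C_M+b)$; your nested-positive-set coupling under one permutation is simply the all-at-once form of the paper's one-replacement-at-a-time coupling, and your explicit check that the extremal totals are reachable within budget $b$ fills in a step the paper leaves implicit.
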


\begin{proof}
For an honest accepting census, rejection is monotone decreasing in the number of positive entries.  An adversary therefore changes as many positive entries to zero as the budget permits, giving $(C_M-b)_+$.  Hypergeometric distributions are stochastically increasing in their number of population successes, which can be seen by coupling populations that differ by one zero-to-one replacement under the same sampled index set.  The rejecting case is symmetric: maximizing positives gives $\min(M,C_M+b)$ and maximizes the upper tail.
\end{proof}

Thus the robust finite-panel certificate is still an exact hypergeometric tail.  It depends on the honest pointwise clarity through $C_M$ and on the integer corruption budget $b$; it does not require defining an evaluator's ``accuracy'' relative to the very population decision being estimated.

\begin{corollary}[Concentration Under Corruption]\label{cor:corrupt-concentration}
Suppose $b/M<\gamma_M$ and let
\[
\gamma_M^{\mathrm{rob}}=\gamma_M-b/M>0.
\]
Then the worst-case non-adaptive panel error is at most
\[
\exp\!\left(
-\frac{2K(\gamma_M^{\mathrm{rob}})^2}{1-(K-1)/M}
\right).
\]
\end{corollary}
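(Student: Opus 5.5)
The plan is to reduce the corollary to two results already available: Theorem~\ref{thm:corrupt-exact}, which identifies the exact worst-case corrupted census, and Theorem~\ref{thm:serfling-panel}, applied to that census. By Theorem~\ref{thm:corrupt-exact}, the worst-case error is a single hypergeometric tail. In the accepting case it is computed at the census with $(C_M-b)_+$ positives; in the rejecting case it is computed at the census with $\min(M,C_M+b)$ positives. It therefore suffices to bound that one tail, with the error still measured against the honest decision $D_M$. I treat the two cases separately.

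\emph{Case $D_M=1$.} The honest census accepts and $\mu_M\geq\tau$. Since $\gamma_M>b/M\geq0$, we have $\mu_M>\tau$, so $C_M-b>M\tau\geq0$. Hence $(C_M-b)_+=C_M-b$, and the worst-case census mean is $\widetilde\mu_M=\mu_M-b/M=\tau+\gamma_M^{\mathrm{rob}}>\tau$. This census also accepts, so the honest decision and the corrupted-census decision coincide. The worst-case panel error $\Prob(H^-_{M,K}<q_K)$ is then exactly the panel--census error $e_{M,K}$ of the corrupted census, whose clarity is $\gamma_M^{\mathrm{rob}}>0$. Theorem~\ref{thm:serfling-panel} applies directly and gives the displayed exponential bound.

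\emph{Case $D_M=0$.} Here $\mu_M<\tau$ and $\gamma_M=\tau-\mu_M$. The condition $b/M<\gamma_M$ gives $C_M+b<\tau M\leq M$, so the minimum with $M$ is inactive. The worst-case mean is $\widetilde\mu_M=\tau-\gamma_M^{\mathrm{rob}}<\tau$, which is strictly below the inclusive threshold, so the corrupted census still rejects. As before, $\Prob(H^+_{M,K}\geq q_K)$ equals the panel--census error of the corrupted census, and Serfling's bound applies with clarity $\gamma_M^{\mathrm{rob}}$.

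The only delicate point is checking that the worst-case corrupted census lies strictly on the honest side of the threshold. This is what lets the exact worst case be read as an ordinary panel--census error to which Theorem~\ref{thm:serfling-panel} applies. It follows from Proposition~\ref{prop:finite-robustness}, because the strict inequality $b/M<\gamma_M$ prevents even the extreme shift of $b/M$ from reaching $\tau$. This matters especially on the rejecting side, where an inclusive tie would flip the decision.

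If one prefers not to rely on this identification, the same conclusion can be reached directly from the first step of the proof of Theorem~\ref{thm:serfling-panel}. An error against $D_M$ forces the deviation $|\widehat\mu_{M,K}-\widetilde\mu_M|\geq\gamma_M^{\mathrm{rob}}$ in the appropriate direction, and the one-sided Serfling inequality for the corrupted population then gives the bound. Finally, Theorem~\ref{thm:corrupt-exact} shows that every $b$-corruption has error no larger than the worst case, so the bound holds uniformly over all $b$-corruptions.
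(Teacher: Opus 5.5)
Your proposal is correct and matches the paper's argument. Both check that the corrupted census stays strictly on the honest side of $\tau$ with margin at least $\gamma_M^{\mathrm{rob}}$, and then apply Theorem~\ref{thm:serfling-panel} to the corrupted population; the only difference is that the paper skips the reduction to the extremal census of Theorem~\ref{thm:corrupt-exact} and applies Serfling directly to every admissible $b$-corruption (each has clarity at least $\gamma_M^{\mathrm{rob}}$, and the bound is monotone in the clarity), which is essentially your fallback route.
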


\begin{proof}
Every admissible corrupted mean remains on the honest side of the threshold with distance at least $\gamma_M^{\mathrm{rob}}$.  Apply Theorem~\ref{thm:serfling-panel} to the corrupted finite population.
\end{proof}

\subsection{Asymptotic Contamination}

Let a sequence of honest censuses satisfy $\mu_M\to\mu$ with $\gamma=|\mu-\tau|>0$, and let $b_M/M\to\alpha$.  If $\alpha<\gamma$, Proposition~\ref{prop:finite-robustness} implies that every sufficiently large corrupted census has the honest limiting decision.  If the corrupted proportions themselves converge to $\widetilde\mu$, their centered random-order paths satisfy Theorem~\ref{thm:bridge} about $\widetilde\mu$ whenever $\widetilde\mu\in(0,1)$.  Relative to the honest threshold, contamination appears as an additional deterministic drift bounded in magnitude by $\alpha$.

At $\alpha\geq\gamma$, no population-independent guarantee is possible: an admissible contamination can move the mean to or across the threshold.  This limitation is deterministic and precedes any CLT.

\subsection{Corruption of a Selected Panel}

For completeness, suppose a panel is first sampled honestly and at most $f$ of its reported votes are then changed.  Pathwise, the corrupted positive count $\widetilde H$ obeys
\[
H-f\leq\widetilde H\leq H+f.
\]
Consequently, when the target decision is acceptance, the worst-case failure event is contained in
\[
\{H<q_K+f\},
\]
and when the target is rejection it is contained in
\[
\{H\geq q_K-f\}.
\]
These shifted exact tails are useful for a fixed panel budget, but they are not the same model as non-adaptive corruption of the eligible population.  A deployment must state which mechanism is in scope.

Figure~\ref{fig:corruption} shows how the exact finite-census tail changes with
the pre-sampling corruption budget.

\begin{figure}[ht]
\centering
\includegraphics[alt={Worst-case panel disagreement probability versus corruption fraction for several panel sizes and honest clarity levels.},width=\textwidth]{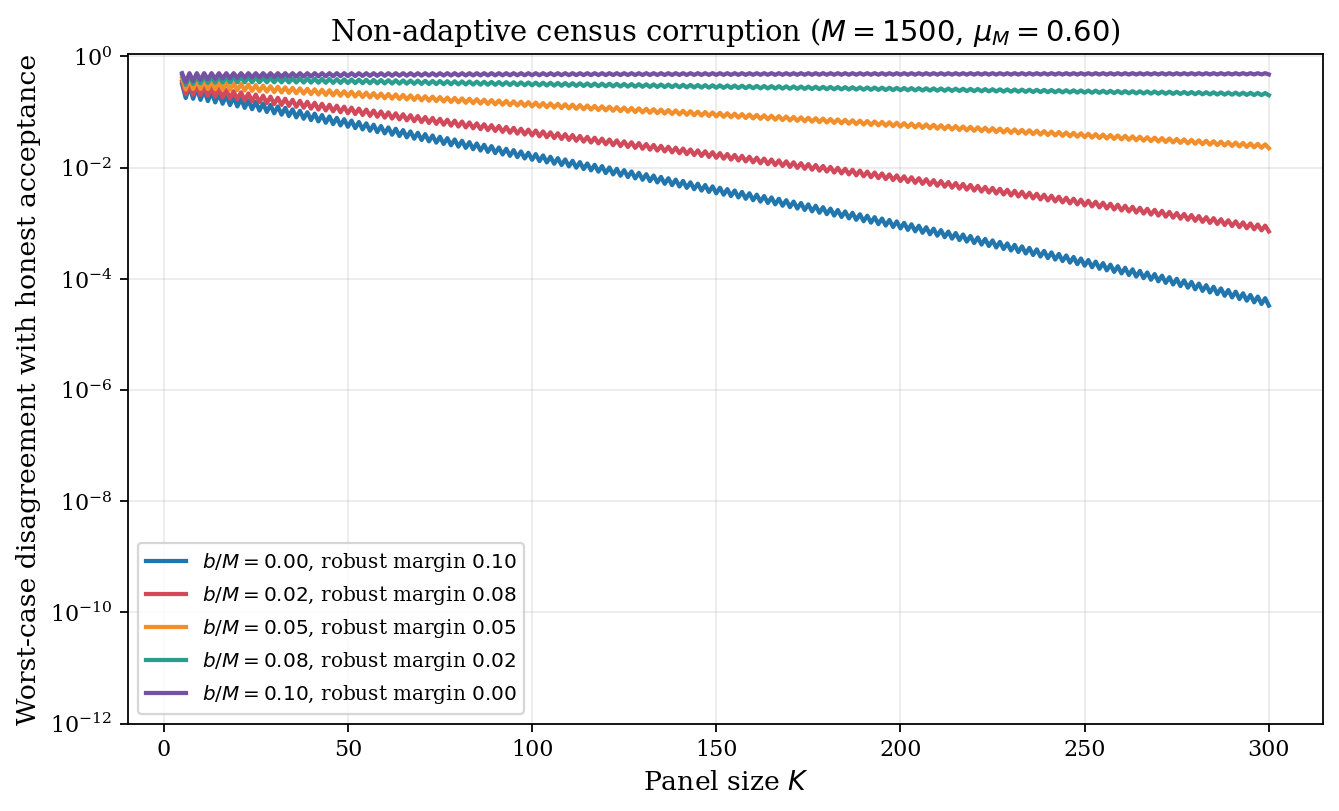}
\caption{Exact worst-case panel disagreement under non-adaptive census
corruption.  Corruption reduces the effective clarity by at most $b/M$.  At
equality the uniform no-flip guarantee ends; once $b/M$ exceeds the honest
clarity, the population decision can cross the threshold.  Under the inclusive
rule, equality can still preserve an accepting decision.}
\label{fig:corruption}
\end{figure}

\subsection{From a Corruption Fraction to a Sensitivity Profile}

The question ``what fraction of the network must an adversary control?'' has no single answer at the statistical decision layer.  Even after the selection mechanism and attack objective have been fixed, outcome manipulation depends on the task clarity and on the panel size.  Moreover, controlling a strict majority of sampled identities and moving a semantically ambiguous decision across its threshold are different events.

The distinction is transparent in the large-population i.i.d.\ benchmark.  Specialize to $\tau=1/2$ and odd $K$, and write $q_K=(K+1)/2$.  If each sampled identity is adversarial independently with probability $\alpha$, the strict-majority committee-capture curve is
\[
P_K^{\mathrm{cap}}(\alpha)
=
\Prob\!\left\{
\operatorname{Binomial}(K,\alpha)\geq q_K
\right\}.
\]
This curve concerns committee membership only and is independent of task clarity.

For outcome manipulation, two attack models must be distinguished.  First, suppose $\alpha$ literally denotes a fixed Byzantine share of a large network.  Each independently sampled identity is Byzantine with probability $\alpha$ and then votes against the honest decision; conditional on being honest, it supports that decision with probability $1/2+\gamma_H$.  The unconditional honest-side vote probability is
\[
p_{\mathrm{net}}(\alpha,\gamma_H)
=
(1-\alpha)\left(\frac12+\gamma_H\right),
\]
so the fixed-share network attack curve is
\begin{equation}\label{eq:network-share-attack-curve}
P_K^{\mathrm{net}}(\alpha;\gamma_H)
=
\Prob\!\left\{
\operatorname{Binomial}\!\left(
K,(1-\alpha)\left[\frac12+\gamma_H\right]
\right)<q_K
\right\}.
\end{equation}
Here $\gamma_H$ is clarity within the honest subpopulation.  This additional mixture assumption is required to interpret $\alpha$ as network ownership.  Its population-level $50\%$ frontier is the curve
\begin{equation}\label{eq:network-share-boundary}
\gamma_H=\frac{\alpha}{2(1-\alpha)},
\end{equation}
not the diagonal $\gamma_H=\alpha$.

By contrast, Definition~\ref{def:nonadaptive-corruption} permits targeted Hamming contamination: the adversary may spend its budget specifically on entries that support the honest population decision.  Orienting the binary vote toward that decision gives honest mean $1/2+\gamma$, and an $\alpha$-contamination may reduce this mean by the full $\alpha$.  Define the targeted-contamination attack curve
\begin{equation}\label{eq:semantic-attack-curve}
P_K^{\mathrm{tar}}(\alpha;\gamma)
=
\Prob\!\left\{
\operatorname{Binomial}\!\left(K,
\left[\frac12+\gamma-\alpha\right]_{[0,1]}
\right)<q_K
\right\},
\end{equation}
where $[x]_{[0,1]}=\min\{1,\max\{0,x\}\}$.  At the deterministic boundary $\alpha=\gamma$, every odd panel has $P_K^{\mathrm{tar}}(\gamma;\gamma)=1/2$.  Below that boundary larger panels suppress sampling error; above it they increasingly concentrate on the adversarially shifted population decision.  Thus a larger panel amplifies whichever side of the population threshold remains after contamination.

For the same numerical $\alpha$ and clarity, targeted contamination is at least as damaging as fixed network share because
\[
\frac12+\gamma-\alpha
\leq
(1-\alpha)\left(\frac12+\gamma\right)
\qquad(0\leq\gamma\leq1/2).
\]
The distinction is operational: the left side allows the attacker to select supportive entries to replace, whereas the right side samples a pre-existing Byzantine fraction without conditioning on the honest entries' latent votes.

For a target pointwise error $\delta<1/2$, let
\[
r_{K,\delta}
=
\inf\left\{
r\in[0,1/2]:
\Prob\!\left\{
\operatorname{Binomial}(K,1/2+r)<q_K
\right\}
\leq\delta
\right\}.
\]
Binomial stochastic monotonicity gives the two operational conditions
\begin{align}
P_K^{\mathrm{net}}(\alpha;\gamma_H)\leq\delta
&\quad\Longleftrightarrow\quad
\gamma_H\geq\frac{\alpha/2+r_{K,\delta}}{1-\alpha},
\label{eq:network-share-operational-condition}\\
P_K^{\mathrm{tar}}(\alpha;\gamma)\leq\delta
&\quad\Longleftrightarrow\quad
\gamma\geq\alpha+r_{K,\delta}.
\label{eq:targeted-operational-condition}
\end{align}
Consequently, if $\Pi$ is a declared workload law over problems and resolutions, its targeted-contamination sensitivity profile is
\begin{equation}\label{eq:workload-security-profile}
S_{K,\delta}(\alpha)
=
\Prob_{u\sim\Pi}\!\left\{
\gamma(u)\geq\alpha+r_{K,\delta}
\right\}.
\end{equation}
It reports the fraction of the declared workload whose pointwise attack probability is at most $\delta$ at contamination level $\alpha$.  A scalar tolerance is a service-level slice of this curve, for example
\[
\alpha^*_{K,\delta,\beta}
=
\sup\{\alpha:S_{K,\delta}(\alpha)\geq1-\beta\}.
\]

The same nested construction used for resolvability yields a finite-sample
lower confidence bound for this profile, rather than only a plug-in sensitivity
curve.

\begin{corollary}[Targeted-Contamination Profile Certificate]
\label{cor:contamination-profile-certificate}
Fix $\delta\in(0,1/2)$ and a non-empty finite grid, declared before seeing the
matrix,
\[
\mathcal G_{\mathrm{adv}}
\subset
\{(K,\alpha):K\in\{1,3,5,\ldots\},\ \alpha\in[0,1/2]\}.
\]
Under the workload-level sampling design of
Remark~\ref{rem:workload-extension} and the assumptions of
Theorem~\ref{thm:mass-operational-certificate}, construct the mass-controlled
intervals $I_a^{\mathrm{mass}}=[L_a,U_a]$ at pointwise failure level
$\eta_E\xi_E$.  For $(K,\alpha)\in\mathcal G_{\mathrm{adv}}$, define
\[
\operatorname{Cert}^{\mathrm{tar}}_{K,\delta,\alpha}([L,U])
=
\ind\!\left[
L\geq\frac12+\alpha+r_{K,\delta}
\quad\text{or}\quad
U\leq\frac12-\alpha-r_{K,\delta}
\right],
\]
\[
\widehat S^{\mathrm{mass}}_{K,\delta}(\alpha)
=
\frac1A\sum_{a=1}^A
\operatorname{Cert}^{\mathrm{tar}}_{K,\delta,\alpha}
\bigl(I_a^{\mathrm{mass}}\bigr),
\]
and
\[
\underline S_{K,\delta}(\alpha)
=
\max\!\left\{
0,
\ell_A\!\left(
A\widehat S^{\mathrm{mass}}_{K,\delta}(\alpha);
\frac{\eta_G}{|\mathcal G_{\mathrm{adv}}|}
\right)
-\xi_E
\right\}.
\]
Then
\[
\Prob\!\left(
S_{K,\delta}(\alpha)
\geq
\underline S_{K,\delta}(\alpha)
\text{ for every }(K,\alpha)\in\mathcal G_{\mathrm{adv}}
\right)
\geq1-\eta_E-\eta_G.
\]
Consequently, any pair selected from the same grid whose lower bound is at
least $1-\beta$ certifies the corresponding targeted-contamination workload
profile at the stated confidence level.
\end{corollary}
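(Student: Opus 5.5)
The plan is to rerun the Tonelli--Markov and conditional-inversion argument from the proof of Theorem~\ref{thm:mass-operational-certificate}, with three substitutions: the workload law $\Pi$ replaces $\pi$ (Remark~\ref{rem:workload-extension}), the grid $\mathcal G_{\mathrm{adv}}$ replaces $\mathcal K$, and $\operatorname{Cert}^{\mathrm{tar}}_{K,\delta,\alpha}$ replaces $\operatorname{Cert}_{K,\delta}$. The only genuinely new ingredient is a soundness lemma playing the role of Lemma~\ref{lem:interval-certificate}. For every unit $u$ whose interval $[L,U]$ contains $p(u)$,
\[
\operatorname{Cert}^{\mathrm{tar}}_{K,\delta,\alpha}([L,U])\leq\ind[\gamma(u)\geq\alpha+r_{K,\delta}],
\]
where $\gamma(u)=|p(u)-1/2|$ at $\tau=1/2$. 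In the first branch, $p(u)\geq L\geq 1/2+\alpha+r_{K,\delta}$, so $\gamma(u)\geq\alpha+r_{K,\delta}$. In the second branch, $p(u)\leq U\leq 1/2-\alpha-r_{K,\delta}$, which gives the same conclusion. No binomial tail needs to be evaluated at this step, because $S_{K,\delta}(\alpha)$ in \eqref{eq:workload-security-profile} is already stated as a clarity event. The operational meaning of that event comes from \eqref{eq:targeted-operational-condition}, which is taken as given.

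Next, let $\mathbf Z$ denote the evaluator rows and $I(u;\mathbf Z)$ the mass-level interval built from the count of unit $u$. I define
\[
F(\mathbf Z)=\int\ind[p(u)\notin I(u;\mathbf Z)]\,d\Pi(u).
\]
Pointwise interval validity, applied to the $\operatorname{Binomial}(M,p(u))$ marginal for each fixed $u$, gives miss probability at most $\eta_E\xi_E$. Tonelli's theorem then yields $\E F\leq\eta_E\xi_E$, and Markov's inequality gives $\Prob(F>\xi_E)\leq\eta_E$. This single event does not depend on $(K,\alpha)$, so it costs nothing across the grid. Measurability follows from the remark preceding the proof of Theorem~\ref{thm:mass-operational-certificate}, since $u\mapsto p(u)$ is measurable and the certificate is a finite-range function of the interval endpoints.

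Then, for fixed $\mathbf Z$, I set
\[
Q_{K,\alpha}(\mathbf Z)=\int\operatorname{Cert}^{\mathrm{tar}}_{K,\delta,\alpha}(I(u;\mathbf Z))\,d\Pi(u).
\]
The outer units are i.i.d.\ and independent of $\mathbf Z$, so the observed indicators are conditionally i.i.d.\ Bernoulli with mean $Q_{K,\alpha}(\mathbf Z)$. The exact one-sided property of $\ell_A$ at level $\eta_G/|\mathcal G_{\mathrm{adv}}|$, followed by a union bound over the grid, gives the following with conditional probability at least $1-\eta_G$, and hence unconditionally:
\[
Q_{K,\alpha}\geq\ell_A(A\widehat S^{\mathrm{mass}}_{K,\delta}(\alpha);\eta_G/|\mathcal G_{\mathrm{adv}}|)\quad\text{for all grid points.}
\]
The soundness lemma splits the integral defining $Q_{K,\alpha}$ into covered and uncovered units. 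Covered units contribute at most $S_{K,\delta}(\alpha)$, and uncovered units contribute at most $F(\mathbf Z)$. This gives $Q_{K,\alpha}\leq S_{K,\delta}(\alpha)+F(\mathbf Z)$. On the intersection of the two events, we get $S_{K,\delta}(\alpha)\geq\ell_A(\cdot)-\xi_E$ simultaneously over the grid. Clipping at zero is harmless, and a union bound over the two failure budgets gives confidence $1-\eta_E-\eta_G$. Because the bound holds simultaneously, selecting a grid point from the same matrix is also covered.

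I expect the main obstacle to be interpretive rather than technical. The profile $S_{K,\delta}(\alpha)$ must be read with $\gamma(u)$ defined from the honest evaluator law, oriented toward the honest decision, at $\tau=1/2$. The certificate's two branches must also match the orientation used in \eqref{eq:semantic-attack-curve}, including the closed inequality $U\leq 1/2-\alpha-r$ on the rejecting side. I would check that, under the inclusive tie rule and odd $K$, the rejecting case is symmetric to the accepting one, so that the same $r_{K,\delta}$ applies to both. If the tie convention broke this symmetry, I would state the profile directly as the clarity event, which is all the corollary asserts.
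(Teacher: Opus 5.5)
Your proposal is correct and matches the paper's proof step for step: a soundness step showing that a positive targeted certificate with a covering interval lies in the good set defining $S_{K,\delta}(\alpha)$, the Tonelli--Markov bound $\Prob\{F(\mathbf Z)>\xi_E\}\leq\eta_E$, conditional exact binomial inversion with a union bound over $\mathcal G_{\mathrm{adv}}$, and the decomposition $Q_{K,\alpha}(\mathbf Z)\leq S_{K,\delta}(\alpha)+F(\mathbf Z)$. The only cosmetic difference is that the paper's soundness step also invokes \eqref{eq:targeted-operational-condition} to restate the clarity event as attack probability at most $\delta$; you correctly observe that this is unnecessary for the bound itself, because \eqref{eq:workload-security-profile} defines the profile directly as the clarity event.
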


\begin{proof}
For a fixed workload unit $u$, if the interval $[L,U]$ contains its honest
acceptance mean $p(u)$ and the displayed certificate equals one, then
$|p(u)-1/2|\geq\alpha+r_{K,\delta}$.  In the lower branch, reorienting the
vote toward the honest rejecting decision replaces $p(u)$ by $1-p(u)$, whose
mean is at least $1/2+\alpha+r_{K,\delta}$.  By
Equation~\eqref{eq:targeted-operational-condition}, the pointwise
targeted-contamination attack probability is therefore at most $\delta$.  Thus
the interval certificate is a one-sided lower bound on the latent good-set
indicator defining $S_{K,\delta}(\alpha)$.

Let $F(\mathbf Z)$ be the generator mass of configurations whose evaluator
interval misses its honest mean, as in the proof of
Theorem~\ref{thm:mass-operational-certificate}.  Tonelli's theorem and Markov's
inequality give $\Prob\{F(\mathbf Z)>\xi_E\}\leq\eta_E$.  Conditional on the
shared evaluator rows $\mathbf Z$, the observed targeted-certificate indicators
are i.i.d.\ Bernoulli over generator draws.  Exact one-sided binomial inversion
and a union bound over $\mathcal G_{\mathrm{adv}}$ lower-bound their conditional
population masses simultaneously with failure probability at most $\eta_G$.
On $F(\mathbf Z)\leq\xi_E$, each such certificate mass is at most the
corresponding true profile plus $\xi_E$.  Rearranging and intersecting the two
events gives the result.
\end{proof}

Under the fixed-share network model, the corresponding workload profile replaces the event in~\eqref{eq:workload-security-profile} by
\[
\left\{
\gamma_H(u)\geq
\frac{\alpha/2+r_{K,\delta}}{1-\alpha}
\right\}.
\]
An analogous certificate applies to the fixed-share profile when the
calibration rows sample the declared honest-subpopulation law needed to identify
$\gamma_H$.  Without that design, synthetic simulation can validate the
calculation but cannot identify a deployed curve.

Figure~\ref{fig:adversarial-security-profiles} evaluates these quantities on the nominal FeeManager size grid,
\[
5,7,11,13,23,25,47,49,95,97,191,193,383,385,767,769,1535,1537,
\]
as read from \href{https://github.com/genlayerlabs/genlayer-consensus/blob/8795606edaa84ac8b46bc6687d9a4d5ad9d25f57/contracts/FeeManager.sol#L53-L71}{\texttt{FeeManager.sol}} at archived revision \texttt{8795606e}, which is also pinned by the companion TLA repository's \texttt{refs/consensus} submodule.  Each value is evaluated here as an independent strict-majority panel.  In the pinned contracts, appeal feasibility depends on the round index and on the currently available unconsumed validators; successor construction can instead combine prior committees or snapshot the active set.  The companion executable specification also applies a minus-two sizing rule after consecutive unsuccessful appeals.  Thus neither the number of realizable appeals nor this complete nominal grid is a path-independent deployed schedule.  The normal/appeal pairs are shown with solid/dashed lines.  The final panel plots
\[
\alpha^*_{K,\delta}(\gamma)
=
\gamma-r_{K,\delta}
\]
when the right-hand side is non-negative; below $r_{K,\delta}$ even the honest panel cannot meet the declared error target.

\begin{figure}[htbp]
\centering
\includegraphics[alt={Four non-adaptive corruption sensitivity charts over the nominal FeeManager size grid: committee capture, targeted manipulation at two clarity levels, and maximum tolerable contamination.},width=\textwidth]{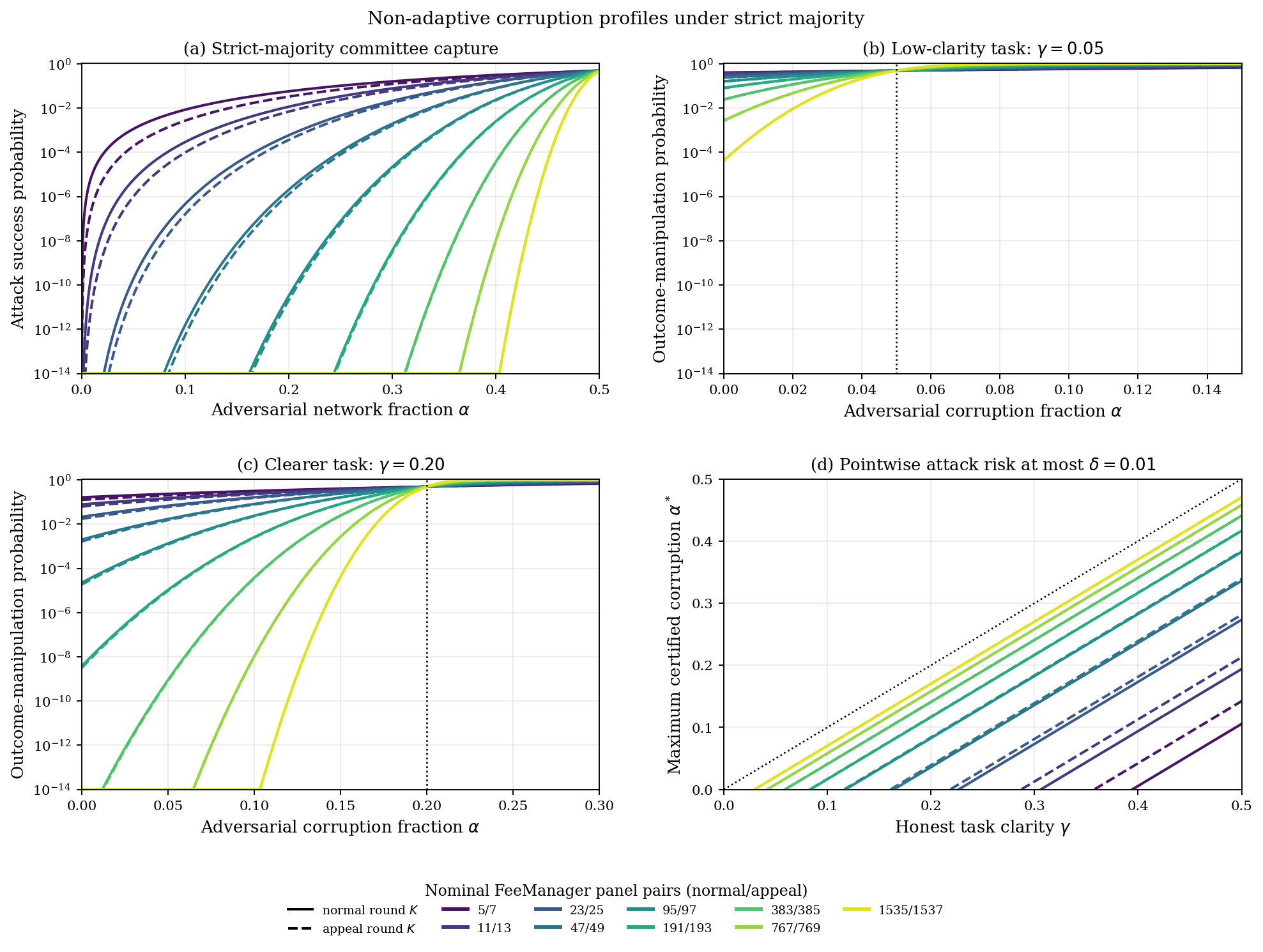}
\caption{Non-adaptive sensitivity over the nominal FeeManager size grid under the large-population i.i.d.\ benchmark.  Panel (a) shows committee capture; panels (b)--(c) show targeted contamination at two clarity levels; panel (d) gives the maximum contamination compatible with pointwise error $\delta=0.01$.  The grid does not model deployed validator selection and is not a deployed-network security guarantee.}
\label{fig:adversarial-security-profiles}
\end{figure}

For the direct network-control reading, hold the fixed Byzantine share constant.  Figure~\ref{fig:panel-clarity-adversary} places the nominal panel size on the horizontal axis and the minimum clarity within the honest subpopulation on the vertical axis.  Each curve is one Byzantine network share:
\[
\gamma_{H,\min}^{\mathrm{net}}(K;\alpha,\delta)
=
\frac{\alpha/2+r_{K,\delta}}{1-\alpha}.
\]
A task--panel pair on or above a curve meets the declared pointwise attack-risk target under this benchmark; a point below it does not.  Values above $1/2$ are impossible for a binary threshold and therefore identify combinations that no panel of that size can certify.

The complementary phase diagram fixes the largest nominal sensitivity size, $K=1537$, and varies both quantities continuously.  Figure~\ref{fig:attack-phase-diagram} places the two threat models side by side with common axes and color scale.  In the fixed-share network model the $50\%$ frontier is the curved boundary~\eqref{eq:network-share-boundary}, and the low-risk and high-success frontiers are
\[
\gamma_H
=
\frac{\alpha/2\mathbin{\pm}r_{K,\delta}}{1-\alpha}.
\]
Under targeted contamination the corresponding frontiers are $\gamma=\alpha$ and $\gamma=\alpha\mathbin{\pm}r_{K,\delta}$.  The visual difference is the price of allowing the attacker to choose supportive census entries rather than merely owning a pre-existing random share of identities.

The corresponding curves and phase diagrams appear in
Figures~\ref{fig:panel-clarity-adversary} and~\ref{fig:attack-phase-diagram}.

\clearpage
\renewcommand{\headeright}{Supplementary Material}
\pdfbookmark[0]{Supplementary Material}{part1-supplementary-material}
\documentdivision{Supplementary Material}
\setcounter{section}{0}
\setcounter{figure}{0}
\setcounter{table}{0}
\setcounter{equation}{0}
\renewcommand{\thesection}{S\arabic{section}}
\renewcommand{\thefigure}{S\arabic{figure}}
\renewcommand{\thetable}{S\arabic{table}}
\renewcommand{\theequation}{S\arabic{equation}}
\renewcommand{\theHsection}{supplement.\arabic{section}}
\renewcommand{\theHfigure}{supplement.\arabic{figure}}
\renewcommand{\theHtable}{supplement.\arabic{table}}
\renewcommand{\theHequation}{supplement.\arabic{equation}}
\section{Functional Limits and Late Reversals}\label{supp:functional-limits}

The joint variable $\mathbf Y=(Y_1,Y_2,\ldots)$ is infinite-dimensional, but the endpoint $K^{-1}\sum_{i=1}^K Y_i$ is scalar.  To retain the whole evolution as the panel grows, we map the binary sequence to its interpolated partial-sum path.

\subsection{The i.i.d.\ Superpopulation Path}

Fix $(X,T)$ and suppose $Y_1,Y_2,\ldots$ are the coordinate votes of Proposition~\ref{prop:joint-law}, with $0<\mu<1$ and $\sigma^2=\mu(1-\mu)$.  For $t\in[0,1]$, set $m=\lfloor nt\rfloor$ and define the polygonally interpolated process
\[
W_n(t)
=
\frac{1}{\sigma\sqrt n}
\left(
\sum_{i=1}^{m}(Y_i-\mu)
+(nt-m)(Y_{m+1}-\mu)
\right),
\]
where the interpolation term is understood as zero at $t=1$.

\begin{theorem}[Superpopulation Functional CLT]\label{thm:donsker}
As random elements of $C[0,1]$ with the uniform topology,
\[
W_n\Rightarrow W,
\]
where $W$ is standard Brownian motion with covariance
\[
\Cov(W(s),W(t))=\min(s,t).
\]
\end{theorem}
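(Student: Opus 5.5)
This is Donsker's invariance principle for the i.i.d.\ Bernoulli sequence of Proposition~\ref{prop:joint-law}. I would either cite it directly or reprove it by the standard two-step route: convergence of finite-dimensional distributions, plus tightness in $C[0,1]$. Set $\xi_i=(Y_i-\mu)/\sigma$. By Proposition~\ref{prop:joint-law} these are i.i.d.\ with mean zero and unit variance. They are also bounded by $\max(\mu,1-\mu)/\sigma<\infty$, since $0<\mu<1$. Then $W_n$ is the polygonal interpolation of $n^{-1/2}S_{\lfloor nt\rfloor}$, where $S_m=\sum_{i\le m}\xi_i$.

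\textbf{Finite-dimensional distributions.} Fix $0=t_0<t_1<\cdots<t_k\le1$. The interpolation term is bounded by $\max_i|\xi_i|/\sqrt n=O(n^{-1/2})$ deterministically, so it is uniformly negligible. The increments $n^{-1/2}(S_{\lfloor nt_j\rfloor}-S_{\lfloor nt_{j-1}\rfloor})$ are independent sums of disjoint blocks. By the classical Lindeberg--L\'evy CLT each block converges to $N(0,t_j-t_{j-1})$, and independence carries over to the joint limit. The continuous mapping theorem applied to partial sums of the increment vector then gives the Gaussian vector with covariance $\min(t_i,t_j)$. This is the law of $(W(t_1),\dots,W(t_k))$.

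\textbf{Tightness.} Here I would use boundedness to avoid the delicate truncation argument. Marcinkiewicz--Zygmund, or a direct fourth-moment expansion, gives $\E S_m^4\le 3m^2+m\E\xi^4\le Cm^2$. For $s<t$ this yields $\E|W_n(t)-W_n(s)|^4\le C'(t-s)^2$. The bound holds directly when $s,t$ are grid points. It extends to general $s,t$ by linearity of the interpolation on each cell, using the separate bound $|W_n(t)-W_n(s)|\le n^{1/2}|t-s|\max|\xi_i|$ when $|t-s|<1/n$; the two cases combine to give a uniform constant. Kolmogorov--Chentsov, in the form of Billingsley's moment criterion for $C[0,1]$, then gives tightness, because $W_n(0)=0$ and the exponent $2$ exceeds $1$.

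Prohorov's theorem combined with the uniqueness of a law on $C[0,1]$ determined by its finite-dimensional marginals gives $W_n\Rightarrow W$. The only step needing care is the uniform moment bound across the cell boundaries of the interpolation. Boundedness of the Bernoulli variables makes it routine, which is why I would prefer this route over the general finite-variance Donsker proof based on Ottaviani's maximal inequality.
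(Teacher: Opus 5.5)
Your first option is exactly the paper's proof. The paper checks that the centered Bernoulli increments are i.i.d., mean zero, bounded, with finite non-zero variance, and then cites the polygonal form of Donsker's invariance principle from Billingsley.

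Your second option is a different, self-contained route, and it is correct. Finite-dimensional convergence comes from blockwise Lindeberg--L\'evy plus independence of disjoint blocks. Tightness comes from the moment criterion $\E|W_n(t)-W_n(s)|^4\le C'(t-s)^2$, which is available because bounded increments have a finite fourth moment, with $\E S_m^4=m\E\xi^4+3m(m-1)$.

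Your cell-boundary step also goes through, though you state it tersely. For $t-s\ge 1/n$ one has $\lceil ns\rceil\le\lfloor nt\rfloor$. Split at these two grid points and use $(a+b+c)^4\le 27(a^4+b^4+c^4)$. The two partial-cell pieces are then bounded deterministically by $B^4(t-s)^2$, with $B=\max(\mu,1-\mu)/\sigma$, and the grid piece is bounded by $(3+\E\xi^4)(t-s)^2$.

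The trade-off is generality versus self-containment. The cited theorem needs only finite variance, so the paper's mention of boundedness is surplus. Your argument replaces the maximal-inequality (Ottaviani) machinery of the general proof with a fourth-moment hypothesis. That costs nothing in this Bernoulli setting, but it would not extend to increments with finite variance and infinite fourth moment.
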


\begin{proof}
The centered Bernoulli increments are i.i.d., have mean zero and finite non-zero variance, and are bounded.  The polygonal form of Donsker's invariance principle therefore applies~\cite{billingsley1999convergence}.
\end{proof}

At $t=1$, Theorem~\ref{thm:donsker} reduces to the ordinary scalar CLT.  Its additional content is joint weak convergence of the complete path, which permits continuous path functionals such as maxima over a fixed interval to be analyzed.

\subsection{The Without-Replacement Census Path}

Now let $\mathbf y_M$ be a deterministic binary census, let $\Pi_M$ be a uniform random permutation, and suppose
\[
\mu_M\longrightarrow\mu\in(0,1).
\]
For $t\in[0,1]$, $m=\lfloor Mt\rfloor$, define
\[
B_M(t)
=
\frac{1}{\sqrt{M\mu_M(1-\mu_M)}}
\left(
\sum_{i=1}^{m}(y_{M,\Pi_M(i)}-\mu_M)
+(Mt-m)(y_{M,\Pi_M(m+1)}-\mu_M)
\right),
\]
again taking the interpolation term to be zero at $t=1$.

Unlike the i.i.d.\ path,
\[
B_M(1)=0
\]
identically: after the entire census has been revealed, its centered total is known.

\begin{theorem}[Finite-Population Functional CLT]\label{thm:bridge}
As $M\to\infty$,
\[
B_M\Rightarrow B^\circ
\qquad\text{in }C[0,1],
\]
where $B^\circ$ is a standard Brownian bridge with covariance
\[
\Cov(B^\circ(s),B^\circ(t))
=
\min(s,t)-st.
\]
\end{theorem}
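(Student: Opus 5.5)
We prove this finite-population functional CLT in the standard two-part way: convergence of finite-dimensional distributions plus tightness in $C[0,1]$. Since $B_M(1)=0$ identically, the Brownian-bridge limit is forced once the finite-dimensional limits are Gaussian with the right covariance.

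\emph{Step 1: covariance.} For $m\leq m'$ the without-replacement covariance computation in Proposition~\ref{prop:finite-moments} gives
\[
\Cov\Bigl(\sum_{i\le m}(y_{\Pi(i)}-\mu_M),\sum_{i\le m'}(y_{\Pi(i)}-\mu_M)\Bigr)
=\mu_M(1-\mu_M)\,\frac{m(M-m')}{M-1}.
\]
After dividing by $M\mu_M(1-\mu_M)$ and taking $m=\lfloor Ms\rfloor$, $m'=\lfloor Mt\rfloor$, this tends to $s(1-t)=\min(s,t)-st$. The interpolation term has variance $O(1/M)$ and is therefore negligible.

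\emph{Step 2: Gaussian finite-dimensional limits.} We use the Cramér--Wold device. Fix times $0<t_1<\dots<t_k<1$ and coefficients $c_j$. The combination $\sum_j c_jB_M(t_j)$ is a linear rank statistic $\sum_{i}a_{M}(i)\,(y_{\Pi(i)}-\mu_M)$ with bounded scores that are piecewise constant in $i/M$. We then apply Hájek's central limit theorem for simple linear rank statistics (equivalently, the Wald--Wolfowitz--Noether theorem) to obtain asymptotic normality. Its Noether condition holds because $\mu_M\to\mu\in(0,1)$: the centered population values are bounded, and their empirical variance $\mu_M(1-\mu_M)$ stays bounded away from zero. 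The limiting variance comes from Step 1.

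\emph{Step 3: tightness.} This is the step we expect to be the main obstacle, because the increments are exchangeable rather than independent. Our first approach is Billingsley's moment criterion for increments. For an exchangeable sum of $j=m_2-m_1$ centered terms drawn without replacement, we bound the fourth moment directly by $E\bigl[\bigl(\sum_{i=m_1+1}^{m_2}(y_{\Pi(i)}-\mu_M)\bigr)^4\bigr]\le C\,(j^2+j)$, normalized by $M^2$, where $C$ depends only on the bound $1$ on the values. This bound follows by expanding the sum and using the fact that the mixed moments of sampling without replacement are dominated, up to constants, by those of the i.i.d.\ case (Hoeffding's comparison of with- and without-replacement sampling for convex functions gives this immediately). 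The resulting criterion $E|B_M(t)-B_M(s)|^4\le C'(t-s)^2$ holds at grid points, and hence by polygonal interpolation for all $s,t$, which gives tightness by Billingsley's Theorem~13.5. As a fallback, Billingsley's Theorem~24.1 treats exactly this without-replacement partial-sum process for exchangeable arrays and could be cited directly.

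Combining finite-dimensional convergence with tightness yields $B_M\Rightarrow B^\circ$.
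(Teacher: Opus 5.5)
Your proof is correct, but it takes a different route from the paper's.

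\textbf{The paper's route.} The paper does not separate finite-dimensional convergence from tightness. It notes that the randomly permuted centered census is an exchangeable, bounded, zero-sum array. Its largest increment (at most one) is negligible against $\sqrt{M\mu_M(1-\mu_M)}\to\infty$. The paper then cites a functional combinatorial CLT for uniform random permutations (Barbour--Janson). That theorem gives the Gaussian finite-dimensional limits and tightness at once. The covariance is read off from Proposition~\ref{prop:bridge-covariance}, which is the same computation as your Step~1.

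\textbf{Your route.} You assemble the result from scalar tools. Finite-dimensional convergence comes from Cram\'er--Wold plus the Wald--Wolfowitz--Noether/H\'ajek CLT for linear rank statistics. Tightness comes from a moment criterion driven by Hoeffding's convex-order comparison, $\E S_j^4\leq j\sigma_M^2+3j^2\sigma_M^4$ with $\sigma_M^2=\mu_M(1-\mu_M)$. You then use $j\geq1$ at grid points and the standard polygonal extension.

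\textbf{Trade-off.} Your argument is more self-contained and can be checked line by line, but it is slightly less general. Dividing by $M^2\sigma_M^4$ leaves the term $(t-s)/(M\sigma_M^2)$. So your uniform bound $C'(t-s)^2$ needs $\sigma_M^2$ bounded away from zero. The cited theorem needs only $M\sigma_M^2\to\infty$, which would still allow $\mu_M\to0$ slowly. Under the stated hypothesis $\mu\in(0,1)$ this costs you nothing.

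\textbf{Citation.} Theorem~13.5 of Billingsley (1999) is a $D[0,1]$ criterion. For $C[0,1]$, apply the Kolmogorov--Chentsov-type moment criterion directly. Alternatively, note that $D$-convergence of continuous processes to a continuous limit is uniform convergence.
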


\begin{proof}
The randomly permuted centered census is a triangular array of exchangeable, bounded increments with zero total.  Its total squared norm is $M\mu_M(1-\mu_M)$, which diverges because $\mu_M\to\mu\in(0,1)$, while its maximum increment is at most one.  Thus the maximal-increment (Lindeberg) ratio tends to zero.  The functional combinatorial central limit theorem for uniform random permutations applies~\cite{barbour2009functional}.  The endpoint constraint gives the tied-down Gaussian limit.  Its covariance can also be read from Proposition~\ref{prop:bridge-covariance} below.
\end{proof}

\begin{proposition}[Pre-limit Covariance]\label{prop:bridge-covariance}
Let
\[
R_{M,k}=\sum_{i=1}^k(y_{M,\Pi_M(i)}-\mu_M).
\]
For $1\leq k\leq\ell\leq M$,
\[
\Cov(R_{M,k},R_{M,\ell})
=
\mu_M(1-\mu_M)\frac{k(M-\ell)}{M-1}.
\]
Consequently, if $k/M\to s$ and $\ell/M\to t$ with $s\leq t$, the covariance after the normalization of Theorem~\ref{thm:bridge} tends to $s(1-t)$.
\end{proposition}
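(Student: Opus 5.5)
The covariance follows from the pairwise moments of a uniform permutation, combined with the centering identity $\sum_{i=1}^M(y_{M,i}-\mu_M)=0$. Write $c_i=y_{M,i}-\mu_M$ and $\xi_j=c_{\Pi_M(j)}$. Each $\xi_j$ is uniform over the centered census values, so $\E\xi_j=\frac1M\sum_i c_i=0$. Consequently $\E R_{M,k}=0$ and the covariance equals $\E[R_{M,k}R_{M,\ell}]$.

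The first step computes the single and pair second moments. For a single position,
\[
\E\xi_j^2=\frac1M\sum_i c_i^2=\mu_M(1-\mu_M),
\]
using $\sum_i c_i^2=C_M(1-\mu_M)^2+(M-C_M)\mu_M^2=M\mu_M(1-\mu_M)$. For $j\neq j'$ the pair $(\Pi_M(j),\Pi_M(j'))$ is uniform over ordered pairs of distinct indices. Therefore
\[
\E[\xi_j\xi_{j'}]=\frac{1}{M(M-1)}\sum_{i\neq i'}c_ic_{i'}
=\frac{1}{M(M-1)}\Bigl[\bigl(\textstyle\sum_i c_i\bigr)^2-\sum_i c_i^2\Bigr]
=-\frac{\mu_M(1-\mu_M)}{M-1}.
\]
This is the same computation as the indicator covariance in Proposition~\ref{prop:finite-moments}. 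It is also the only step needing care: the centering must be used to kill the square of the sum.

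The second step is bilinear expansion. Write $\sigma_M^2=\mu_M(1-\mu_M)$. For $k\le\ell$,
\[
\E[R_{M,k}R_{M,\ell}]=\sum_{j\le k}\sum_{j'\le\ell}\E[\xi_j\xi_{j'}].
\]
There are $k$ diagonal pairs, each contributing $\sigma_M^2$, and $k\ell-k$ off-diagonal pairs, each contributing $-\sigma_M^2/(M-1)$. The total is
\[
\sigma_M^2\Bigl(k-\frac{k(\ell-1)}{M-1}\Bigr)=\sigma_M^2\,\frac{k(M-\ell)}{M-1}.
\]
This is the stated formula.

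The limit statement is a one-line consequence. At grid points $t=\ell/M$ the interpolation term vanishes, so $B_M(k/M)=R_{M,k}/\sqrt{M\sigma_M^2}$. Hence
\[
\Cov\bigl(B_M(k/M),B_M(\ell/M)\bigr)=\frac{k(M-\ell)}{M(M-1)}.
\]
If $k/M\to s$ and $\ell/M\to t$, this tends to $s(1-t)=\min(s,t)-st$. For off-grid times, the interpolation term has an $O(1)$ numerator, so it changes the normalized process by $O\bigl(1/\sqrt{M\sigma_M^2}\bigr)$. Because $\sigma_M^2\to\mu(1-\mu)>0$, this correction vanishes in $L^2$ and does not affect the limit. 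I expect no real obstacle. The main point of care is noting that $\sigma_M^2>0$ for large $M$, which holds because $\mu\in(0,1)$.
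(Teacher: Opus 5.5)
Your proof is correct and follows the paper's argument. You compute the single-coordinate variance $\mu_M(1-\mu_M)$ and the pairwise covariance $-\mu_M(1-\mu_M)/(M-1)$, which both you and the paper derive from the fixed census total, then count $k$ diagonal and $k\ell-k$ off-diagonal terms, so your off-grid interpolation remark is only a harmless extra beyond what the statement asks.
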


\begin{proof}
Every permuted coordinate has variance $\mu_M(1-\mu_M)$.  Two distinct coordinates have covariance $-\mu_M(1-\mu_M)/(M-1)$ because their sum is fixed.  Expanding the covariance of the two overlapping partial sums gives
\[
\mu_M(1-\mu_M)
\left(k-\frac{k\ell-k}{M-1}\right)
=
\mu_M(1-\mu_M)\frac{k(M-\ell)}{M-1}.
\]
\end{proof}

The Brownian bridge is therefore the Gaussian limit induced by sampling without replacement and the known terminal census total.

Figure~\ref{fig:nested-paths} illustrates the unpinned and pinned endpoint
geometries; the proofs do not rely on the displayed draws.

\begin{figure}[ht]
\centering
\includegraphics[alt={Two nested-panel fluctuation paths: an unpinned partial-sum path and a random-order census path pinned to zero at the full census.},width=\textwidth]{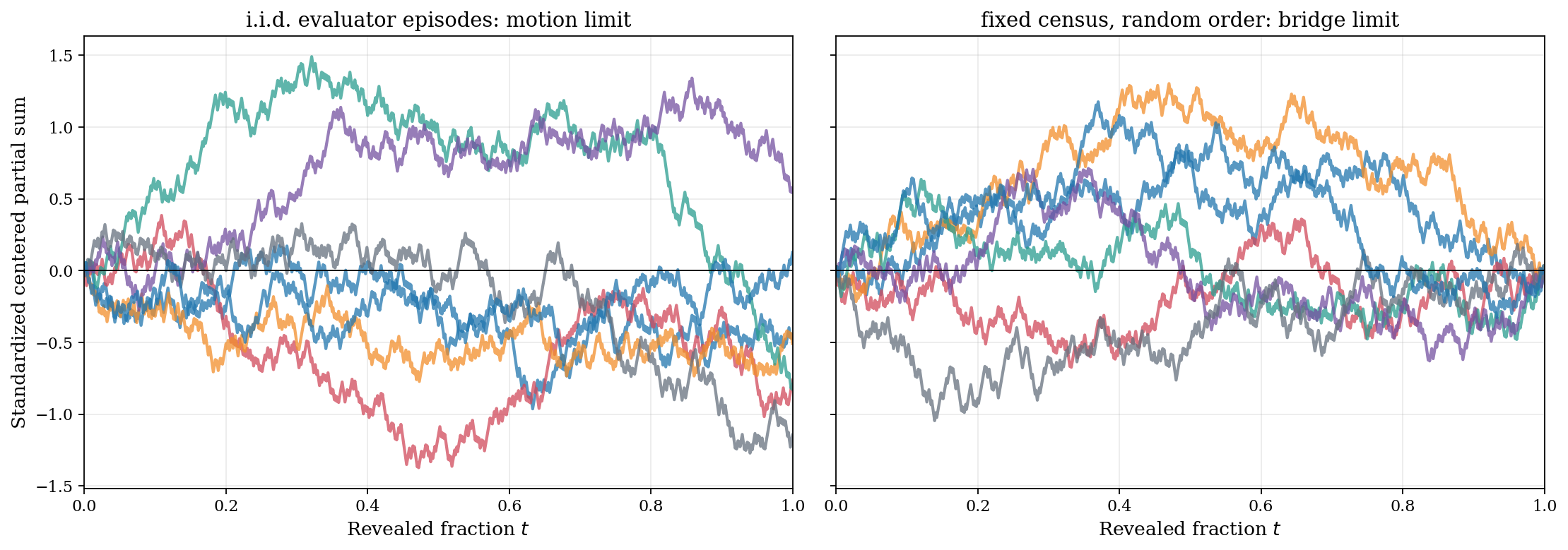}
\caption{Nested-panel fluctuations.  Left: centered i.i.d.\ Bernoulli partial sums, whose functional limit is Brownian motion.  Right: random-order paths through a fixed binary census, pinned to zero at the full census and converging to a Brownian bridge.  The plotted paths are illustrative draws, not evidence for the theorems.}
\label{fig:nested-paths}
\end{figure}

\subsection{Threshold Drift and Local Ambiguity}

The decision compares the cumulative count with the line $k\tau$.  In the i.i.d.\ regime,
\[
\frac{1}{\sigma\sqrt n}
\sum_{i=1}^{\lfloor nt\rfloor}(Y_i-\tau)
=
W_n(t)
+
\frac{\lfloor nt\rfloor(\mu-\tau)}{\sigma\sqrt n}
+O(n^{-1/2}).
\]
For fixed non-zero clarity, the deterministic drift grows as $\sqrt n$ and eventually dominates the $O_\Prob(1)$ fluctuation.  The non-degenerate asymptotic regime near ambiguity occurs when the population mean approaches the threshold at the $n^{-1/2}$ scale.

\begin{theorem}[Local-Clarity Bridge Limit]\label{thm:local-bridge}
Let $\tau\in(0,1)$ and let binary censuses satisfy
\[
\mu_M
=
\tau+\frac{c}{\sqrt M}+o(M^{-1/2})
\]
for some $c\in\mathbb R$.  For $t\in[k/M,(k+1)/M]$ with $k=0,\ldots,M-1$, define the threshold-centered interpolated path by
\[
Z_M(t)
=
\frac{1}{\sqrt{M\tau(1-\tau)}}
\left(
\sum_{i=1}^{k}(y_{M,\Pi_M(i)}-\tau)
+(Mt-k)(y_{M,\Pi_M(k+1)}-\tau)
\right).
\]
At $t=1$, use the full sum over $i=1,\ldots,M$.
Then
\[
Z_M\Rightarrow
B^\circ(t)+\frac{c}{\sqrt{\tau(1-\tau)}}t
\qquad\text{in }C[0,1].
\]
\end{theorem}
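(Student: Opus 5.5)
The plan is to reduce the statement to Theorem~\ref{thm:bridge} via an exact algebraic decomposition of $Z_M$ into a rescaled mean-centred bridge path plus a deterministic drift, and then apply Slutsky's lemma in $C[0,1]$.

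For $t\in[k/M,(k+1)/M]$ write each increment as $y_{M,\Pi_M(i)}-\tau=(y_{M,\Pi_M(i)}-\mu_M)+(\mu_M-\tau)$. The interpolated partial sums then split as
\[
Z_M(t)=\sqrt{\frac{\mu_M(1-\mu_M)}{\tau(1-\tau)}}\,B_M(t)+\frac{Mt\,(\mu_M-\tau)}{\sqrt{M\tau(1-\tau)}}.
\]
The drift term is exactly linear in $t$ because the interpolation weights give $k+(Mt-k)=Mt$ copies of the constant $\mu_M-\tau$. The endpoint convention at $t=1$ is consistent with this, since there $B_M(1)=0$ and the drift equals $\sqrt M(\mu_M-\tau)/\sqrt{\tau(1-\tau)}$. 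This identity holds pathwise and needs no limits.

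Next I treat the two pieces. Since $\mu_M=\tau+c/\sqrt M+o(M^{-1/2})$, we have $\mu_M\to\tau\in(0,1)$, so the hypothesis of Theorem~\ref{thm:bridge} holds and $B_M\Rightarrow B^\circ$ in $C[0,1]$. The scalar prefactor $\sqrt{\mu_M(1-\mu_M)/(\tau(1-\tau))}$ tends to $1$. The deterministic drift function is $t\mapsto t\sqrt M(\mu_M-\tau)/\sqrt{\tau(1-\tau)}$, and $\sqrt M(\mu_M-\tau)=c+o(1)$. Hence it converges uniformly on $[0,1]$ to $ct/\sqrt{\tau(1-\tau)}$, with sup-norm error $o(1)$.

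Finally I assemble the pieces. Let $a_M$ denote the scalar prefactor and $f_M$ the drift function. By the continuous mapping theorem, $a_MB_M-B_M=(a_M-1)B_M$ has sup norm $|a_M-1|\,\|B_M\|_\infty$. This tends to $0$ in probability because $\|B_M\|_\infty$ is tight (it converges in law to $\|B^\circ\|_\infty$). Adding a deterministic sequence that converges in $C[0,1]$ preserves weak convergence, since the map $x\mapsto x+f$ is continuous and $\|f_M-f\|_\infty\to0$. So Slutsky's lemma in the separable metric space $C[0,1]$ gives $Z_M\Rightarrow B^\circ+ct/\sqrt{\tau(1-\tau)}$.

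The only delicate point is the exact linearity of the interpolated drift, including the $t=1$ convention; once that is verified, the rest is routine. If $C_M=0$ or $C_M=M$ for small $M$, then $B_M$ is undefined, but this can occur only finitely often because $\mu_M\to\tau\in(0,1)$, so it does not affect the limit.
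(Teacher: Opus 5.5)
Your proposal is correct and matches the paper's proof: add and subtract $\mu_M$, identify the centred part as $\sqrt{\mu_M(1-\mu_M)/(\tau(1-\tau))}\,B_M$, show that the drift converges uniformly to $ct/\sqrt{\tau(1-\tau)}$, and then apply Theorem~\ref{thm:bridge} and Slutsky's theorem in $C[0,1]$. You add a few details the paper leaves implicit, and they are all sound: the exact linearity of the interpolated drift, the $t=1$ convention, tightness of $\|B_M\|_\infty$, and the finitely many degenerate censuses with $C_M\in\{0,M\}$.
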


\begin{proof}
Add and subtract $\mu_M$.  The centered term equals $B_M(t)$ multiplied by
\[
\sqrt{\frac{\mu_M(1-\mu_M)}{\tau(1-\tau)}}\longrightarrow1.
\]
Uniformly in $t$, the remaining deterministic term converges to
$ct/\sqrt{\tau(1-\tau)}$.  Apply Theorem~\ref{thm:bridge} and Slutsky's theorem in $C[0,1]$.
\end{proof}

Theorem~\ref{thm:local-bridge} supplies an approximation for path events away from the singular initial time.  The next corollary evaluates one such event in closed form.

\begin{corollary}[Late-Escalation Stability Profile]\label{cor:late-escalation-profile}
Under the assumptions of Theorem~\ref{thm:local-bridge}, suppose $c\neq0$ and fix $a\in(0,1)$.  Let
\[
\mathcal S_{M,a}
=
\left\{
D_{M,k}=D_M
\text{ for every }k=\lceil aM\rceil,\ldots,M
\right\}.
\]
Writing $\Phi$ for the standard normal distribution function and
\[
\lambda=\frac{|c|}{\sqrt{\tau(1-\tau)}},
\]
the probability that no later nested panel reverses the full-census decision satisfies
\[
\Prob(\mathcal S_{M,a})
\longrightarrow
2\Phi\!\left(\lambda\sqrt{\frac{a}{1-a}}\right)-1.
\]
Consequently, the limiting probability of at least one disagreement after fraction $a$ is
\[
2\Phi\!\left(-\lambda\sqrt{\frac{a}{1-a}}\right).
\]
For a target late-disagreement probability $\varepsilon\in(0,1)$, the limiting stability probability is at least $1-\varepsilon$ exactly when
\[
a
\geq
\frac{z_{1-\varepsilon/2}^2}
{\lambda^2+z_{1-\varepsilon/2}^2},
\qquad
z_u=\Phi^{-1}(u).
\]
\end{corollary}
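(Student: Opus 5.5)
Apply Theorem~\ref{thm:local-bridge} together with the continuous mapping theorem, then evaluate the resulting Brownian-bridge path probability in closed form.

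\textbf{Case $c>0$.} Eventually $\mu_M>\tau$, so $D_M=1$. For every $k$, the panel decision $D_{M,k}=1$ exactly when $\sum_{i\le k}(y_{M,\Pi_M(i)}-\tau)\geq \lceil\tau k\rceil-\tau k$. The right-hand side lies in $[0,1)$, which is $O(M^{-1/2})$ after normalization. Hence $\mathcal S_{M,a}$ is sandwiched between the events $\{\inf_{t\in[a,1]}Z_M(t)>\epsilon_M\}$ and $\{\inf_{t\in[a,1]}Z_M(t)\ge 0\}$ up to interpolation errors of order $M^{-1/2}$. At grid points the polygonal path coincides with the partial sums, and between grid points it is a convex combination of them, so the infimum over $[a,1]$ differs from the infimum over the grid only near $t=a$.

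The functional $f\mapsto\inf_{[a,1]}f$ is continuous on $C[0,1]$. Therefore $\Prob(\mathcal S_{M,a})\to\Prob(\inf_{t\in[a,1]}(B^\circ(t)+\lambda t)>0)$, provided the limiting infimum has no atom at $0$. Checking that absence of an atom, and the associated sandwich argument, is the routine part. The case $c<0$ follows by symmetry, replacing $y$ by $1-y$ and $\tau$ by $1-\tau$, so that $\lambda=|c|/\sqrt{\tau(1-\tau)}$ in both cases.

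\textbf{Computing the limit.} Use the representation $B^\circ(t)=(1-t)W(t/(1-t))$, with time change $s=t/(1-t)$, which maps $[a,1)$ onto $[s_a,\infty)$ where $s_a=a/(1-a)$. Since $t=s/(1+s)$ and $1-t=1/(1+s)$,
\[
B^\circ(t)+\lambda t=\frac{W(s)+\lambda s}{1+s}.
\]
So the event becomes $\{W(s)+\lambda s>0 \text{ for all } s\ge s_a\}$. Note that the process vanishes at $t=1$; the limit endpoint should be handled by noting that the finite-census full panel always agrees with itself, and the drift keeps the event near $t=1$ well defined in the time-changed picture.

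Condition on $W(s_a)=x$. For $x>-\lambda s_a$, the probability that Brownian motion with drift $\lambda>0$, started at height $x+\lambda s_a>0$, never hits $0$ is $1-e^{-2\lambda(x+\lambda s_a)}$. Integrating against the $N(0,s_a)$ density gives
\[
\Phi\!\left(\lambda\sqrt{s_a}\right)-e^{-2\lambda^2 s_a}\,\E\!\left[e^{-2\lambda W(s_a)};\,W(s_a)>-\lambda s_a\right].
\]
Completing the square, $e^{-2\lambda x}\phi_{s_a}(x)=e^{2\lambda^2 s_a}\phi_{s_a}(x+2\lambda s_a)$. The second term therefore equals $\Prob(N(0,s_a)>\lambda s_a)=1-\Phi(\lambda\sqrt{s_a})$. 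The total is $2\Phi(\lambda\sqrt{a/(1-a)})-1$, and its complement is $2\Phi(-\lambda\sqrt{a/(1-a)})$.

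\textbf{Threshold on $a$.} The condition $2\Phi(-\lambda\sqrt{a/(1-a)})\le\varepsilon$ is equivalent to $\lambda^2 a/(1-a)\ge z_{1-\varepsilon/2}^2$, because $\Phi$ is strictly increasing. Solving this inequality for $a$ gives the stated bound.

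\textbf{Main obstacle.} The hardest step is justifying the weak-convergence transfer to the discrete event near $t=1$. There the bridge-plus-drift process tends to zero, so the boundary event $\{\inf=0\}$ needs care. I would first verify directly that the limit process $(W(s)+\lambda s)/(1+s)$ stays strictly positive with probability tending to $\Prob(\cdot)$ on $[s_a,\infty)$. This holds because $W(s)+\lambda s\to\infty$ almost surely. Consequently $\inf_{[a,1-\eta]}$ is controlled, and for the last $\eta M$ panels I would bound the reversal probability using the exact pinned structure: the remaining unrevealed census has total fixed by $C_M$. A Serfling or Hoeffding-type bound then shows this reversal probability vanishes as $\eta\to0$ uniformly in $M$.
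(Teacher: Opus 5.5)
Your argument is correct once one false remark is removed (see the second paragraph). The closed-form evaluation follows a different route from the paper.

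\textbf{Computing the limit.} The paper conditions on $X(a)\sim\mathcal N(\lambda a,a(1-a))$. Given $X(a)=x$, the process on $[a,1]$ is a Brownian bridge from $x$ to $X(1)=\lambda$. The paper applies the bridge reflection identity $1-e^{-2x\lambda/(1-a)}$ and then integrates over $x$. You instead use Doob's representation $B^\circ(t)=(1-t)W(t/(1-t))$ to turn the event into $\{W(s)+\lambda s>0 \text{ for all } s\ge s_a\}$. After that you need only the textbook survival probability $1-e^{-2\lambda y}$ for drifted Brownian motion and one completion of the square, and your computation checks out.

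Your route makes the behaviour near $t=1$ transparent. It also gives the no-atom property you called routine: conditionally on $W(s_a)$, $\inf_{s\ge s_a}(W(s)+\lambda s)=W(s_a)+\lambda s_a-E$ with $E\sim\mathrm{Exp}(2\lambda)$ independent, which has a continuous law. The paper's bridge conditioning, for its part, extends directly to the finite-window formula on $[a,b]$ stated after the corollary.

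\textbf{The discrete event.} The paper needs no sandwich. Because $H_{M,k}$ is an integer, $D_{M,k}=1$ holds exactly when $Z_M(k/M)\ge0$. So $\mathcal S_{M,a}$ is exactly the event that the polygonal path is non-negative on $[\lceil aM\rceil/M,1]$ (strictly positive for $-Z_M$ when $c<0$). The moving left endpoint is then handled by uniform continuity. Your sandwich is also valid. It is in fact what makes your $c<0$ symmetry legitimate, because after $y\mapsto 1-y$ the inclusive tie rule becomes a strict inequality.

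\textbf{The false remark.} The limit process does not vanish at $t=1$. Although $B^\circ(1)=0$, you have $X(1)=B^\circ(1)+\lambda=\lambda>0$, and likewise $Z_M(1)=(C_M-\tau M)/\sqrt{M\tau(1-\tau)}\to\lambda$. It is the mean-centered path $B_M$, not the threshold-centered $Z_M$, that is pinned at zero. Your own time change shows this, since $(W(s)+\lambda s)/(1+s)\to\lambda$.

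Your translation of $\{\inf_{[a,1]}X>0\}$ into the time-changed event is valid precisely because $X(1)=\lambda>0$. If the process really vanished at $t=1$, that event would be empty, contradicting your positive answer.

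The ``main obstacle'' is therefore illusory. The continuous-mapping step applies to $\inf_{[a,1]}$ on the whole interval, and no truncation at $1-\eta$ is needed. Had it been needed, your proposed fix would not have worked as stated. For panel $M-j$ the pointwise Serfling bound is about $\exp\{-2c^2(M-j)/(j+1)\}$. When $j$ is of order $\eta M$ this is a positive constant independent of $M$, so a union bound over the last $\eta M$ panels grows linearly in $M$ and a maximal inequality would be required.
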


\begin{proof}
At a grid point, $Z_M(k/M)\geq0$ is equivalent to
$H_{M,k}\geq k\tau$, hence to $D_{M,k}=1$ under the inclusive threshold
convention.  If $c>0$, then $D_M=1$ for all sufficiently large $M$, and
$\mathcal S_{M,a}$ is the event that the interpolated path remains
non-negative on $[a_M,1]$, where $a_M=\lceil aM\rceil/M$.  If $c<0$, then
$D_M=0$ eventually and the corresponding event for $-Z_M$ uses a strict
positive inequality.  This finite-$M$ distinction is the parity/tie effect;
it disappears in the limit below.

For deterministic $x_M\to x$ uniformly and $a_M\to a$, uniform continuity
of $x$ gives
\[
\inf_{a_M\leq t\leq1}x_M(t)
\longrightarrow
\inf_{a\leq t\leq1}x(t).
\]
Thus the moving left endpoint does not change the continuous-mapping
argument.  Moreover, the conditional reflection formula used below gives a
continuous distribution for the minimum of a Brownian bridge with positive
endpoints; since $X(a)$ has a density, the limiting minimum has no atom at
zero.  The weak and strict finite-$M$ inequalities therefore have the same
limit.  Theorem~\ref{thm:local-bridge} and the extended continuous mapping
theorem reduce both cases to
\[
\Prob\!\left(
\inf_{a\leq t\leq1}
\{B^\circ(t)+\lambda t\}>0
\right).
\]

Set $X(t)=B^\circ(t)+\lambda t$.  Then $X(1)=\lambda$ and
$X(a)\sim\mathcal N(\lambda a,a(1-a))$.  Conditional on $X(a)=x>0$, the segment on $[a,1]$ is a Brownian bridge from $x$ to $\lambda$.  The reflection principle gives its probability of remaining positive as
\[
1-\exp\!\left(-\frac{2x\lambda}{1-a}\right).
\]
Indeed, reflection at the first hit of zero replaces the Brownian transition density over duration $1-a$ from $p_{1-a}(\lambda-x)$ by $p_{1-a}(\lambda+x)$; their ratio is $\exp\{-2x\lambda/(1-a)\}$.  Dividing the killed transition density by the unrestricted one gives the displayed conditional probability.
It is zero for $x\leq0$.  With $d=\lambda\sqrt{a/(1-a)}$, integration over $X(a)$ gives
\[
\Prob(X(a)>0)
-
\E\!\left[
e^{-2\lambda X(a)/(1-a)}\ind[X(a)>0]
\right]
=
\Phi(d)-\Phi(-d)
=
2\Phi(d)-1.
\]
The disagreement probability is its complement.  Solving
$2\Phi(d)-1\geq1-\varepsilon$ for $a$ gives the final expression.
\end{proof}

For a general interval $0<a<b<1$, write
$X(t)=B^\circ(t)+ct/\sqrt{\tau(1-\tau)}$.  Conditioning on its two Gaussian endpoints gives the computable boundary-crossing formula
\[
\Prob\!\left(\inf_{a\leq t\leq b}X(t)>0\right)
=
\E\!\left[
\ind[X(a)>0,X(b)>0]
\left\{1-\exp\!\left(-\frac{2X(a)X(b)}{b-a}\right)\right\}
\right].
\]
At finite $M$, exact dynamic programming should still be preferred when this path event itself is used for certification.

\subsection{Stabilization}

\begin{theorem}[Almost-Sure Superpopulation Stabilization]\label{thm:iid-stabilization}
Let $Y_i$ be i.i.d.\ Bernoulli$(\mu)$ and define
\[
D_K=\ind\!\left[\frac1K\sum_{i=1}^K Y_i\geq\tau\right].
\]
If $\mu\neq\tau$, then
\[
D_K\longrightarrow\ind[\mu\geq\tau]
\qquad\text{almost surely},
\]
and there is an almost surely finite random $K_0$ after which every decision is equal to the limit decision.  If $\mu=\tau\in(0,1)$, the decisions do not converge almost surely.
\end{theorem}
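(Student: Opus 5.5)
For the case $\mu\neq\tau$ I would use the strong law of large numbers and nothing more. Put $\widehat\mu_K=K^{-1}\sum_{i\le K}Y_i$ and $\gamma=|\mu-\tau|>0$. The SLLN gives $\widehat\mu_K\to\mu$ on an event $\Omega_0$ of probability one. Fix $\omega\in\Omega_0$. There is a finite $K_0(\omega)$ such that $|\widehat\mu_K(\omega)-\mu|<\gamma$ for every $K\geq K_0(\omega)$. For such $K$, $\widehat\mu_K$ lies strictly on the same side of $\tau$ as $\mu$, so $D_K(\omega)=\ind[\mu\geq\tau]$; this is the argument of Theorem~\ref{thm:deterministic-consistency}, applied pathwise. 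Since $\Omega_0$ has full measure, $K_0$ is almost surely finite, and almost sure convergence follows. Because the inequality is strict, the inclusive convention causes no difficulty here.

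For $\mu=\tau\in(0,1)$ I would show that $D_K=1$ infinitely often and $D_K=0$ infinitely often, almost surely. Write $S_K=\sum_{i\le K}(Y_i-\tau)$, a mean-zero random walk with bounded increments and variance $\sigma^2=\tau(1-\tau)>0$. The first option is the law of the iterated logarithm: almost surely $\limsup_K S_K/\sqrt{2\sigma^2K\log\log K}=1$ and $\liminf_K S_K/\sqrt{2\sigma^2K\log\log K}=-1$. Hence $S_K>0$ infinitely often, which gives $D_K=1$, and $S_K<0$ infinitely often, which gives $D_K=0$. The decision sequence therefore oscillates and cannot converge.

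If I want to avoid the LIL, I would argue instead with the Hewitt--Savage zero--one law and the CLT. The events $\{\limsup_K S_K/\sqrt K\geq c\}$ are exchangeable, so each has probability $0$ or $1$. The CLT gives $\Prob(S_K\geq c\sqrt K)\to1-\Phi(c/\sigma)>0$, and the reverse Fatou lemma then makes the probability positive, hence equal to one. The same argument applies to the lower tail.

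The only delicate point is the boundary case. A mean-zero walk can sit exactly at $0$ infinitely often, and the inclusive convention counts $S_K=0$ as acceptance, so recurrence alone would not give both decision values. The two-sided LIL, with its strict crossings to $\pm\infty$ on the $\sqrt{K\log\log K}$ scale, removes this issue. I therefore plan to cite it directly rather than rely on recurrence of the walk.
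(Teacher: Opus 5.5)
Your proposal is correct and follows the paper's own proof. The strong law places the running averages eventually strictly on the side of $\tau$ containing $\mu$, and at $\mu=\tau$ the law of the iterated logarithm forces strictly positive and strictly negative excursions of the centered walk infinitely often, so $D_K$ takes both values infinitely often; your Hewitt--Savage/CLT alternative and your remark that recurrence alone would not suffice under the inclusive tie rule are valid refinements of the same argument.
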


\begin{proof}
For $\mu\neq\tau$, the strong law gives $K^{-1}\sum_{i=1}^K Y_i\to\mu$ almost surely, so the averages are eventually on the same side of $\tau$ as $\mu$.  At $\mu=\tau$, the centered non-degenerate Bernoulli random walk has positive and negative excursions infinitely often almost surely (for example by the law of the iterated logarithm), so the thresholded sequence cannot stabilize.
\end{proof}

\begin{theorem}[Late-Census Stabilization]\label{thm:census-stabilization}
Suppose $\mu_M\to\mu\neq\tau$.  For every fixed $a\in(0,1)$,
\[
\Prob\!\left(
D_{M,k}=D_M\text{ for every }k=\lceil aM\rceil,\ldots,M
\right)
\longrightarrow1.
\]
\end{theorem}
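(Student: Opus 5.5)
The plan is a union bound over the late panels, with a uniform finite-population maximal inequality controlling each term. I treat the case $\mu>\tau$, so $D_M=1$ eventually; the case $\mu<\tau$ is symmetric, with the strict/inclusive inequality swapped.

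Set $\gamma=|\mu-\tau|>0$. Since $\mu_M\to\mu$, for all sufficiently large $M$ we have $|\mu_M-\tau|\geq\gamma/2$, and $D_M$ equals the limiting decision. On the complement of the stability event $\mathcal S_{M,a}$ there is some $k\in\{\lceil aM\rceil,\ldots,M\}$ with $D_{M,k}\neq D_M$. As in the proof of Theorem~\ref{thm:serfling-panel}, integrality of $H_{M,k}$ turns such a disagreement into the deviation
\[
\left|\frac{H_{M,k}}{k}-\mu_M\right|\geq\frac{\gamma}{2}.
\]
So it suffices to show that
\[
\Prob\!\left(\max_{aM\leq k\leq M}\left|\frac{H_{M,k}}{k}-\mu_M\right|\geq\gamma/2\right)\to0.
\]

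Step one is the crude route, which I would try first. Take a union bound over the at most $M$ late indices and apply Theorem~\ref{thm:serfling-panel} (or the two-sided Serfling bound) to each $k$. Each term is at most
\[
2\exp\!\bigl(-2k(\gamma/2)^2\bigr)\leq2\exp(-aM\gamma^2/2),
\]
because the finite-population factor $1/(1-(k-1)/M)$ is at least $1$ and only helps. Summing over the at most $M$ indices gives
\[
2M\exp(-aM\gamma^2/2)\to0.
\]
This already proves the claim, and it needs no functional limit.

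One subtlety must be handled. At $k=M$ the panel is the full census, so $D_{M,M}=D_M$ trivially. For $k$ close to $M$, Serfling's bound still applies and is even stronger. I should also check that the one-sided Serfling inequality holds for each fixed $k$ with the population centered at $\mu_M$; it does, since each uniform random permutation prefix of length $k$ is a uniform $k$-subset. The only real care point is ensuring that $D_M$ and the threshold side are fixed for large $M$ before applying the bounds, and this follows from $\mu_M\to\mu\neq\tau$ as in Theorem~\ref{thm:deterministic-consistency}.

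As an alternative, one could argue via Theorem~\ref{thm:bridge}. Write $H_{M,k}/k-\mu_M=\sqrt{M\mu_M(1-\mu_M)}\,B_M(k/M)/k$. The bridge is tight and $k\geq aM$, so the supremum of this quantity is $O_\Prob(M^{-1/2})$, which tends to zero. That route requires $\mu\in(0,1)$ to avoid a degenerate normalization. The direct exponential union bound covers $\mu\in\{0,1\}$ as well, so I would present it as the main argument. I expect no serious obstacle; the main bookkeeping is the parity/tie convention, handled exactly as in Theorem~\ref{thm:serfling-panel}.
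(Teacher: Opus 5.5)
Your argument is correct and is essentially the paper's proof. For large $M$ the census mean lies at distance at least $|\mu-\tau|/2$ on the limiting side, and Theorem~\ref{thm:serfling-panel} with a union bound over the at most $M$ late indices gives a bound of order $M\exp(-aM|\mu-\tau|^2/2)\to0$; your extra factor of two from the two-sided deviation is harmless, and the Brownian-bridge alternative is not needed.
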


\begin{proof}
For sufficiently large $M$, $\gamma_M\geq|\mu-\tau|/2$ and $D_M=\ind[\mu\geq\tau]$.  By Theorem~\ref{thm:serfling-panel} and a union bound, the probability of any disagreement for $k\geq aM$ is at most
\[
M\exp\!\left(-2aM\left(\frac{|\mu-\tau|}{2}\right)^2\right),
\]
which tends to zero.
\end{proof}

\begin{remark}[Choice of path space]\label{rem:path-space}
The raw state space $\{0,1\}^{\Naturals}$ with its product sigma-algebra is an infinite-dimensional measurable product, but it is not a vector space.  Embedding it in $\ell^\infty$ does not make a Banach-valued CLT automatic: independent non-degenerate Gaussian coordinates are not bounded almost surely, so the putative limit need not be an $\ell^\infty$-valued random element.  General Banach-space CLTs require geometric and tightness hypotheses beyond a formal second moment~\cite{araujo1980central,ledoux1991probability}.

The natural construction here is instead the partial-sum map into a path space.  Theorems~\ref{thm:donsker} and~\ref{thm:bridge} are functional CLTs for that system-relevant random function.  If each evaluator itself returned a countably infinite vector of component verdicts, a separate Hilbert- or Banach-valued model and a meaningful norm would have to be specified.
\end{remark}

\section{Supplementary Empirical Protocol}\label{supp:llm-protocol}

A task-specific estimate of $\nu_X$ can be built from real LLM configurations only after the inferential target and sampling design have been declared.  A confirmatory campaign should satisfy four requirements.
\begin{enumerate}
    \item \emph{Declare the target.}  Freeze $X$, $G$, $\tau$, the tie rule, prompts, tools, and time window.  For multiple problems, specify a workload--generator law $\Pi$ rather than treating a benchmark list as a population.  State whether each catalogue is a finite target, an i.i.d.\ sample from a named law, or a deterministic approximation with a justified envelope.
    \item \emph{Freeze the design.}  Predeclare generator and evaluator catalogues, weights, seeds, self-evaluation policy, and any nested evaluator ordering.  Generate and freeze every $T_a$ before cross-evaluation.
    \item \emph{Predeclare inference.}  Fix $(\delta,\beta)$, $(\eta_E,\eta_G)$, any mass allowance $\xi_E$, and the candidate set $\mathcal K$.  Construct the complete matrix $\mathbf Y$, compute exact census-relative errors and the applicable population intervals, and select $K$ only through a simultaneous rule stated above.
    \item \emph{Report scope and sensitivity.}  Publish the binary matrix and configuration metadata; report pointwise clarity, changes across nested designs, alternative population weights, failures, and temporal replications.  If no candidate certifies, report that outcome rather than changing the grid or budgets retrospectively.
\end{enumerate}

Stabilization across the observed sizes is evidence for a limit, not a proof of a universal population.  Theorem~\ref{thm:finite-to-ideal} is conditional on convergence and does not manufacture that premise from the observations.  A probabilistic confidence statement additionally requires the configurations to be sampled according to the declared $\nu_X$; a deterministic catalogue requires a design-based or approximation argument.

Each evaluator configuration may also generate its own resolution $T_a$.  Generation and evaluation remain distinct roles: the former induces the distribution of columns, the latter the acceptance distribution within each column.  The empirical question is therefore not one mean $\mu$ for the examination, but the distribution of the function $T\mapsto\mu_X(T)$ over generated resolutions.

\section{LLM Study Details}\label{supp:real-llm-pilot}

We exercised the single-verdict ($N=1$) computational path on a hash-pinned diagnostic pilot.  The catalogue consisted of $A=50$ frozen equivalence-principle decisions divided into four construction strata: clear accept, clear reject, benign ambiguity, and input-presentation stress.  The last stratum modifies the evidence presentation; it is not Byzantine evaluator corruption and is separate from the attack models of Section~\ref{sec:corruption}.  These units form a stratified seed catalogue, not an i.i.d.\ sample from an identified workload--generator law.  Their accept, reject, and construction-underdetermined labels were assigned during construction and were not independently adjudicated external truth.  Because each prompt requested one global ACCEPT/REJECT value, this pilot does not test a nontrivial multicomponent map $G$.

The evaluator design drew $M=40$ rows i.i.d.\ with replacement from an equal-weight catalogue of 21 routed evaluator configurations.  These are router-level configurations, not claims of 21 statistically independent training lineages.  The shared-row theorem permits arbitrary dependence among the 50 cells of a row but requires rows to be i.i.d.\ from the declared law.  Distinct seeds do not rule out persistent dependence induced by shared random provider state, routing incidents, tools, or cached context, none of which can be audited fully from the archived binary ledger.  The realized sample contained 18 distinct configurations; repeated configurations retained distinct row and cell seeds.  Every row evaluated every frozen unit, producing an $M\times A=40\times50$ matrix.  The analysis fixed
\[
\tau=0.5,\qquad
\delta=0.01,\qquad
\beta=0.40,\qquad
\eta_E=\eta_G=0.025,\qquad
\xi_E=0.05,
\]
and the GenLayer panel grid
\[
\mathcal K=
\{5,7,11,13,23,25,47,49,95,97,191,193,383,385,767,769,1535,1537\}.
\]
This is a predeclared statistical grid based on the nominal FeeManager sizes,
not a claim that every value is reachable as a deployed protocol panel; the
capacity endgame has separate validator-eligibility and fixed-threshold
semantics.
Here $M$ is the calibration-row count, whereas $K$ is the size of a future i.i.d.\ panel from the declared superpopulation.  Thus candidates with $K>M$ are meaningful in this model; the restriction $K\leq M$ applies only to a without-replacement panel drawn from the realized finite census.
Malformed or exhausted transport responses were mapped to rejection under the frozen failure rule.

The family weights below were included in the frozen input dataset before
evaluator calls and were not fitted to the vote matrix.  The dataset SHA-256 is
\[
\texttt{fe5c11531def6849c439dc60919361b60141fee26fc427eeaf5159a0e1b2ea66}.
\]

\begin{table}[htbp]
\centering
\footnotesize
\begin{tabular}{@{}lrrr@{}}
\toprule
Family $f$ & Units $n_f$ & Family mass $w_f$ & Unit mass $r_i=w_f/n_f$ \\
\midrule
\texttt{sla\_enforcement}              & 4 & 0.13 & 0.0325 \\
\texttt{payment\_dispute}              & 4 & 0.13 & 0.0325 \\
\texttt{marketplace\_dispute}          & 4 & 0.12 & 0.0300 \\
\texttt{prediction\_market\_objective} & 4 & 0.10 & 0.0250 \\
\texttt{parametric\_insurance}         & 4 & 0.10 & 0.0250 \\
\texttt{airdrop\_task\_eval}           & 4 & 0.08 & 0.0200 \\
\texttt{prediction\_market\_subjective}& 4 & 0.07 & 0.0175 \\
\texttt{grant\_scoring}                & 4 & 0.06 & 0.0150 \\
\texttt{bug\_bounty\_severity}         & 4 & 0.06 & 0.0150 \\
\texttt{content\_moderation}           & 4 & 0.06 & 0.0150 \\
\texttt{sports\_referee}               & 4 & 0.04 & 0.0100 \\
\texttt{model\_fingerprinting}         & 3 & 0.03 & 0.0100 \\
\texttt{emergency\_halt}               & 3 & 0.02 & 0.0067 \\
\midrule
Total                                  & 50 & 1.00 & --- \\
\bottomrule
\end{tabular}
\caption{Declared nonuniform law on the finite unit catalogue.  The
\texttt{family\_weight} values encode a suggested GenLayer-like workload prior,
not estimated deployment frequencies.  The uniform-catalogue result is
reported alongside this design-weighted result.  Unit masses are displayed to
four decimal places; all calculations use $w_f/n_f$ without rounding.}
\label{tab:catalogue-weights}
\end{table}

\paragraph{Observed matrix.}
Of the $2{,}000$ cells, $1{,}983$ returned valid structured responses, 15 were malformed, and two exhausted transport retries.  The empirical 40-row majority decision matched the construction label on all 37 construction-resolvable units; the individual-cell agreement on those units was $0.994$.  For acceptance proportion $p$, binary entropy means
\[
h_2(p)=-p\log_2p-(1-p)\log_2(1-p),
\]
with the usual zero convention and units of bits.  Its mean was $0.041$ on construction-resolvable units and $0.333$ on construction-unresolvable units.

The observed margins were highly saturated: 44 of 50 units had at most one dissenting vote, including all 12 units in the input-presentation-stress stratum.  Only the benign-ambiguity stratum supplied appreciable mass near the aggregation threshold.  The pilot therefore verifies the mechanics of the single-verdict code path, but it is not a high-power stress test for intermediate clarity, a multicomponent $G$, or resistance to hostile presentation.

\paragraph{Separated certificates.}
At $K=5$ and $K=7$, no unit passed the evaluator-superpopulation pointwise
interval rule under either the familywise or mass-controlled construction.
This is a power statement about the chosen $(M,\delta)$ design,
not evidence that those panel sizes have error one or that a deployed first
round is insecure.

For the fully enumerated 50-unit catalogue, familywise intervals at level
$\eta_E/A=0.0005$ certify 45 units at $K=47$.  Corollary~\ref{cor:finite-generator-catalogue}
then gives lower coverage $45/50=0.900$ for uniform weights and $0.8775$ for the
declared catalogue weights, with confidence $1-\eta_E=0.975$ under the declared
i.i.d.\ evaluator-row model.  The mass-controlled finite-catalogue alternative
certifies 46 units and gives declared-weight lower coverage
$0.8925-0.05=0.8425$.

For the frozen 40-row census, exact without-replacement tails require no
confidence interval.  Forty-seven units satisfy
$K_{\mathrm{stable}}\leq7$ at $\delta=0.01$; \texttt{mkt-03},
\texttt{ins-03}, and \texttt{air-03} require 26, 34, and 39, respectively.
This statement is conditional on the realized verdict census, and $K\leq40$.
Putting the two results side by side, the same matrix has 47 of 50
frozen-census stability results at $K=7$ and zero evaluator-superpopulation
interval certificates at $K=7$.  The gap reflects different reference objects
and different uncertainty, not conflicting calculations.

Finally, the hypothetical outer-sampling calculation uses the
mass-controlled intervals.  At $K=47$, 46 of 50 units certify; exact one-sided
binomial inversion gives
$\ell_{50}(46;0.025/18)-0.05=0.6912$, while the Hoeffding form gives
$46/50-0.05-\sqrt{\log(18/0.025)/100}=0.6135$.  At $K=23$, 44 units certify and
the exact form gives $\ell_{50}(44;0.025/18)-0.05=0.6372$, hence crosses target
$0.60$ there.  These would be simultaneous 95\% lower bounds under the
hypothetical i.i.d.\ outer design; the factor 18 retains the complete
predeclared grid.  The original pilot analysis used Hoeffding and first crossed
at $K=47$.  Since
the outer catalogue was not sampled i.i.d.\ from a declared law $\Pi$, none of
these outer lower-bound values has the workload-population confidence
interpretation of Theorem~\ref{thm:mass-operational-certificate}.  We retain
$K=47$ as the audit reference rather than reselecting it retrospectively.

The pilot value $\beta=0.40$ was a feasibility choice, not a judgment that
$40\%$ unresolved workload mass is acceptable for deployment.  The audit value
$K=47$ is not a production panel-size recommendation.
Even under the hypothetical i.i.d.\ outer design, certification at these
values would give only the Corollary~\ref{cor:deployment-error} deployment-error
bound $(1-0.40)0.01+0.40=0.406$; because the actual outer design does not meet
that corollary's sampling premise, $0.406$ is not a pilot guarantee.

\paragraph{Frozen-census contamination sensitivity.}
Applying Theorem~\ref{thm:corrupt-exact} column by column gives the exact
worst-direction result when at most $b$ of the 40 eligible verdicts are changed
before the random panel is drawn.  Table~\ref{tab:pilot-census-corruption}
reports a compact slice of \path{finite_census_security.csv}.  It is conditional
on this frozen census and targets its honest decision; it does not model
adaptive corruption after panel selection, validator identities, or network
ownership.

\begin{table}[htbp]
\centering
\small
\begin{tabular}{@{}rrrrrr@{}}
\toprule
Budget $b$ & $b/M$ & $K=5$ & $K=7$ & $K=11$ & $K=39$ \\
\midrule
0 & 0.000 & 47 & 47 & 47 & 50 \\
1 & 0.025 & 46 & 47 & 47 & 49 \\
2 & 0.050 & 45 & 46 & 47 & 49 \\
4 & 0.100 & 44 & 44 & 47 & 48 \\
\bottomrule
\end{tabular}
\caption{Number of the 50 catalogue units whose exact worst-case
without-replacement panel error remains at most $\delta=0.01$ under
non-adaptive census contamination.}
\label{tab:pilot-census-corruption}
\end{table}

\begin{table}[htbp]
\centering
\small
\begin{tabular}{@{}lr@{}}
\toprule
Metric & Observed result \\
\midrule
Frozen workload units & 50 \\
Evaluator rows / total calls & 40 / 2,000 \\
Realized / declared routed configurations & 18 / 21 \\
Valid structured responses & 1,983 / 2,000 (99.15\%) \\
Construction-resolvable majority matches & 37 / 37 \\
Audit reference & $K=47$ \\
Fixed-catalogue familywise certificate & 45 / 50 (90.0\%) \\
Declared-weight catalogue lower coverage & 0.8775 (97.5\% confidence) \\
Frozen-census units stable by $K=7$ & 47 / 50 \\
Outer diagnostic, exact / Hoeffding & 0.6912 / 0.6135 \\
Construction-underdetermined units passing mass-controlled pointwise rule & 9 / 13 \\
\bottomrule
\end{tabular}
\caption{Detailed hash-pinned LLM pilot summary.  Run
\texttt{pilot-21f-final-20260803-01}; immutable input and derived-artifact
hashes are recorded in the provenance manifest.}
\label{tab:real-llm-pilot}
\end{table}

Table~\ref{tab:real-llm-pilot} records the complete numerical summary;
Figure~\ref{fig:real-llm-pilot} shows the diagnostic curves in the main text.

\paragraph{Sensitivity and interpretation.}
The mass-controlled intervals for \texttt{mkt-03}, \texttt{ins-03}, and
\texttt{air-03} crossed $\tau=0.5$.  Equal weighting of the 18 realized
configurations changed no majority decision; leave-one-configuration
recalculations changed only \texttt{air-03}.  Treating every non-success as
acceptance likewise changed only \texttt{air-03}; that policy and a
successes-only calculation changed the pointwise certification status of
\texttt{bug-03}.  These are descriptive sensitivity checks, not certificates
for alternative evaluator populations or missingness mechanisms.

Of the 13 units constructed to be underdetermined, nine passed the
mass-controlled pointwise rule at $K=47$ and ten did so at $K\geq95$.
Construction status is not
independent semantic adjudication, so this does not establish an empirical
relation between true semantic underdetermination and reproducibility.
Operational resolvability concerns agreement with a declared population
decision and does not, by itself, identify semantic truth.

\section{Synthetic Design Details}\label{supp:synthetic-role}

Before calling real models, the complete experiment can be simulated from declared latent acceptance probabilities.  Such a simulation can verify the implementation of quotas, exact tails, interval inversion, simultaneous lower bounds, and data-dependent selection; it can also estimate the power and cost of candidate values of $(A,M,\mathcal K,\delta,\beta,\eta_E,\eta_G,\xi_E)$.  Row-level latent variables can induce strong dependence across columns while keeping rows i.i.d., directly stress-testing the shared-evaluator design allowed by Theorems~\ref{thm:operational-certificate} and~\ref{thm:mass-operational-certificate}.

A reproducible calibration study uses latent acceptance means
\[
(0.30,0.40,0.46,0.49,0.51,0.54,0.60,0.70)
\]
with generator weights
\[
(0.15,0.20,0.05,0.10,0.10,0.05,0.20,0.15).
\]
It fixes $\tau=0.5$, $\delta=0.05$, $\beta=0.40$, $\eta_E=\eta_G=0.025$, $\xi_E=0.05$, and
\[
\mathcal K=\{21,41,61,81,101,151,201,301\}.
\]
In the dependent regime, each evaluator row uses a common uniform variable across all columns with probability $\rho=0.9$ and independent uniforms otherwise.  Thus every column count retains its exact $\operatorname{Binomial}(M,p_a)$ marginal while the columns can be strongly dependent.  Table~\ref{tab:operational-monte-carlo} and Figure~\ref{fig:operational-monte-carlo} report $2{,}000$ repetitions for each design and dependence regime.

The exact coverage is $0.30$ for $K\leq61$ on the candidate grid and $0.70$ for $K\geq81$.  Nevertheless, the baseline design usually cannot certify the $0.60$ target: interval uncertainty near the pointwise certification boundary and the finite-sample lower bound reduce its power.  Increasing both sampling dimensions makes the same target detectable.  The experiment therefore illustrates why failure to certify is inconclusive and why $(A,M)$ should be chosen by power analysis before an expensive evaluation campaign.

Simulation cannot identify the generator distribution of $T\mapsto\mu_X(T)$ for real models, validate the declared LLM population weights, detect provider or temporal drift, or measure external semantic truth.  Those are empirical properties, not consequences of the probability model.  Synthetic studies can therefore replace real LLM calls for theorem calibration and power analysis, but not for a claim that an ADS is useful on a real task population.

\paragraph{External validation.}
If independent ground-truth labels are available, one may additionally measure false acceptance and false rejection relative to them.  Those semantic metrics are separate from $e_{M,K}$, which measures coordination with a population decision.  Neither high clarity nor high panel--census agreement proves external correctness.

\section{One Seed Supplies Countably Many Private Draws}\label{supp:seed}

The evaluator definition uses one seed while allowing multiple component queries and private tool observations.  The following standard construction makes that convention explicit.

\begin{lemma}[Digit Splitting]\label{lem:digit-splitting}
There is a Borel map
\[
s:[0,1]\longrightarrow[0,1]^{\Naturals},
\qquad
s(\omega)=(\omega^{(1)},\omega^{(2)},\ldots),
\]
such that, for $\omega\sim\operatorname{Uniform}[0,1]$, the coordinates $\omega^{(j)}$ are i.i.d.\ uniform.
\end{lemma}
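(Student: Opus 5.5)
The map will be built by interleaving binary digits.  For $\omega\in[0,1]$, let $d_k(\omega)\in\{0,1\}$, $k\geq1$, be its binary digits, using the non-terminating expansion convention: a dyadic rational takes its terminating expansion, and $\omega=1$ takes all ones.  Fix a bijection $\phi:\Naturals\times\Naturals\to\Naturals$, for example the Cantor pairing.  Define
\[
\omega^{(j)}=\sum_{k\geq1}d_{\phi(j,k)}(\omega)\,2^{-k},
\]
and set $s(\omega)=(\omega^{(1)},\omega^{(2)},\ldots)$.

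The first step is to check that each digit map $d_k$ is Borel.  This holds because $d_k$ is a finite sum of indicators of dyadic intervals.  Each $\omega^{(j)}$ is then a pointwise limit of Borel partial sums, so it is Borel.  Since the product sigma-algebra on $[0,1]^{\Naturals}$ is generated by the coordinate maps, $s$ is Borel.

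The second step is probabilistic.  Under Lebesgue measure, the digits $d_1,d_2,\ldots$ are i.i.d.\ Bernoulli$(1/2)$.  This follows because each cylinder event $\{d_1=a_1,\ldots,d_n=a_n\}$ is a dyadic interval of length $2^{-n}$, up to a null set of endpoints.  Since $\phi$ is a bijection, the families $\{d_{\phi(j,k)}:k\geq1\}$ for different $j$ partition the digit sequence into disjoint subsequences.  Each subsequence is therefore itself i.i.d.\ Bernoulli$(1/2)$, and by the grouping property of independence the random variables $\omega^{(j)}$ are mutually independent.

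It remains to show that each $\omega^{(j)}$ is uniform.  Its law is that of $\sum_k\epsilon_k2^{-k}$ with $\epsilon_k$ i.i.d.\ fair bits.  This law assigns mass $2^{-n}$ to every dyadic interval of level $n$, and the dyadic intervals form a $\pi$-system generating the Borel sets.  The law therefore agrees with Lebesgue measure on $[0,1]$.

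The only delicate point, and the one I expect to be the main obstacle, is the non-uniqueness of binary expansions.  I will handle it by noting that the set of dyadic rationals is countable and hence Lebesgue-null.  The fixed digit convention makes $s$ a genuine function everywhere, and the null exceptional set does not affect any distributional statement.  A value $\omega^{(j)}=1$ can arise when a subsequence of digits is all ones, but that event is also null, so the uniform law on $[0,1]$ is unaffected.
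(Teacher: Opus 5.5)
Your proposal is correct and follows essentially the same route as the paper: you interleave the binary digits of $\omega$ through a bijection $\Naturals^2\to\Naturals$, use independence of disjoint digit families together with uniformity of each reassembled coordinate, and obtain Borel measurability from pointwise limits of finite sums. One cosmetic point: on $[0,1)$ the rule you actually apply, terminating expansions for dyadic rationals, is the paper's ``not eventually all ones'' convention, so calling it the ``non-terminating'' convention is a misnomer, though nothing in your argument depends on that label.
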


\begin{proof}
Choose for each $\omega\in[0,1)$ the binary expansion that is not eventually all ones and define the dyadic-rational exceptions arbitrarily.  Under Lebesgue measure, the binary digits are i.i.d.\ Bernoulli$(1/2)$.  Partition their countable index set into countably many disjoint infinite subsets using a bijection $p:\Naturals^2\to\Naturals$.  For each $j$, use the digits with indices $p(j,1),p(j,2),\ldots$ as the binary expansion of $\omega^{(j)}$.  Disjoint digit families are independent, each coordinate is uniform, and every coordinate map is a pointwise limit of measurable finite sums~\cite{billingsley1995probability}.
\end{proof}

This lemma concerns private randomness.  It does not transform a common random internet state into independent evidence; a common state must appear as a shared coordinate in the probability space.

\section{Exact Computation and Reproducibility}\label{supp:computation}

For each resolution column, an implementation needs only the tuple
\[
(M,C_M,K,\tau)
\]
to compute the panel--census error.  With $q_K=\lceil\tau K\rceil$, use a numerically stable hypergeometric survival function or cumulative distribution function rather than evaluating large binomial coefficients directly:
\[
e_{M,K}=
\begin{cases}
F_{\mathrm{HG}}(q_K-1;M,C_M,K),&D_M=1,\\
1-F_{\mathrm{HG}}(q_K-1;M,C_M,K),&D_M=0.
\end{cases}
\]
If $\tau$ is specified as a finite decimal or rational, compute $q_K$ with exact rational arithmetic; a floating-point product that lies just above an integer can otherwise increase the inclusive quota by one.

For the ideal-population certificate in Theorem~\ref{thm:operational-certificate}, a two-sided Clopper--Pearson interval with miscoverage $\alpha$ can be computed from beta quantiles as
\[
L_{M,\alpha}(c)
=
\begin{cases}
0,&c=0,\\
F^{-1}_{\operatorname{Beta}(c,M-c+1)}(\alpha/2),&c>0,
\end{cases}
\]
and
\[
U_{M,\alpha}(c)
=
\begin{cases}
1,&c=M,\\
F^{-1}_{\operatorname{Beta}(c+1,M-c)}(1-\alpha/2),&c<M.
\end{cases}
\]
For the familywise theorem, set $\alpha=\eta_E/A$ and denote the resulting
interval by $I_a^{\mathrm{cert}}$.  For the mass-controlled theorem, set
$\alpha=\eta_E\xi_E$, denote the interval by $I_a^{\mathrm{mass}}$, and charge
$\xi_E$ in the final coverage bound.  Compute each family of interval endpoints
once and then, for every $K\in\mathcal K$, apply
$\operatorname{Cert}_{K,\delta}$.  The same intervals serve all candidate values
of $K$; no additional evaluator-level union bound over $\mathcal K$ is needed.
The generator layer uses $|\mathcal K|$ because the latent resolvability
indicators can change with $K$.

For $\star\in\{\mathrm{cert},\mathrm{mass}\}$, let
\[
S_K^\star
=
\sum_{a=1}^A\operatorname{Cert}_{K,\delta}(I_a^\star),
\qquad
\widehat R_{K,\delta}^\star=S_K^\star/A,
\]
and define the one-sided outer binomial limit
\[
L_K^{G,\star}=
\begin{cases}
0,&S_K^\star=0,\\
F^{-1}_{\operatorname{Beta}(S_K^\star,A-S_K^\star+1)}
  \!\left(\eta_G/|\mathcal K|\right),&S_K^\star>0.
\end{cases}
\]
An executable reference implementation should return the two distinct exact
bounds
\[
\underline R_{K,\delta}^{\mathrm{cert}}
=L_K^{G,\mathrm{cert}},
\qquad
\underline R_{K,\delta}^{\mathrm{mass}}
=\max\{0,L_K^{G,\mathrm{mass}}-\xi_E\},
\]
and the corresponding Hoeffding benchmarks
\[
\max\!\left\{0,
\widehat R_{K,\delta}^{\mathrm{cert}}
-\sqrt{\frac{\log(|\mathcal K|/\eta_G)}{2A}}
\right\},
\qquad
\max\!\left\{0,
\widehat R_{K,\delta}^{\mathrm{mass}}-\xi_E
-\sqrt{\frac{\log(|\mathcal K|/\eta_G)}{2A}}
\right\}.
\]
Returning the complete grid rather than only the first passing $K$ preserves
parity effects and makes a failed certificate diagnostically useful.

A reproducible generator--evaluator record should include at least:
\begin{itemize}[nosep]
    \item hashes of $X$, $T_a$, prompts, $G$, and tool policies;
    \item model providers, versions, decoding parameters, seeds, and timestamps;
    \item component vector $\mathbf V_{b,a}$ and global value $Y_{b,a}$;
    \item generator and evaluator weights;
    \item eligibility and self-evaluation policy;
    \item the precommitted evaluator ordering used for nested-census diagnostics;
    \item every alternative population weighting included in robustness analysis.
\end{itemize}

Exact probabilities should be used whenever the sampling design is uniform without replacement.  Monte Carlo is appropriate only for a different design whose law lacks a practical closed form, or for visualizing path functionals not used as formal certificates.

The reference script \texttt{operational\_monte\_carlo.py} performs a separate calibration and power study under a known discrete generator population.  It mixes shared-row and idiosyncratic uniform variables so that evaluator columns can be strongly dependent while every column retains its exact binomial marginal.  A simulation violation occurs if any reported lower bound exceeds the analytically known coverage at any candidate $K$.  Figure~\ref{fig:operational-monte-carlo} and the CSV summaries in \path{theory/part1/results/} are reproduced from the companion release by the Makefile target \texttt{theory-monte-carlo}.  This is a regression check on the implementation and a study-design tool, not a numerical proof of the theorem or evidence about real LLM clarity.

The Makefile target \texttt{theory-security-curves} reproduces
Figures~\ref{fig:attack-phase-diagram}
and~\ref{fig:adversarial-security-profiles} and their exact binomial margin floors.  The
deterministic source and CSV output are stored under \path{theory/part1/}.  The
panel ladder and complete numerical environment are pinned in the script and
\path{flake.lock}.  No Monte Carlo
approximation is used in these curves.

\subsection{Real-LLM Pilot Artifacts}\label{supp:llm-pilot-artifacts}

The empirical and numerical companion is archived in
\href{https://github.com/genlayerlabs/ads-reproducibility/releases/tag/v1.0.0}{release
\texttt{v1.0.0}} of the reproducibility repository, at full commit
\href{https://github.com/genlayerlabs/ads-reproducibility/commit/94967ec58e627a0c0b4d9768202723b9b63c325c}{\nolinkurl{94967ec58e627a0c0b4d9768202723b9b63c325c}}.
The release includes a license, citation metadata, a dependency lock, and a
pinned numerical environment.

The bundle under \path{results/paper/} corresponds to run
\path{pilot-21f-final-20260803-01}.  Its provenance record distinguishes the
original execution and analysis revisions from the later offline binary-vote
reanalysis, retains full identifiers, and records the manifest, dataset, prompt,
dependency-lock, raw-ledger, and derived-artifact hashes.

The release bundle contains:
\begin{itemize}[nosep]
    \item \path{global_verdicts.csv}, the sanitized $2{,}000$-cell binary ledger;
    \item \path{evaluator_rows.csv}, the realized 40-row configuration design;
    \item \path{per_item.csv}, exact counts, intervals, clarity, entropy, and certification flags;
    \item \path{finite_catalogue_coverage.csv},
    \path{finite_census_per_item.csv}, and
    \path{finite_census_coverage.csv}, the fixed-reference results;
    \item \path{operational_coverage.csv}, the exact and Hoeffding
    outer-sampling diagnostics;
    \item \path{population_sensitivity.csv},
    \path{failure_policy_sensitivity.csv}, and
    \path{finite_census_security.csv}, the declared descriptive sensitivities;
    \item \path{security_profiles.csv}, the plug-in asymptotic attack diagnostic;
    \item \path{provider_summary.csv}, serving-model, error, and latency aggregates; and
    \item \path{summary.json} and \path{provenance.json}, the claim scope and hash manifest.
\end{itemize}
Raw provider text and routing traces are not published in Git; their immutable ledger hashes remain in the provenance record.  The sanitized binary ledger suffices to recompute the vote-based results in Table~\ref{tab:real-llm-pilot} and Figure~\ref{fig:real-llm-pilot}, but an external reader cannot independently reparse the original text or authenticate the router-reported serving identities from this release.  Those are explicit auditability limitations of the diagnostic pilot.  No authorization material is present in the bundle.

The 15 malformed outputs were split evenly across three routes:
\texttt{claude-opus-4-8}, \texttt{deepseek-v4-flash}, and
\texttt{deepseek-v4-pro}.  One \texttt{glm-4.6} call and one
\texttt{kimi-k2.6} call exhausted transport retries.  All 17 cells followed the
frozen conservative rejection rule.  Mean recorded latency was $8.53$ seconds.

The 40 declared i.i.d.\ evaluator rows realized 18 of the 21 router configurations.  The absent routes were \texttt{gpt-5.6-luna}, \texttt{claude-sonnet-4-6}, and \texttt{muse-spark-1.1}; this is a possible outcome under categorical sampling with replacement, not a routing substitution.  A separate diagnostic load gate exercised all 21 routes before the primary run, but its calls are not included in the $40\times50$ matrix.

The companion bundle retains \path{security_profiles.csv} as a reproducible plug-in sensitivity diagnostic.  We do not promote that diagnostic to a paper figure: it substitutes 40-row point estimates into a $K=1537$ attack law, carries neither evaluator-estimation uncertainty nor a workload-population interpretation, and is therefore too easy to mistake for a deployed-network security claim.

\subsection{Code and Data Availability}\label{supp:availability}

The manuscript's \LaTeX{} source is distributed with the arXiv submission.
The deterministic simulation code, numerical tests, figure generators,
sanitized empirical ledger, realized evaluator design, analysis code,
dependency locks, and hash manifest are distributed in the exact companion
release identified above.  The arXiv source bundle contains every table and
figure required to compile the paper and does not depend on Git submodules,
private credentials, or live model endpoints.  Provider calls themselves
require access to the named router and cannot be replayed from the archived
source alone; all theorem checks and vote-based analyses run offline.

\section{Logical Structure}\label{supp:dependency}

Table~\ref{tab:logical-structure} records the dependency of the main objects.  The finite-census and superpopulation branches share an encoded decision problem but retain distinct probability laws until a finite-to-ideal approximation is supplied.

\begin{table}[htbp]
\centering
\small
\renewcommand{\arraystretch}{1.12}
\begin{tabularx}{\textwidth}{@{}l>{\raggedright\arraybackslash}X>{\raggedright\arraybackslash}X@{}}
\toprule
Layer & Object & Output or required assumption \\
\midrule
Decision model
& $(X,T)$, evaluator episode $z$, component vector $\mathbf V$, fixed rule $G$
& Measurable global verdict $Y(z;X,T)$ \\
Finite census
& Frozen vector $\mathbf y_M$ and uniform random ordering
& Exact hypergeometric endpoints and Brownian-bridge path limit \\
Superpopulation
& I.i.d.\ episode law $\nu_X$
& Exact binomial endpoints and Brownian-motion path limit \\
Population transfer
& Admissible finite designs and, for numerical bounds, an approximation envelope
& Comparison of census-relative and population-relative errors \\
Generator family
& Resolution law $\pi$ and generator--evaluator matrix
& Familywise or mass-controlled lower confidence bound \\
Robustness
& Declared contamination or fixed-share attack model
& Pointwise and workload-level sensitivity profiles \\
\bottomrule
\end{tabularx}
\caption{Logical layers of the ADS analysis.  No finite-census result is transferred to an ideal population without an explicit convergence model.}
\label{tab:logical-structure}
\end{table}

\end{document}